\let\rel\mathbf
\let\minion\mathscr
\let\complex\mathsf
\let\union\bigcup
\newcommand{\set}{\text{\normalfont \bfseries set}}
\newcommand{\cat}{\text{\normalfont \bfseries cat}}
\newcommand{\htop}{\text{\normalfont \bfseries htop}}
\newcommand{\nat}{\text{\normalfont \bfseries nat}}
\newcommand{\rels}{\text{\normalfont \bfseries rel}}
\newcommand{\ints}{\mathbb Z}
\let\Z\ints \let\R\reals \let\C\coms
\newcommand{\cF}{\mathcal{F}}
\providecommand{\X}{{\rel X}}
\providecommand{\A}{{\rel A}}
\providecommand{\B}{{\rel B}}
\providecommand{\LO}{\rel{LO}}
\providecommand{\ott}{{\rel R_3}}
\newcommand{\onein}[1]{\LO_2}
\DeclareMathOperator{\ar}{ar}
\DeclareMathOperator{\CSP}{CSP}
\DeclareMathOperator{\PCSP}{PCSP}
\DeclareMathOperator{\Pol}{pol}
\DeclareMathOperator{\conv}{conv}
\let\pol\Pol
\DeclareMathOperator{\Hom}{Hom}
\DeclareMathOperator{\mhom}{mhom}
\DeclareMathOperator{\hpoloperator}{hpol}
\newcommand{\hpol}[2][\relax]{\ifx #1\relax
  \hpoloperator(#2) \else \hpoloperator^{(#1)}(#2) \fi}
\DeclareMathOperator{\Ann}{Ann}
\DeclareMathOperator{\coker}{coker}
\DeclareMathOperator{\sk}{sk}
\newcommand{\homrel}[1]{\Hom(\ott, \rel #1)}
\newcommand{\homLO}[1]{\Hom(\ott, \LO_{#1})}
\newcommand{\affine}{\minion Z}
\newcommand{\card}[1]{\# #1}
\let\epsilon\varepsilon
\title{Hardness of linearly ordered 4-colouring of 3-colourable 3-uniform hypergraphs}
\author{Marek Filakovský}
        {Charles University, Prague, Czech Republic}
        {marek.filakovsky@mff.cuni.cz}
        {https://orcid.org/0000-0001-5978-2623}
        {This research was supported by Charles University (project PRIMUS/21/SCI/014) and the Austrian Science Fund (FWF project P31312-N35).}
\author{Tamio-Vesa Nakajima}
        {University of Oxford, Oxford, UK}
        {tamio-vesa.nakajima@cs.ox.ac.uk}
        {https://orcid.org/0000-0003-3684-9412}
        {This research was funded by UKRI EP/X024431/1 and by a Clarendon Fund Scholarship.}
\author{Jakub Opršal}
        {University of Birmingham, Birmingham, UK}
        {j.oprsal@bham.ac.uk}
        {https://orcid.org/0000-0003-1245-3456}
        {This project has received funding from the European Union's Horizon
         2020 research and innovation programme under the Marie Skłodowska-Curie
         Grant Agreement No 101034413.}
\author{Gianluca Tasinato}
        {ISTA, Klosterneuburg, Austria}
        {gianluca.tasinato@ist.ac.at}
        {}
        {}
\author{Uli Wagner}
        {ISTA, Klosterneuburg, Austria}
        {uli@ist.ac.at}
        {https://orcid.org/0000-0002-1494-0568}
        {This research was supported by the Austrian Science Fund (FWF project P31312-N35).}
\authorrunning{M. Filakovský, T. Nakajima, J. Opršal, G. Tasinato, and U. Wagner}
\keywords{constraint satisfaction problem, hypergraph colouring, promise problem, topological methods}
\begin{document}

\maketitle

\begin{abstract}
  A linearly ordered (LO) $k$-colouring of a hypergraph is a colouring of its vertices with colours $1, \dots, k$ such that each edge contains a unique maximal colour. Deciding whether an input hypergraph admits LO $k$-colouring with a fixed number of colours is NP-complete (and in the special case of graphs, LO colouring coincides with the usual graph colouring).

  Here, we investigate the complexity of approximating the `linearly ordered chromatic number' of a hypergraph. We prove that the following promise problem is \NP-complete: Given a 3-uniform hypergraph, distinguish between the case that it is LO $3$-colourable, and the case that it is not even LO $4$-colourable. We prove this result by a combination of algebraic, topological, and combinatorial methods, building on and extending a topological approach for studying approximate graph colouring introduced by Krokhin, Opršal, Wrochna, and Živný (2023).
\end{abstract}

\section{Introduction}

Deciding whether a given finite graph is $3$-colourable (or, more generally, $k$-colourable, for a fixed $k\geq 3$) was one of the first problems shown to be \NP-complete by Karp~\cite{Karp1972}. Since then, the complexity of \emph{approximating} the chromatic number of a graph has been studied extensively. The best-known polynomial-time algorithm approximates the chromatic number of an $n$-vertex graph within a factor of $O(n \frac{(\log \log n)^2}{(\log n)^3})$ (Halldórsson~\cite{Halldorson}); conversely, it is known  that the chromatic number cannot be approximated in polynomial time within a factor of $n^{1-\varepsilon}$, for any fixed $\varepsilon>0$, unless $\mathsf{P}=\NP$ (Zuckerman \cite{Zuckerman}). However, this hardness result only applies to graphs whose chromatic number grows with the number of vertices, and the case of graphs with \emph{bounded} chromatic number is much less well understood.

Given an input graph $G$ that is promised to be $3$-colourable, what is the complexity of finding a colouring of $G$ with some larger number $k>3$ of colours? Khanna, Linial, and Safra~\cite{KLS00} showed that this problem is \NP-hard for $k=4$, and it is generally believed that the problem is \NP-hard for any constant $k$. However, surprisingly little is known, and the only improvement and best result to date, hardness for $k=5$, was obtained only relatively recently by Bulín, Krokhin, and Opršal \cite{BKO19}. On the other hand, the best polynomial-time algorithm, due to Kawarabayashi and Thorup \cite{KT17}, uses a number of colours (slightly less less than $n^{1/5}$) that depends on the number $n$ of vertices of the input graph.

More generally, it is a long-standing conjecture that finding a $k$-colouring of a $c$-colourable graph is \NP-hard for all constants $k \geq c\geq 3$, but the complexity of this \emph{approximate graph colouring} problem remains wide open. The results from \cite{BKO19} generalise to give hardness for $k=2c-1$ and all $c\geq 3$. For $c\geq 6$, this was improved by Wrochna and Živný \cite{WZ20}, who showed that it is hard to colour $c$-colourable graphs with $k=\binom{c}{\lfloor c/2\rfloor}$ colours. We remark that conditional hardness (assuming different variants of Khot's \emph{Unique Games Conjecture}) for approximate graph coloring for all $k \geq c \geq 3$  were obtained by Dinur, Mossel, and Regev \cite{DMR09}, Guruswami and Sandeep \cite{GS20}, and Braverman, Khot, Lifshitz, and Mulzer \cite{BKLM22}.

Given the slow progress on approximate graph colouring, we believe there is substantial value in developing and extending the available methods for studying this problem and related questions, and we hope that the present paper contributes to this effort. As our main result (Theorem~\ref{thm:main} below), we establish \NP-hardness of a relevant hypergraph colouring problem that falls into a general scope of \emph{promise constraint satisfaction problems}; in the process, we considerably extend a topological approach and toolkit for studying approximate colouring that was introduced by Krokhin, Opršal, Wrochna, and Živný \cite{KO19, WZ20, KOWZ23}.

Graph colouring is a special case of the \emph{constraint satisfaction problem} (\emph{CSP}), which has several different, but equivalent formulations. For us, the most relevant formulation is in terms of homomorphisms between relational structures. The starting point is the observation that finding a $k$-colouring of a graph $G$ is the same as finding a \emph{graph homomorphism} (an edge-preserving map) $G \to K_k$ where $K_k$ is the complete graph with $k$ vertices. The general formulation of the constraint satisfaction problem is then as follows (see Section~\ref{sec:PCSPs} below for more details): Fix a relational structure $\rel A$ (e.g., a graph, or a uniform hypergraph), which parametrises the problem. $\CSP(\rel A)$ is then the problem of deciding whether a given structure $\rel X$ allows a homomorphism $\rel X \to \rel A$. One of the celebrated results in the complexity theory of CSPs is the Dichotomy Theorem of Bulatov \cite{Bul17} and Zhuk \cite{Zhu20}, which asserts that for every finite relational structure $\rel A$, $\CSP(\rel A)$ is either \NP-complete, or solvable in polynomial time.

The framework of CSPs can be extended to \emph{promise constraint satisfaction problems} (\emph{PCSP}s), which include approximate graph colouring. PCSPs were first introduced by Austrin, Guruswami, and Håstad \cite{AGH17}, and the general theory of these problems was further developed by Brakensiek and Guruswami \cite{BG21}, and by Barto, Bulín, Krokhin, and Opršal \cite{BBKO21}. Formally, a PCSP is parametrised by two relational structures $\rel A$ and $\rel B$ such that there exists a homomorphism $\rel A \to \rel B$. Given an input structure $\rel X$, the goal is then to distinguish between the case that there is a homomorphism $\rel X \to \rel A$, and the case that there does not even exist a homomorphism $\rel X \to \rel B$ (these cases are distinct but not necessarily complementary, and no output is required in case neither holds); we denote this decision problem by $\PCSP(\rel A, \rel B)$. For example,  $\PCSP(K_3, K_k)$ is the problem of distinguishing, given an input graph $G$, between the case that $G$ is $3$-colourable and the case that $G$ is not $k$-colourable. This is the \emph{decision version} of the approximate graph colouring problem whose \emph{search version} we introduced above. We remark that the decision problem reduces to the search version, hence hardness of the former implies hardness of the latter.

PCSPs encapsulate a wide variety of problems, including versions of hypergraph colouring studied by Dinur, Regev, and Smyth \cite{DRS05} and Brakensiek and Guruswami~\cite{BG16}. A variant of hypergraph colouring that is closely connected to approximate graph colouring and generalises (monotone\footnote{In the present paper, we will only consider the monotone version of 1-in-3SAT, i.e., the case where clauses contain no negated variable, and we will often omit the adjective ``monotone'' in what follows.})\emph{1-in-3SAT} is \emph{linearly ordered (LO) hypergraph colouring}.
A linearly ordered $k$-colouring of a hypergraph $H$ is an assignment of the colours $[k] = \{1, \ldots, k\}$ to the vertices of $H$ such that, for every hyperedge, the maximal colour assigned to elements of that hyperedge occurs exactly once. Note that for graphs,  linearly ordered colouring is the same as graph colouring. Moreover, LO $2$-colouring of $3$-uniform hypergraphs corresponds to (monotone) 1-in-3SAT (by viewing the edges of the hypergraph as clauses). In the present paper, we focus on $3$-uniform hypergraphs; whether such a graph has an LO $k$-colouring can be expressed as $\CSP(\LO_k)$ for a specific relational structure $\LO_k$ with one ternary relation (see Section~\ref{sec:PCSPs}); in particular, 1-in-3SAT corresponds to $\CSP(\LO_2)$.

The promise version of LO hypergraph colouring was introduced by~Barto, Battistelli, and Berg \cite{BBB21}, who studied the \emph{promise 1-in-3SAT} problem. More precisely, let $\B$ be a fixed ternary structure such that there is a homomorphism $\LO_2 \to \B$. Then $\PCSP(\LO_2, \B)$ is the following decision problem: Given an instance $\rel X$ of 1-in-3SAT, distinguish between the case that $\X$ is satisfiable, and the case that there is no homomorphism $\rel X \to \B$. For structures $\B$ with three elements, Barto et al.~\cite{BBB21} obtained an almost complete dichotomy; the only remaining unresolved case is $\B = \LO_3$, i.e., the complexity of $\PCSP(\LO_2, \LO_3)$. They conjectured that this problem is \NP-hard, and more generally that $\PCSP(\LO_c, \LO_k)$ is \NP-hard for all $k \geq c \geq 2$ \cite[Conjecture 27]{BBB21}. Subsequently, the following conjecture emerged and circulated as folklore (first formally stated by Nakajima and Živný~\cite{NZ23b}): $\PCSP(\LO_2, \B)$ is either solved by the \emph{affine integer programming relaxation}, or \NP-hard (see Ciardo, Kozik, Krokhin, Nakajima, and Živný~\cite{CKK+23} for recent progress in this direction).

Promise LO hypergraph colouring was further studied by Nakajima and Živný \cite{NZ23}, who found close connections between promise LO hypergraph colouring and approximate graph colouring. In particular, they provide a polynomial time algorithm for LO-colouring 2-colourable 3-uniform hypergraphs with a superconstant number of colours, by adapting methods used for similar algorithms for approximate graph colouring, e.g., \cite{KT17}. In the other direction, \NP-hardness of $\PCSP(\LO_k, \LO_c)$ for $4 \leq k \leq c$ follows relatively easily from \NP-hardness of the approximate graph colouring $\PCSP(K_{k-1}, K_{c-1})$, as was observed by Nakajima and Živný and by Austrin (personal communications).%
\footnote{To see why, observe that $(\LO_k, \LO_c)$ promise primitive-positive defines $(K_{k-1}, K_{c-1})$; in particular, we can define $x \neq y$ by $\exists z \cdot R(z, z, x) \land R(z, z, y) \land R(x, y, z)$. We then see that if $R$ is interpreted in $\LO_k$, then the required $z$ exists if and only if $x \neq y$, as required.}

Our main result is the following, %
which cannot be obtained using these arguments.

\begin{theorem} \label{thm:main}
  $\PCSP(\LO_3, \LO_4)$ is \NP-complete.
\end{theorem}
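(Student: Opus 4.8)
The plan is to follow the algebraic-topological template of Krokhin--Opršal--Wrochna--Živný. First I would reduce the problem to an assertion about polymorphisms: by the theory of \cite{BBKO21}, $\PCSP(\LO_3, \LO_4)$ is \NP-hard provided the associated minion $\Pol(\LO_3, \LO_4)$ admits no non-trivial ``loop condition'' identities --- concretely, it suffices to show that this minion cannot satisfy a sequence of height-1 identities witnessing tractability, which (via the reduction from a fixed \NP-hard CSP such as $\CSP(\LO_2)$, or equivalently Gap Label Cover / a layered PCP) boils down to the following combinatorial-topological statement: there is no way to consistently assign, to large symmetric powers of $\LO_3$, polymorphisms into $\LO_4$ that respect a certain system of constraints. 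The standard way to make this precise is to pass to the topological realisation: one associates to the relevant structures (or to their polymorphism minions) simplicial or cell complexes, shows that a hypothetical family of polymorphisms would induce continuous maps / a homotopy between spaces whose topology forbids it (e.g.\ via a $\mathbb Z_2$- or $\mathbb Z_3$-equivariant argument and a Borsuk--Ulam-type obstruction, or a computation of a suitable homotopy or homology group), and derives a contradiction.

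Concretely, the key steps in order would be:
\begin{enumerate}
  \item Set up the PCSP-to-polymorphism dictionary: recall that hardness follows if $\Pol(\LO_3,\LO_4)$ is ``far from'' having bounded width / from admitting cyclic or Olšák-type terms; identify the precise minion condition (a family of minor identities coming from a gadget reduction from a known \NP-hard problem) that must fail.
  \item Analyse the structure of an arbitrary polymorphism $f\colon (\LO_3)^n \to \LO_4$. Use the ``unique maximum'' semantics of $\LO$ relations to extract combinatorial constraints: on each coordinate-subcube determined by an edge of the input, the max colour behaves rigidly, which forces $f$ to be controlled by a small amount of data (a partition of coordinates, or an ``essentially unary'' behaviour on relevant subcubes). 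This is where the combinatorial work lives.
  \item Encode these constraints topologically: build a space (or diagram of spaces) $B_n$ out of $(\LO_3)^n$ and $B$ out of $\LO_4$, so that a polymorphism induces a continuous map $B_n \to B$ compatible with the symmetric-group action on coordinates; show the relevant $B$ has trivial/controlled higher homotopy in the degree that matters while $B_n$ does not, ruling out the map for large $n$.
  \item Combine with the algebraic reduction of Step 1 to conclude \NP-hardness; \NP-membership is immediate since an LO $4$-colouring is a polynomial-size certificate checkable in polynomial time, giving \NP-completeness.
\end{enumerate}

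The main obstacle I expect is Step 2--3: unlike graph colouring, where polymorphisms into $K_k$ are constrained by a clean chromatic/topological invariant, the $\LO$ relation's ``unique maximum'' condition is asymmetric and its polymorphisms are genuinely more flexible, so extracting a clean topological invariant --- and in particular identifying the right space and the right homotopy/homology obstruction of the right dimension that separates $3$-colourable from not-$4$-LO-colourable --- will require new ideas beyond a direct port of the $K_k$ argument. In particular one must rule out the affine integer programming relaxation and other ``wrong'' algorithms solving the problem, which amounts to checking that $\Pol(\LO_3,\LO_4)$ does not contain the relevant infinite families of symmetric operations; handling the interaction between the $3$-ary structure of the constraints and the arity growth of polymorphisms is the delicate point, and I would expect the proof to proceed by first establishing a ``rigidity'' or ``reduction to a bounded quotient'' lemma for such polymorphisms and only then invoking a topological non-existence result on that bounded quotient.
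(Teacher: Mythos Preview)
Your high-level framing is right --- hardness via a minion homomorphism from $\Pol(\LO_3,\LO_4)$ to projections, using a topological argument in the style of \cite{KOWZ23} --- but the way you describe the topological step is a genuine misconception, and the decomposition you propose would not work as stated.

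The paper does \emph{not} prove hardness by showing that some continuous map fails to exist. Polymorphisms do exist in every arity, so there is no Borsuk--Ulam-style non-existence to invoke. Instead, the topology is used to \emph{classify} the induced maps: one builds a minion homomorphism $\Pol(\LO_3,\LO_4)\to \hpol{S^1,P^2}$ where $P^2$ is an Eilenberg--MacLane space $K(G,2)$ with a free $\Z_3$-action, and then computes via equivariant (Bredon) obstruction theory that $\hpol{S^1,P^2}\cong \minion Z_3$, the minion of affine maps over $\Z_3$. This step alone is \emph{not} enough for \NP-hardness (it only rules out Sherali--Adams-type algorithms), precisely because $\minion Z_3$ is a tractable minion. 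So your Step~3, as written --- ``$B$ has trivial/controlled higher homotopy while $B_n$ does not, ruling out the map'' --- is not what happens and would not suffice.

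The missing second half is purely combinatorial and has no topological analogue in your outline: one analyses the \emph{binary} polymorphisms $\LO_3^2\to\LO_4$ directly, proves a ``trash colour'' lemma showing each is reconfigurable (in the homomorphism-complex sense) to an essentially unary map, and uses that reconfigurable maps land on the same element under $\chi$. This forces the image of $\chi$ to miss the non-projection $2x+2y\in\minion Z_3^{(2)}$, and a short minion argument then shows the image is exactly $\minion P$. Only then does the hardness criterion of \cite{BBKO21} apply. Your proposal conflates this two-stage structure (topology $\Rightarrow$ affine, then combinatorics $\Rightarrow$ projections) into a single obstruction argument, and in particular never isolates the binary case, which is where the real work sits.
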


Apart from the intrinsic interest of LO hypergraph colouring, we believe that the main contribution of this paper is on a technical level, by extending the topological approach of \cite{KOWZ23} and bringing to bear more advanced methods from algebraic topology, in particular \emph{equivariant obstruction theory}. To our knowledge, this paper is the first that uses these methods in the PCSP context; we view this as a ``proof of concept'' and believe these tools will be useful to make
further progress on approximate graph colouring and related problems.

The proof of Theorem~\ref{thm:main} has two main parts. For a natural number $n$, let $(\LO_3)^n$ be the $n$-fold power of the relational structure $\LO_3$ (see Section~\ref{sec:Polymorphisms}). In the first part of the proof, we use topological methods to show (Lemma~\ref{lem:minion-homomorphism} below) that with every homomorphism $f\colon (\LO_3)^n \to  \LO_4$, we can associate an \emph{affine map} $\chi(f)\colon \Z_3^n\to \Z_3$ (i.e., a map of the form $(x_1,\dots, x_n) \mapsto \sum_{i=1}^n \alpha_i x_i$, for some $\alpha_i\in \Z_3$ and $\sum_{i=1}^n \alpha_i \equiv 1 \pmod 3$); moreover, the assignment $f\mapsto \chi(f)$ preserves natural \emph{minor relations} that arise from maps $\pi\colon [n]\to [m]$, i.e., $\chi$ is a \emph{minion homomorphism} (see Section~\ref{sec:Polymorphisms} for the precise definitions).

In the second part of the proof, we show by combinatorial arguments that the maps $\chi(f)\colon \Z_3^n\to \Z_3$ form a very restricted subclass of affine maps: They are projections $\Z_3^n\to \Z_3$, $(x_1,\dots,x_n)\mapsto x_i$ (Corollary~\ref{cor:only-projections}). Theorem~\ref{thm:main} then follows from a hardness criterion (Theorem~\ref{thm:hardness}) obtained as part of a general algebraic theory of PCSPs \cite{BBKO21}.

In a nutshell, topology enters in the first part of the proof as follows. First, with every homomorphism $f\colon (\LO_3)^n \to  \LO_4$ we associate  a continuous map $f_\ast \colon T^n \to P^2$, where $T^n$ is the $n$-dimensional torus (the $n$-fold power of the circle $S^1$) and $P^2$ is a suitable target space that will be described in more detail later; moreover, the cyclic group $\Z_3$ naturally acts on both $T^n$ and $P^2$, and the map $f_*$ preserves these symmetries (it is \emph{equivariant}). This first step uses \emph{homomorphism complexes} (a well-known construction in topological combinatorics that goes back to the work of Lovász \cite{Lovasz-Kneser}, see Section~\ref{sec:hom-complexes}). Second, we show that equivariant continuous maps $T^n \to P^2$,
when considered up to a natural equivalence relation of symmetry-preserving continuous deformation (\emph{equivariant homotopy}), are in bijection with affine maps $\Z_3^n\to \Z_3$. This second step uses \emph{equivariant obstruction theory}.\footnote{By contrast, the topological argument in \cite{KOWZ23} required understanding maps from $T^n$ to the circle $S^1$ that preserve natural $\Z_2$-symmetries on both spaces, again up to equivariant homotopy; such maps can be classified by more elementary arguments using the \emph{fundamental group} because $S^1$ is $1$-dimensional.}

We remark that, with some additional work, our method could be extended to prove \NP-hardness of $\PCSP(\LO_k, \LO_{2k-2})$ (but as remarked above, this already follows from known hardness results for approximate graph colouring).

\section{Preliminaries}
  \label{sec:notation}

We use the notation $[n]$ for the $n$-element set $\{1, \dots, n\}$, and identify tuples $a \in A^n$ with functions $a\colon [n] \to A$, and we use the notation $a_i$ for the $i$th entry of a tuple. We denote by $1_X$ the identity function on a set $X$.

\subsection{Promise CSPs}
  \label{sec:PCSPs}

We start by recalling some fundamental notions from the theory of promise constraint satisfaction problems, following the presentation of \cite{BBKO21} and \cite{KO22}.

A \emph{relational structure} is a tuple $\A = (A; R_1^\A, \ldots, R_k^\A)$, where $A$ is a set, and $R_i^\A \subseteq A^{\ar(R_i)}$ is a relation of arity $\ar(R_i)$. The \emph{signature} of $\A$ is the tuple $(\ar(R_1), \ldots, \ar(R_k))$.
For two relational structures $\A = (A; R_1^\A, \ldots, R_k^\A)$ and $\B = (B; R_1^\B, \ldots, R_k^\B)$ with the same signature, a \emph{homomorphism} from $\A$ to $\B$, denoted $h \colon \A \to \B$, is a function $h \colon A \to B$ that preserves all relations, i.e., such that $h(a) \in R_i^\B$ for each $i \in \{1, \dots, k\}$ and $a\in R_i^\A$ where $h(a)$ denotes the componentwise application of $h$ on the elements of $a$. To express the existence of such a homomorphism, we will also use the notation $\A \to \B$.
The set of all homomorphisms from $\A$ to $\B$ is denoted by $\hom(\A, \B)$.

Our focus is on structures with a single ternary relation $R$, i.e., pairs $(A; R^\A)$ with $R^\A \subseteq A^3$. Moreover, most structures in this paper have a \emph{symmetric} relation, i.e., the relation $R^\A$ is invariant under permuting coordinates. Such structures can be also viewed as $3$-uniform hypergraphs, keeping in mind that edges of the form $(a, a, b)$ are allowed.

\begin{definition}[Promise CSP]
  Fix two relational structures such that $\A \to \B$. The \emph{promise CSP} with template $\A, \B$, denoted by $\PCSP(\A, \B)$, is a computational problem that has two versions:
  \begin{itemize}
    \item In the \emph{search} version of the problem, we are given a relational structure $\X$ with the same signature as $\A$ and $\B$, we are promised that $\X \to \A$, and we are tasked with finding a homomorphism $h\colon \X \to \B$.
    \item In the \emph{decision} version of the problem, we are given a relational structure $\X$, and we must answer \emph{Yes} if $\X \to \A$, and \emph{No} if $\X \not \to \B$. (These cases are mutually exclusive since $\A \to \B$ and homomorphisms compose.)
  \end{itemize}
\end{definition}

The decision version reduces to the search version; thus for proving the hardness of both versions of problems, it is sufficient to prove the hardness of the decision version of the problem, and in order to prove tractability of both versions, it is enough to provide an efficient algorithm for the search version.

To complete this section, we define the relational structure $\LO_k$, $k \in \mathbb{N}$, that appears in our main result. The domain of $\LO_k$ is $\{1,\ldots,k\}$, and $\LO_k$ has one ternary relation, containing precisely those triples $(a, b, c)$ which contain a \emph{unique maximum}. In other words, $(a, b, c) \in R^{\LO_k}$ if and only if $a = b < c$, $a = c < b$, $b = c < a$, or all three elements $a, b, c$ are distinct. For example, $(1, 1, 2)$ or $(1, 2, 3)$ are triples of the relation of $\LO_3$, but not $(2, 2, 1)$.

\subsection{Polymorphisms and a hardness condition}
\label{sec:Polymorphisms}

Our proof of Theorem~\ref{thm:main} uses a hardness criterion (Theorem~\ref{thm:hardness} below) obtained as part of a general algebraic theory of PCSPs developed in \cite{BBKO21}, which we will briefly review.

\begin{definition}
Given a structure $\A$, we define its $n$-fold \emph{power} to be the structure $\A^n$ with the domain $A^n$ and
  \[
    R_i^{\A^n} = \{ (a_1, \dots, a_{\ar(R_i)}) \mid (a_1(i), \dots, a_{\ar(R_i)}(i)) \in R^\A \text{ for all $i\in [n]$} \}
  \]
  for each $i$.

  An $n$-ary \emph{polymorphism} from a~structure $\A$ to a~structure $\B$ is a~homomorphism from $\A^n$ to $\B$.
  We denote the set of all polymorphisms from $\A$ to $\B$ by $\Pol(\A, \B)$, and the set of all $n$-ary polymorphisms by $\Pol^{(n)}(\A, \B)$.\footnote{Untraditionally, we use lowercase notation for polymorphisms to highlight that we are not considering any topology on them contrary to the homomorphism complexes introduced below.}
\end{definition}

Concretely, in the special case of structures with a ternary relation, a polymorphism is a~mapping $f\colon A^n \to B$ such that, for all triples $(u_1,v_1,w_1)$, \dots, $(u_n,v_n,w_n) \in R^\A$, we have
\[
  ( f(u_1, \dots, u_n), f(v_1, \dots, v_n), f(w_1, \dots, w_n)) \in R^\B.
\]

Polymorphisms are enough to describe the complexity of a PCSP up to certain $\log$-space reductions. Loosely speaking, the more complex the polymorphisms are, the easier the problem is. We will use a hardness criterion that essentially states that the problem is hard if the polymorphisms have no interesting structure. To define what do we mean by interesting structure, we have to define the notions of \emph{minor}, \emph{minion} and \emph{minion homomorphism}. 

\begin{definition}
  Fix two sets $A$ and $B$, and let $f\colon A^n \to B$, $\pi\colon [n] \to [m]$ be functions. The \emph{$\pi$-minor} of $f$ is the function $g\colon A^m \to B$ defined by $g(x) = f(x \circ \pi)$, i.e., such that
  \[
    g(x_1, \dots, x_m) = f(x_{\pi(1)}, \dots, x_{\pi(n)})
  \]
  for all $x_1, \dots, x_m \in A$. We denote the $\pi$-minor of $f$ by $f^\pi$.
\end{definition}

Abstracting from the fact that the polymorphism of any template are closed under taking minors leads to the following notion of \emph{(abstract) minions}:\footnote{Abstract minions as defined here are a generalization of so-called \emph{function minions} defined in \cite{BBKO21}; the relation between a function minion and an abstract minion is analogous to the distinction between a permutation group and a group.} %

\begin{definition}
  An \emph{(abstract) minion} $\minion M$ is a collection of sets $\minion M^{(n)}$, where $n > 0$ is an integer, and mappings $\pi^\minion M \colon \minion M^{(n)} \to \minion M^{(m)}$ where $\pi\colon [n] \to [m]$ such that
  $\pi^\minion M \circ \sigma^\minion M = (\pi\circ \sigma)^\minion M$ for each $\pi$ and $\sigma$, and $(1_{[n]})^\minion M = 1_{\minion M^{(n)}}$.\footnote{In the language of category theory, a minion is defined as a functor from the category of finite sets to the category of sets, which satisfies  a non-triviality condition: $\minion M(X) = \emptyset$ if and only $X = \emptyset$. The definition given abuses the fact that the sets $[n]$ form a skeleton of the category of finite sets.}
\end{definition}

The polymorphisms of a template $\A, \B$ form a minion $\minion M$ defined by $\minion M^{(n)} = \Pol^{(n)}(\A, \B)$, and $\pi^\minion M(f) = f^\pi$. With a slight abuse of notation, we will use the symbol $\Pol(\A, \B)$ for this minion.
Conversely, if $\minion M$ is an abstract minion, we will call $\pi^\minion M(f)$ the $\pi$-minor of $f$, and write $f^\pi$ instead of $\pi^\minion M(f)$.

An important example is the minion of projections denoted by $\minion P$. Abstractly, it can be defined by $\minion P^{(n)} = [n]$ and $\pi^\minion P = \pi$. 
Equivalently, and perhaps more concretely, $\minion P$ can also be described as follows: Given a finite set $A$ with at least two elements and integers $i\leq n$, the $i$-th $n$-ary \emph{projection} on $A$ is the function $p_i\colon A^n \to A$ defined by $p_i(x_1, \dots, x_n) = x_i$. The set of coordinate projections is closed under minors as described above and forms a minion isomorphic to $\minion P$. In particular, $\minion P$ is also isomorphic to the polymorphism minion $\Pol(\onein3, \onein3)$.

\begin{definition}
  A \emph{minion homomorphism} from a minion $\minion M$ to a minion $\minion N$ is a collection of mappings $\xi_n\colon \minion M^{(n)} \to \minion N^{(n)}$ that preserve taking minors, i.e., such that for each $\pi\colon [n] \to [m]$, $\xi_m\circ \pi^\minion M = \pi^\minion N\circ \xi_n$.
  We denote such a homomorphism simply by $\xi\colon \minion M \to \minion N$, and write $\xi(f)$ instead of $\xi_n(f)$ when the index is clear from the context.%
  \footnote{A minion homomorphism is a natural transformation between the two functors.}
\end{definition}

Using the minion $\minion P$, we can now formulate the following hardness criterion (which follows from \cite[Theorem 3.1 and Example 2.17]{BBKO21} and can also be derived from \cite[Corollary 5.2]{BBKO21}; see also Section 5.1 of that paper for more details).

\begin{theorem}[{\cite[corollary of Theorem 3.1]{BBKO21}}] \label{thm:hardness}
  For every promise template $\A, \B$ such that there is a minion homomorphism $\xi\colon \Pol(\A, \B) \to \minion P$, $\PCSP(\A, \B)$ is \NP-complete.
\end{theorem}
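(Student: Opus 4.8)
The plan is to establish the two halves of \NP-completeness separately: membership in \NP is routine, and \NP-hardness is obtained by turning the hypothesised minion homomorphism into a polynomial-time reduction from a known \NP-hard problem, via the algebraic machinery of \cite{BBKO21}.

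For membership in \NP, fix an instance $\X$ satisfying the promise (so $\X \to \A$ or $\X \not\to \B$). A homomorphism $h\colon \X \to \B$ is a polynomial-time checkable certificate for the \emph{Yes} case: if $\X \to \A$, then composing with the fixed homomorphism $\A \to \B$ yields such an $h$; if $\X \not\to \B$, no such $h$ exists. Hence the decision version of $\PCSP(\A, \B)$ lies in \NP, and the search version inherits hardness from the decision version (as recalled in the excerpt).

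For \NP-hardness, I would reduce from (monotone) 1-in-3SAT, i.e.\ from $\CSP(\LO_2) = \PCSP(\LO_2, \LO_2)$, which is \NP-complete by Schaefer's theorem. The crucial point, recalled in the excerpt, is that $\Pol(\LO_2, \LO_2)$ is isomorphic to the projection minion $\minion P$; thus the given minion homomorphism $\xi\colon \Pol(\A, \B) \to \minion P$ is, after this identification, a minion homomorphism $\Pol(\A, \B) \to \Pol(\LO_2, \LO_2)$. I would then invoke \cite[Theorem~3.1]{BBKO21}: a minion homomorphism $\Pol(\A, \B) \to \Pol(\A', \B')$ yields a log-space (hence polynomial-time) reduction from $\PCSP(\A', \B')$ to $\PCSP(\A, \B)$. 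Applied with $(\A', \B') = (\LO_2, \LO_2)$, this produces a reduction from 1-in-3SAT to $\PCSP(\A, \B)$, completing the argument.

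Concretely, the reduction of \cite{BBKO21} works as follows. The minion homomorphism certifies that $(\LO_2, \LO_2)$ is homomorphically equivalent to a primitive-positive power of $(\A, \B)$: there is a constant $k$ and a primitive-positive formula $\phi$ over the signature of $\A$ so that interpreting $\phi$ on $k$-tuples inside $\A$, resp.\ inside $\B$, produces a structure homomorphically equivalent to $\LO_2$. Given a 1-in-3SAT instance $\X$, one builds $\Y$ by allotting a block of $k$ fresh elements to each variable of $\X$ and, for each clause, gluing a fixed copy of the ``canonical database'' of $\phi$ onto the three relevant blocks; unwinding the definitions gives $\X \to \LO_2 \Rightarrow \Y \to \A$ and $\Y \to \B \Rightarrow \X \to \LO_2$, and $\Y$ is log-space computable since it is a union of constant-size gadgets glued along these blocks. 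The main obstacle is the correctness of this gadget construction --- equivalently, the equivalence between minion homomorphisms to $\minion P$ and primitive-positive constructions underlying \cite[Theorem~3.1]{BBKO21}; granting that general theory, proving the present statement amounts to assembling these pieces together with the two elementary facts that $\PCSP(\A, \B) \in \NP$ and $\minion P \cong \Pol(\LO_2, \LO_2)$.
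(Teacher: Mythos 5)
Your proposal is correct and follows essentially the same route as the paper, which does not reprove this statement but imports it from \cite{BBKO21} exactly as you do: combine the \NP{} membership of the decision problem with \cite[Theorem~3.1]{BBKO21} applied to the identification of $\minion P$ with the polymorphism minion of an \NP-hard problem (the paper notes $\minion P \cong \Pol(\LO_2,\LO_2)$, i.e.\ 1-in-3SAT, matching your choice; BBKO21's Example~2.17 uses 3SAT, an immaterial difference). Your concluding sketch of the gadget reduction slightly misstates the characterisation (it is a homomorphic relaxation of a pp-power, not homomorphic equivalence to one), but since you explicitly defer that machinery to \cite{BBKO21}, this does not affect correctness.
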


\subsection{Homomorphism complexes}
\label{sec:hom-complexes}

We will need a number of notions from topological combinatorics, which we will review briefly now. We refer the reader to \cite{Mat03} for a detailed and accessible introduction (see also \cite{KOWZ23}, in particular for further background on homomorphism complexes).

A \emph{(finite, abstract) simplicial complex} $\complex K$ is finite system of sets that is downward closed under inclusion, i.e., $F \subseteq G\in \complex K$ implies $F\in \complex K$. The (finite) set $V = \union \complex K$ is called the set of \emph{vertices} of $\complex K$, and the sets in $K$ are called \emph{simplices} or \emph{faces} of the simplicial complex. A \emph{simplicial map} $f\colon \complex K \to \complex L$ between simplicial complexes is a map between the vertex sets that preserves simplices, i.e., $f(F)\in \complex L$ for all $F\in \complex K$.

An important way of constructing simplicial complexes is the following: Let $P$ be a partially ordered set (poset). A \emph{chain} in $P$ is a subset  $\{p_0, \dots, p_k\} \subseteq P$ such that $p_0 < p_2 < \dots < p_k$. The set of all chains in $P$ is a simplicial complex, called the \emph{order complex} of $P$. Note that an order-preserving map between posets naturally induces a simplicial map between the corresponding order complexes.

With every simplicial complex $\complex K$, one can associate a topological space $\lvert \complex K \rvert$, called the \emph{underlying space} or \emph{geometric realization} of $\complex K$, as follows: Identify the vertex set of $\complex K$ with a set of points \emph{in general position} in a sufficiently high-dimensional Euclidean space (here, general position means that the points in $F\cup G$ are affinely independent for all $F,G\in \complex K$). Then, in particular, the convex hull $\conv(F)$ is a geometric simplex for every $F\in \complex K$, and the geometric realization can be defined as the union $\lvert \complex K \rvert=\bigcup_{F\in \complex K} \conv(F)$ of these geometric simplices (see, e.g., \cite[Lemma 1.6.2]{Mat03}). We also say that the simplicial complex $K$ is a triangulation of the space $\lvert \complex K \rvert$. Every simplicial map $f\colon \complex K \to \complex L$ between abstract simplicial complexes induces a continuous map $\lvert f \rvert\colon \lvert \complex K \rvert \to \lvert \complex L \rvert$ between their geometric realizations.  In what follows, we will often blur the distinction between a simplicial complex and its geometric realization (especially when considering properties that do not depend on a particular triangulation).

Following \cite[Section~5.9]{Mat03}, we define homomorphism complexes as order complexes of the poset of \emph{multihomomorphisms} from one structure to another.\footnote{There are several alternative definitions of homomorphism complexes that lead to topologically equivalent spaces; e.g., the definition given here is the \emph{barycentric subdivision} of the version of the homomorphism complex defined in \cite[Definition 3.3]{KOWZ23}.}

\begin{definition}
  Suppose $\A, \B$ are relational structures. A \emph{multihomomorphism} from $\A$ to $\B$ is a function $f\colon A \to 2^B \setminus \{\emptyset\}$ such that, for each relational symbol $R$ and all tuples $(u_1, \dots, u_k) \in R^\A$, we have that
  \[
    f(u_1) \times \dots \times f(u_k) \subseteq R^\B.
  \]
We denote the set of all such multihomomorphisms by $\mhom(\A, \B)$.
\end{definition}

Multihomomorphisms are partially ordered by component-wise comparison, i.e., $f \leq g$ if $f(u) \subseteq g(u)$ for all $u\in A$. They can also be composed in a natural way, i.e., if $f\in \hom(\A, \B)$ and $g\in \mhom(\B, \rel C)$, then $(g\circ f)(a) = \bigcup_{b\in f(a)} g(b)$ is a multihomomorphism from $\A$ to $\rel C$.

\begin{definition}
  Let $\A$ and $\B$ be two structures of the same signature.
  The \emph{homomorphism complex} $\Hom(\A, \B)$ is the order complex of the poset of multihomomorphisms from $\A$ to $\B$, i.e., the vertices of this simplicial complex are multihomomorphisms from $\A$ to $\B$, and faces correspond to chains $f_1 < f_2 < \dots < f_k$ of such multihomomorphisms.
\end{definition}

By the discussion above, every homomorphism $f\colon \A \to \B$ induces a simplicial map $f_*\colon \Hom(\rel C, \A) \to \Hom(\rel C, \B)$ between homomorphism complexes, and hence a continuous map between the corresponding spaces (defined on vertices by mapping a multihomomorphism $m$ to the composition $f\circ m$, and then extended linearly).

In the case of graphs, the homomorphism complex $\Hom(K_2, G)$ is commonly used\footnote{Some papers use a different complex, the so-called \emph{box complex}, that leads to a homotopically equivalent (see below) space.} to study graph colourings, including in \cite{KOWZ23}. In the present paper, we work instead with the homomorphism complex $\Hom(\ott, \A)$ where $\ott$ is the structure with $3$ elements and all \emph{rainbow tuples}, i.e., tuples $(a, b, c)$ such that $a, b$, and $c$ are pairwise distinct; this structure is a hypergraph analogue of the graph $K_2$.

Note that a homomorphism $h\colon \ott \to \A$ can be identified with a triple $(h(1), h(2), h(3)) \in R^\A$; conversely, every triple $(a, b, c) \in R^\A$ also corresponds to a homomorphism as long as $R^\A$ is symmetric.
Similarly, a multihomomorphism $m$ can be identified with a triple $(m(1), m(2), m(3))$ of subsets of $A$ such that $m(1) \times m(2) \times m(3) \subseteq R^\A$.

\subsection{Group actions}

Throughout this paper, we will work with actions of the cyclic group $\Z_3$ on various objects (relational structures, simplicial complexes, topological spaces, groups, etc.) by structure-preserving maps (homomorphisms, simplicial maps, continuous maps, etc.). Thinking of $\Z_3$ as the multiplicative group with three elements $1,\omega, \omega^2$ (with the understanding that $\omega^i\cdot \omega^j =\omega^{i+j \pmod 3}$ and $\omega^0=1$), such an action is described by describing the action of the generator $\omega$. Thus, specifying the action of $\Z_3$ on a structure $\A$ amounts to specifying a homomorphism $\omega \colon \A\to \A$ such that $\omega^3=1_A$ (hence, $\omega$ is necessarily an isomorphism; note that we are abusing notation here, writing $\omega$ both for the generator of the group and the isomorphism by which it acts). Analogously, an action of $\Z_3$ on a simplicial complex (or a topological space) is described by specifying a simplicial  isomorphism (respectively, a homeomorphism) $\omega$ of order $3$ from the complex (or space) to itself. We will mostly work with actions that are \emph{free}, which in our special case of $\Z_3$-actions simply means that $\omega$ has no fixed points.

In particular, consider the action of $\Z_3$ that acts on $\ott$ by cyclically permuting elements. This action induces an action on multihomomorphisms $h\colon \ott \to \A$ by pre-composition, and this action extends naturally to an action of $\Z_3$ on $\Hom(\ott, \A)$.\footnote{This is analogous to the action of $\Z_2$ on graph homomorphism complexes $\Hom(K_2, G)$ used, for example, in \cite{KOWZ23}.}

It is not hard to show that the action on $\Hom(\ott, \A)$ is free as long as $\A$ has no constant tuples: If a multihomomorphism $m$ is a fixed point of a non-trivial element of $\mathbb Z_3$, then $m(1) = m(2) = m(3)$, and since $m(1) \neq \emptyset$ and $m(1) \times m(2) \times m(3) \subseteq R^\A$ then $R^\A$ contains a constant tuple $(a, a, a)$ for any $a\in m(1)$. Consequently, we may observe that the action does not fix any face of the complex.

For every homomorphism $f\colon \A \to \B$, the induced simplicial map $f_*\colon \Hom(\ott, \A) \to \Hom(\ott, \B)$ (defined on vertices by mapping multihomomorphism $m$ to the composition $f\circ m$) is \emph{equivariant}; as remarked above, we will often identify $f_*$ with the corresponding continuous map between the underlying spaces.

\subsection{Homotopy}
\label{sec:homotopy}

Two continuous maps $f,g\colon X \to Y$ between topological spaces are called \emph{homotopic}, denoted $f \sim g$, if there is a continuous map $h\colon X\times [0, 1] \to Y$ such that $h(x, 0) = f(x)$ and $h(x, 1) = g(x)$; the map $h$ is called a \emph{homotopy} from $f$ to $g$. Note that a homotopy can also be through of as a family of maps $h({\cdot},t)\colon X\to Y$ that varies continuously with $t \in [0,1]$. In what follows, $X$ and $Y$ will often be given as simplicial complexes, but we emphasize that we will generally not assume that the maps (or homotopies) between them are simplicial maps.
Two spaces $X$ and $Y$ are said to be \emph{homotopy equivalent} if there are continuous maps $f \colon X \to Y$ and $g\colon Y \to X$ such that $fg \sim 1_Y$ and $gf \sim 1_X$. 

These notions naturally generalize to the setting of spaces with group actions. If $\Z_3$ acts on two spaces $X$ and $Y$ then a continuous map $f\colon X \to Y$ is ($\Z_3$-)\emph{equivariant} if $f$ preserves the action, i.e., $f\circ \omega = \omega \circ f$. Two equivariant maps $f, g\colon X \to Y$ are said to be \emph{equivariantly homotopic}, denoted by $f\sim_{\ints_3} g$, if there exists an \emph{equivariant homotopy} between them, i.e., a homotopy $h\colon X \times [0,1] \to Y$ from $f$ to $g$ such that all maps $h(\cdot,t)\colon X\to Y$ are equivariant. We denote by $[X, Y]_{\mathbb Z_3}$ the set of all equivariant homotopy classes of (equivariant) maps from $X$ to $Y$, i.e.,
\[
  [X, Y]_{\mathbb Z_3} = \{ [f] \mid f\colon X \to Y \text{ is equivariant} \}
\]
where $[f]$ denotes the set of all equivariant maps $g$ such that $f \sim_{\mathbb Z_3} g$.

\section{Overview of the proof}
  \label{sec:overview}

We give a brief overview of the proof and the core techniques used. The result is proved by a combination of topological, combinatorial, and algebraic methods. In particular, the hardness is provided by analysing the polymorphisms of the template together with a hardness criterion from \cite{BBKO21}, see Theorem~\ref{thm:hardness}. In short, our goal is to provide a minion homomorphism from the polymorphism minion $\Pol(\LO_3, \LO_4)$ to the minion of projections $\minion P$ (which is incidentally isomorphic to the polymorphism minion of 3SAT). The core of our contribution is in providing deep-enough understanding of polymorphisms of our template so that the minion homomorphism follows.
The proof has two parts: topological and combinatorial.

\subsection{Topology}

The first part builds on the topological method introduced by Krokhin, Opršal, Wrochna, and Živný \cite{KO19,WZ20,KOWZ23}. The core idea is that, similarly to approximate graph colouring, there are unreasonably many polymorphisms between the linear-ordering hypergraphs, but most of them are very similar. This means that each polymorphism contains a lot of noise but relatively little information. We use topology to remove this noise, and uncover a signal. This is done by considering the polymorphisms `up to homotopy' --- essentially claiming that the homotopy class of a polymorphism carries the information and everything else is noise.

In order to formalise this idea, we consider for each hypergraph $\rel A$ the topological space $\homrel A$. Consequently, we get that each the homomorphism $f\colon \rel A \to \rel B$ induces a continuous map $f_*\colon \homrel A \to \homrel B$. Consequently, we identify two homomorphisms $f, g$ if $f_*$ and $g_*$ are homotopic to each other. The same is then extended to polymorphisms, although this requires to overcome a few subtle technical issues. The key observation for this extension is that the power of a homomorphism complex is homotopically equivalent to a homomorphism complex of the corresponding power (see, e.g., \cite{Koz08}).

This general idea requires a refinement to avoid trivial collapses, i.e., we have to avoid the case when $f_*$ is homotopic to a constant map which is always a continuous map between topological spaces. In our case, this is avoided by keeping track of the action of $\ints_3$ on the spaces $\homrel A$ and $\homrel B$ described in the preliminaries. Consequently, we consider maps only up to $\ints_3$-equivariant homotopy (note that the map $f_*$ induced by a homomorphism is always equivariant). Further in this exposition, we will silently assume that the action is always present, and all notions are equivariant --- the formal proof below is presented with the action in mind.

At this point we sketched how to construct a map that assigns to a polymorphism $f\colon \rel A^n \to \rel B$, an equivariant continuous map $f_*\colon \homrel{A^n} \to \homrel B$. This map does not necessarily preserve minors, nevertheless, it preserves minors \emph{up to homotopy}, i.e., for each $\pi \colon [n] \to [m]$, we have that $(f^\pi)_*$ and $(f_*)^\pi$ are equivariantly homotopic (this is since $\homrel A^n$ and $\homrel {A^n}$ are only homotopically equivalent and not homeomorphic). This allows us to define a minion homomorphism between the polymorphism minion $\Pol(\rel A, \rel B)$ and the minion of `homotopy classes of continuous maps' from powers of $\homrel A$ to $\homrel B$.

\begin{definition}
  Let $X$ and $Y$ be two topological spaces with an action of $G$. The \emph{minion of homotopy classes of equivariant polymorphisms} from $X$ to $Y$ is the minion $\hpol{X, Y}$ defined by
  \[
    \hpol[n]{X, Y} = [X^n, Y]_G
  \]
  and $[f]^\pi = [f^\pi]$.
\end{definition}

Note that minors are well-defined in this minion since if $f$ and $g$ are equivariantly homotopic, then so are $f^\pi$ and $g^\pi$ for all maps $\pi$.
Hence, we have a minion homomorphism
\[
  \zeta\colon \Pol(\rel A, \rel B) \to \hpol{\homrel A, \homrel B}.
\]
This part of the proof follows \cite{KOWZ23}, namely this minion homomorphism can be constructed by following the proof of \cite[Lemma 3.22]{KOWZ23} while substituting $\ott$ for $\rel K_2$, and $\ints_3$ for $\ints_2$. We give a general categorical proof in Appendix~\ref{app:categories}.

In order to describe the minion $\hpol{\homrel A, \homrel B}$, we need to classify all homotopy classes of maps between the corresponding topological spaces. The problem of classifying maps between two spaces up to homotopy is well-studied in algebraic topology, although it can be immensely difficult, e.g., maps between spheres of dimensions $m$ and $n$ (i.e., $[S^n, S^m]$) has been classified for many pairs $m, n$, but the classification for infinitely many remaining cases is still open, and it is considered to be a central open problem in algebraic topology.
We take advantage of the topological methods developed to solve these problems. Moreover, we may simplify the matters considerably by replacing the spaces $\homLO3$ and $\homLO4$ with spaces that allow equivariant maps to and from, resp., these spaces, and are better behaved from the topological perspective. This is due to the fact that if there are equivariant maps $X' \to X$ and $Y \to Y'$, then there is a minion homomorphism
\[
  \eta\colon \hpol{X, Y} \to \hpol{X', Y'}.
\]
This minion homomorphism is defined in the same way as a minion homomorphism from $\Pol(\rel A, \rel B)$ to $\Pol(\rel A', \rel B')$ if $\rel A', \rel B'$ is a \emph{homomorphic relaxation} of $\rel A, \rel B$, i.e., if $\rel A' \to \rel A$ and $\rel B \to \rel B'$ \cite[Lemma 4.8(1)]{BBKO21}.
To substantiate our choice of $X'$ and $Y'$, let us start with describing some topological properties of the spaces $\homLO3$ and $\homLO4$.

It may be observed that the space $\homLO3$ is homotopically equivalent to the simplicial complex depicted in Fig.~\ref{fig:homlo3} where the $\ints_3$ acts on each row cyclically. We choose $X'$ so that $X' \to \homLO3$, and its powers are topologically simple but non-trivial. A natural choice is $S^1$ which can be obtained from the simplicial complex by removing all horizontal edges. The action of $\ints_3$ on the circle can be then equivalently described as a rotation by $2\pi/3$. Consequently, the powers of this space are $n$-dimensional tori $T^n$ with component-wise (diagonal) action of $\ints_3$; by definition $T^n$ is the $n$-th power of $S^1$.

\begin{figure}
  \begin{subfigure}{.32\textwidth}
      \begin{tikzpicture}[baseline=(current bounding box.center), scale = 1.6,
        every node/.style = {circle}]
        \node (a0) at (-1,0) {0};
        \node (b0) at ( 0,0) {1};
        \node (c0) at ( 1,0) {2};
        \node (a1) at (-1,1) {0'};
        \node (b1) at ( 0,1) {1'};
        \node (c1) at ( 1,1) {2'};
        \draw (a0) -- (b1) -- (c0) -- (a1) -- (b0) -- (c1) -- (a0);
        \draw (a0) -- (a1);
        \draw (b0) -- (b1);
        \draw (c0) -- (c1);
      \end{tikzpicture}
    \caption{$\homLO3$}
    \label{fig:homlo3}
  \end{subfigure}
  \begin{subfigure}{.32\textwidth}
      \begin{tikzpicture}[baseline=(current bounding box.center), scale = 1.3,
        every node/.style = {circle}]
        \node (a0) at (-30:1) {0'};
        \node (b0) at ( 90:1) {1'};
        \node (c0) at (210:1) {2'};
        \node (a1) at (150:1) {0};
        \node (b1) at (270:1) {1};
        \node (c1) at ( 30:1) {2};
        \draw [very thick] (a0) -- (b1) -- (c0) -- (a1) -- (b0) -- (c1) -- (a0);
        \draw (a0) -- (a1);
        \draw (b0) -- (b1);
        \draw (c0) -- (c1);
      \end{tikzpicture}
    \caption{$S^1$ in $\homLO3$}
  \end{subfigure}
  \begin{subfigure}{.32\textwidth}
      \begin{tikzpicture}[baseline=(current bounding box.center), scale = 1.3,
          every node/.style = {circle}]
        \node (a0) at (-1,0) {0};
        \node (b0) at ( 0,0) {1};
        \node (c0) at ( 1,0) {2};
        \node (a1) at (-1,1) {0'};
        \node (b1) at ( 0,1) {1'};
        \node (c1) at ( 1,1) {2'};
        \node (a2) at (-1,2) {0''};
        \node (b2) at ( 0,2) {1''};
        \node (c2) at ( 1,2) {2''};
        \draw (a0) -- (b1) -- (c0) -- (a1) -- (b0) -- (c1) -- (a0);
        \draw (a1) -- (b2) -- (c1) -- (a2) -- (b1) -- (c2) -- (a1);
        \draw (a0) -- (a1) -- (a2);
        \draw (b0) -- (b1) -- (b2);
        \draw (c0) -- (c1) -- (c2);
      \end{tikzpicture}
    \caption{$\homLO4$}
    \label{fig:homlo4}
  \end{subfigure}
  \caption{Some representations of spaces $\homLO3$ and $\homLO4$ up to $\ints_3$ homotopy equivalence.}
\end{figure}

The space $\homLO4$ is a bit more complicated, in particular it is not homotopically equivalent to a 1-dimensional space.
Up to equivalence, it is the order complex of the partial order depicted in Fig.~\ref{fig:homlo4} where $\ints_3$ acts on rows cyclically.\footnote{In the proof we will not need such a precise description of the space, and we will only provide an equivariant map $\homLO4 \to L_4$ where $L_4$ is the space represented by the poset in Fig.~\ref{fig:homlo4}.}
It may be observed that $\homLO4$ is simply connected, i.e., that $\pi_1(\homLO4) = 0$, and that $\pi_2(\homLO4)$ is a non-trivial group. Moreover, the action of $\ints_3$ on $\homLO4$ induces a non-trivial action of $\ints_3$ on $\pi_2(\homLO4)$. The precise group and action is described in Appendix~\ref{app:structurehomcomplexes}, nevertheless it is irrelevant for us at this point.
The space that we use to replace $\homLO4$, and denote by $P^2$, shares these two properties with $\homLO4$, and moreover $\pi_n(P^2) = 0$ for all $n > 2$. Spaces which have only one non-trivial homotopy group (and are sufficiently `nice') are called \emph{Eilenberg-MacLane spaces}, and denoted by $K(G, n)$ where $\pi_n(K(G, n)) = G$ is the only non-trivial homotopy group. These spaces are well-defined up to homotopy equivalence. They are also closely connected with cohomology: One of the core statements of obstruction theory provides a bijection $[X, K(G, n)] \simeq H^n(X; G)$ for each Abelian group $G$ and $n \geq 1$. Consequently, it is much easier to classify maps into an Eilenberg-MacLane space up to homotopy.
The space $P^2$ is in fact an Eilenberg-MacLane space $K(G, 2)$ where $G$ is a suitable group with a free action of $\ints_3$, it is chosen in such a way that it allows an equivariant homomorphism $\homLO4 \to P^2$ while allowing for much easier classification of maps into it.

Next, we prove that the minion $\hpol{S^1, P^2}$ is isomorphic to the minion $\minion Z_3$ of affine maps modulo $3$, i.e., maps $\ints_3^n \to \ints_3$ of the form $(x_1, \dots, x_n) \mapsto \sum_{i=1}^n \alpha_i x_i$ where $\alpha_1, \dots, \alpha_n \in \mathbb Z_3$ are fixed constants such that $\sum_{i=1}^n \alpha_i = 1 \pmod 3$.
We construct this minion homomorphism by classifying equivariant continuous maps from $T^n$ with the diagonal action of $\ints_3$ to $P^2$. Since we are interested in equivariant maps and equivariant homotopy, we use a version of equivariant cohomology, called \emph{Bredon cohomology}, introduced in \cite{Bre67}. For our purpose, this equivariant cohomology is defined analogously to regular cohomology except the coefficients have a $\ints_3$-action, and this action together with the action of $\ints_3$ on the space is taken into account in all computations.
The space $P^2$ has the property that for every $\ints_3$-space $X$ such that there is an equivariant map $X \to P^2$, there is a bijection $[X, P^2]_{\ints_3} \simeq H^2_{\ints_3}(X; G)$ where $G$ is the group with a $\ints_3$ action described above. Again, this is a consequence of the equivariant obstruction theory. We then compute that
\[
  H^2_{\ints_3}(T^n; G) \simeq \ints_3^{n-1},
\]
and hence observe that there are $3^{n-1}$ elements in $\hpol[n]{S^1, P^2}$. This means that $\hpol{S^1, P^2}$ and $\minion Z_3$ have the same number of elements of each arity.
To obtain the required minion isomorphism, we provide a minion homomorphism
\[
  \minion Z_3 \to [S^1, P^2]_{\ints_3},
\]
and show that it is injective. More precisely, this homomorphism is given by assigning to each affine map $f\colon \ints_3^n \to \ints_3$ (or a tuple of its coefficients), a continuous map $\mu(f)\colon T^n \to P^2$, and showing that if $f \neq g$ then $\mu(f)$ and $\mu(g)$ are not equivariantly homotopic, and that $\mu(f^\pi)$ and $\mu(f)^\pi$ are equivariantly homotopic for all $\pi$. Since both minions have the same number of elements of each arity (and this number is finite), $\mu$ is bijective, and hence a minion isomorphism. All these computations are presented in detail in Appendix~\ref{app:topology}.

The above isomorphism together with the composition of $\zeta$, $\eta$, and $\xi$ provides the following lemma.

\begin{lemma} \label{lem:minion-homomorphism}
  There is a minion homomorphism $\chi\colon \Pol(\LO_3, \LO_4) \to \affine_3$ where $\affine_3$ denotes the minion of affine maps over $\ints_3$.
\end{lemma}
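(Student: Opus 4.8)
The plan is to obtain $\chi$ as a composite of three minion homomorphisms,
\[
  \Pol(\LO_3, \LO_4) \xrightarrow{\ \zeta\ } \hpol{\homLO3, \homLO4} \xrightarrow{\ \eta\ } \hpol{S^1, P^2} \xrightarrow{\ \mu^{-1}\ } \affine_3,
\]
where $P^2$ is a suitable Eilenberg--MacLane space $K(G,2)$ equipped with a free $\ints_3$-action (see Section~\ref{sec:overview}). The first map $\zeta$ is produced from homomorphism complexes: a polymorphism $f\colon (\LO_3)^n \to \LO_4$ induces an equivariant simplicial, hence continuous, map $f_*\colon \Hom(\ott, (\LO_3)^n) \to \homLO4$, and precomposing with an equivariant homotopy equivalence $(\homLO3)^n \simeq \Hom(\ott, (\LO_3)^n)$ --- a standard property of homomorphism complexes of powers, see \cite{Koz08} --- produces a class in $[(\homLO3)^n, \homLO4]_{\ints_3}$. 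One checks that this assignment respects minors up to equivariant homotopy, so it descends to a minion homomorphism; this follows the proof of \cite[Lemma 3.22]{KOWZ23} after substituting $\ott$ for $\rel K_2$ and $\ints_3$ for $\ints_2$, and a self-contained categorical treatment is given in Appendix~\ref{app:categories}.

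Next I would construct $\eta$ from equivariant maps $S^1 \to \homLO3$ and $\homLO4 \to P^2$. For the first, $\homLO3$ is equivariantly homotopy equivalent to the $1$-dimensional complex of Fig.~\ref{fig:homlo3}, and deleting its horizontal edges leaves an embedded circle on which $\ints_3$ acts by rotation through $2\pi/3$. For the second, one verifies that $\homLO4$ is simply connected with $\pi_2(\homLO4)$ a non-trivial abelian group carrying a non-trivial $\ints_3$-action (the precise description is deferred to Appendix~\ref{app:structurehomcomplexes}), and then equivariantly kills the higher homotopy of $\homLO4$ to map it into an Eilenberg--MacLane space $P^2 = K(G,2)$ with a free $\ints_3$-action. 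Given these maps, $\eta\colon \hpol{\homLO3, \homLO4} \to \hpol{S^1, P^2}$ is defined by pre- and post-composition exactly like the minion homomorphism attached to a homomorphic relaxation \cite[Lemma 4.8(1)]{BBKO21}, using that $(S^1)^n = T^n$ maps diagonally into $(\homLO3)^n$.

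The core of the argument is the last arrow, the identification of $\hpol{S^1, P^2}$ with $\affine_3$. Since $P^2 = K(G,2)$ and $T^n$ is a free $\ints_3$-space admitting an equivariant map to $P^2$, equivariant obstruction theory gives a bijection $[T^n, P^2]_{\ints_3} \cong H^2_{\ints_3}(T^n; G)$ with Bredon cohomology, and I would compute $H^2_{\ints_3}(T^n; G) \cong \ints_3^{n-1}$; in particular $\hpol[n]{S^1, P^2}$ and $\affine_3^{(n)}$ have the same finite cardinality for every $n$. To determine the minion structure I would instead construct $\mu\colon \affine_3 \to \hpol{S^1, P^2}$ directly, sending the affine map with coefficients $(\alpha_1, \dots, \alpha_n)$ to an explicit equivariant map $T^n \to P^2$, and then check that $\mu(f^\pi) \sim_{\ints_3} \mu(f)^\pi$ for all $\pi$ (so that $\mu$ is a minion homomorphism) and that $f \neq g$ implies $\mu(f) \not\sim_{\ints_3} \mu(g)$ (so that $\mu$ is injective). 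An injection between finite sets of equal size is a bijection, so $\mu$ is a minion isomorphism, and $\mu^{-1} \circ \eta \circ \zeta$ is the desired $\chi$.

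I expect the main obstacle to be this topological core: pinning down the model $P^2$ together with the $\ints_3$-action on $G = \pi_2(\homLO4)$, carrying out the Bredon cohomology computation $H^2_{\ints_3}(T^n; G) \cong \ints_3^{n-1}$, and checking that the explicit maps $\mu(f)$ realise all of these classes while respecting minors. In \cite{KOWZ23} the analogous step only concerned equivariant maps from $T^n$ into the $1$-dimensional sphere $S^1$ and could be handled via the fundamental group; here the fact that $P^2$ is $2$-dimensional forces genuine use of higher equivariant obstruction theory. The full computations are carried out in Appendix~\ref{app:topology}.
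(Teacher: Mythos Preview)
Your proposal is correct and matches the paper's own proof essentially step for step: the composite $\Pol(\LO_3,\LO_4)\xrightarrow{\zeta}\hpol{\homLO3,\homLO4}\xrightarrow{\eta}\hpol{S^1,P^2}\xrightarrow{\mu^{-1}}\affine_3$ is exactly how the paper assembles $\chi$, with the same supporting ingredients (the product-preservation of $\Hom(\ott,{-})$, the equivariant maps $S^1\to\homLO3$ and $\homLO4\to P^2$, and the Bredon computation $H^2_{\ints_3}(T^n;G)\cong\ints_3^{n-1}$ yielding the minion isomorphism). The only cosmetic difference is that the paper does not build $P^2$ by directly killing the higher homotopy of $\homLO4$; instead it factors through an intermediate poset complex $L_4$ and a small $2$-complex $Y_2$, taking $G=\pi_2(Y_2)$ rather than $\pi_2(\homLO4)$, but this does not affect the argument.
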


This minion homomorphism is not enough to prove NP-hardness. Although we could conclude from it, for example, that $\PCSP(\LO_3, \LO_4)$ is not solved by any level of Sherali-Adams hierarchy (this is a direct consequence of \cite[Theorems 3.3 and 5.2]{DO23}). To provide hardness, we need to further analyse the image of $\chi$ which is done using combinatorial arguments.

\subsection{Combinatorics}

In the second part, which is a combinatorial argument presented in Section~\ref{sec:combinatorics}, we show that the image of $\chi$ avoids all the affine maps except of projections. This is done by analysing binary polymorphisms from $\LO_3$ to $\LO_4$.

We use the notion of \emph{reconfiguration} of homomorphisms to achieve this. Loosely speaking, a homomorphism $f$ is reconfigurable to a homomorphism $g$ if there is a path of homomorphism starting with $f$ and ending with $g$ such that neighbouring homomorphisms differ in at most one value. (For graphs and hypergraphs without tuples with repeated entries this can be taken as a definition, but with repeated entries there are two sensible notions of reconfigurations that do not necessary align.) The connection between reconfigurability and topology was described by Wrochna \cite{Wro20}, and we use these ideas to connect reconfigurability with our minion homomorphism $\xi$.

We show that any binary polymorphism $f\colon \LO_3^2 \to \LO_4$ is reconfigurable to an essentially unary polymorphism, i.e., that there is an increasing function $h\colon \LO_3 \to \LO_4$ such that $f$ is reconfigurable to the map $(x, y) \mapsto h(x)$ or to the map $(x, y) \mapsto h(y)$.
Further, we show that if $f$ and $g$ are reconfigurable to each other, then $\chi(f) = \chi(g)$. Together with the above, this means the image of $\chi_2 \colon \hpol[2]{S^1, P^2} \to \minion Z_3^{(2)}$ omits an element. More precisely, we have the following lemma where $\minion P_3$ denotes the minion of projections on a three element set (which is a subminion of $\minion Z_3$).

\begin{lemma} \label{lem:not-all}
  For each binary polymorphism $f\in \Pol^{(2)}(\LO_3, \LO_4)$, $\chi(f) \in \minion P_3^{(2)}$.
\end{lemma}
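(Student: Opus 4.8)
The goal is to pin down, for a binary polymorphism $f$, the affine map $\chi(f)\colon \mathbb Z_3^2 \to \mathbb Z_3$ furnished by Lemma~\ref{lem:minion-homomorphism}. Since $\chi(f)$ has the form $(x,y)\mapsto \alpha x + \beta y$ with $\alpha + \beta \equiv 1 \pmod 3$, the only possibilities are the two projections $(\alpha,\beta)\in\{(1,0),(0,1)\}$ and the map $(\alpha,\beta)=(2,2)$, and it suffices to exclude the latter. The plan is to show that (a) $f$ is \emph{reconfigurable} to an essentially unary polymorphism $(x,y)\mapsto h(x)$ or $(x,y)\mapsto h(y)$, where $h\colon \LO_3 \to \LO_4$ is increasing; (b) reconfigurable polymorphisms have the same image under $\chi$; and (c) the essentially unary ones are sent into $\minion P_3^{(2)}$. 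Part (c) is immediate: the only element of $\affine_3^{(1)}$ is $\id_{\mathbb Z_3}$, so any unary polymorphism $h$ satisfies $\chi(h)=\id$, and taking the minor along $\pi\colon [1]\to[2]$ gives $\chi((x,y)\mapsto h(x)) = \id^\pi = p_1 \in \minion P_3^{(2)}$ (and symmetrically $p_2$).

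For part (b), I would first make precise the notion of an elementary reconfiguration step: $g'$ is obtained from a homomorphism $g$ by changing a single value, say at $a_0$, in such a way that the map $m$ with $m(a_0)=\{g(a_0),g'(a_0)\}$ and $m(a)=\{g(a)\}$ elsewhere is a multihomomorphism $(\LO_3)^2 \to \LO_4$; reconfigurability is the transitive closure. (As noted in Section~\ref{sec:overview}, in the presence of tuples with repeated entries there are several variants of reconfiguration; this is the one compatible with both (a) and (b).) Such an $m$ induces, by post-composition on multihomomorphisms out of $\ott$, an order-preserving poset map and hence a $\mathbb Z_3$-equivariant simplicial map $m_*\colon \Hom(\ott, (\LO_3)^2) \to \Hom(\ott, \LO_4)$ that dominates both $g_*$ and $g'_*$ vertexwise; by the standard poset-domination homotopy argument this yields $g_* \sim_{\mathbb Z_3} m_* \sim_{\mathbb Z_3} g'_*$, hence $\zeta(g)=\zeta(g')$, and therefore $\chi(g)=\chi(g')$ since $\chi$ factors through $\zeta$. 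This is the equivariant, hypergraph version of Wrochna's link between reconfiguration and homotopy~\cite{Wro20}.

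The main obstacle is part (a): every binary polymorphism $f\colon (\LO_3)^2 \to \LO_4$ is reconfigurable to an essentially unary one. Since $(\LO_3)^2$ is the $3\times 3$ grid and $\LO_4$ has only four elements, this is a finite but delicate analysis driven by the polymorphism identities. A useful first reduction: applying $f$ to suitable pairs of tuples of the form $(a,a,b)$ with $a<b$ (the only tuples of $R^{\LO_3}$ with two equal entries) forces the diagonal to be strictly increasing, $f(1,1)<f(2,2)<f(3,3)$, leaving just the four cases $(f(1,1),f(2,2),f(3,3))\in\{(1,2,3),(1,2,4),(1,3,4),(2,3,4)\}$. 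In each case the remaining identities (applied to mixtures of rainbow and repeated tuples) tightly constrain the six off-diagonal values, and one exhibits an explicit chain of elementary steps — each changing one value while keeping the corresponding union a multihomomorphism into $\LO_4$ — transporting $f$ to one of the two essentially unary maps. The difficulty is twofold: the multihomomorphism-into-$\LO_4$ condition is rather rigid, so one must verify that enough legal single-value moves are available, and the case analysis must be organized carefully; this is the content of Section~\ref{sec:combinatorics}. Combining (a)--(c) then gives $\chi(f) \in \minion P_3^{(2)}$, which proves the lemma.
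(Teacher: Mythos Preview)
Your three-part scheme (a) reconfigurability to an essentially unary polymorphism, (b) reconfiguration-invariance of $\chi$ via the multihomomorphism $m$ and poset domination, and (c) the minion argument pinning $\chi(h)$ to a projection, is exactly the paper's strategy, and your treatments of (b) and (c) match the paper's. One small point: the paper shows that for the target $\LO_4$, \emph{any} two homomorphisms differing in a single value automatically have the union $m$ a multihomomorphism, so the extra hypothesis you put into your definition of an elementary step is in fact free.

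The only genuine difference is in how (a) is organised. You propose to split on the value of the strictly increasing diagonal $(f(1,1),f(2,2),f(3,3))$ and then chase constraints in each of the four cases. The paper instead proves a structural ``trash colour'' lemma: every colour occurs within a single row or a single column of $f$, whence there exist an increasing $h\colon \LO_3\to\LO_4$, a coordinate $i$, and one exceptional colour $t$ with $f(x_1,x_2)\in\{h(x_i),t\}$. From that, reconfigurability follows by a short induction on the number of occurrences of $t$, replacing the topmost occurrence by $h(x_i)$. Your diagonal case split would also succeed, since the problem is finite, but the trash-colour route replaces four parallel case analyses by one structural lemma and a uniform induction; it also explains \emph{why} single-value moves are always available (the row/column confinement of colours), which is precisely the rigidity concern you flag.
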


This lemma is then enough to show that the image of $\chi$ omits all affine maps except projections.

\begin{corollary} \label{cor:only-projections}
  $\chi$ is a minion homomorphism $\Pol(\LO_3, \LO_4) \to \minion P_3$.
\end{corollary}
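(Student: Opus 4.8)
The plan is to deduce this from Lemmas~\ref{lem:minion-homomorphism} and~\ref{lem:not-all} by a short arity-reduction argument. Lemma~\ref{lem:minion-homomorphism} already gives a minion homomorphism $\chi\colon\Pol(\LO_3,\LO_4)\to\affine_3$, so it suffices to prove that its image is contained in the subminion $\minion P_3\subseteq\affine_3$: since the minor operations of a subminion are the restrictions of the ambient ones, a minion homomorphism whose image lies in $\minion P_3$ is automatically a minion homomorphism into $\minion P_3$. Thus I would fix $n\ge 1$ and $f\in\Pol^{(n)}(\LO_3,\LO_4)$, write $\chi(f)$ as the affine map $(x_1,\dots,x_n)\mapsto\sum_{i=1}^n\alpha_ix_i$ with $\alpha_i\in\ints_3$ and $\sum_{i=1}^n\alpha_i\equiv 1\pmod 3$, and show that $\chi(f)$ is a projection, i.e.\ that some $\alpha_i$ equals $1$ and all the others vanish.

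Suppose, for contradiction, that it is not. The key elementary observation is that then there is a subset $B\subseteq[n]$ with $\sum_{i\in B}\alpha_i\equiv 2\pmod 3$. Indeed, if some coefficient equals $2$, take $B$ to be the corresponding singleton; otherwise every nonzero coefficient equals $1$, so, writing $m$ for their number, $m\equiv\sum_i\alpha_i\equiv 1\pmod 3$, and since $\chi(f)$ is not a projection we have $m\neq 1$, hence $m\ge 4$, and any two-element subset of $\{i:\alpha_i=1\}$ serves as $B$. Now let $\pi\colon[n]\to[2]$ be the map sending $B$ to $1$ and $[n]\setminus B$ to $2$. Recalling that the $\pi$-minor of an affine map is again affine, with the coefficient of $x_j$ obtained by summing $\alpha_i$ over the fibre $\pi^{-1}(j)$, we compute that $\chi(f)^\pi$ is the affine map $(y_1,y_2)\mapsto\beta_1y_1+\beta_2y_2$ with $\beta_1=\sum_{i\in B}\alpha_i\equiv 2$ and $\beta_2=\sum_{i\notin B}\alpha_i\equiv 1-2\equiv 2$; in particular $\beta_1,\beta_2\neq 0$, so $\chi(f)^\pi$ is not a binary projection. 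On the other hand, $f^\pi\in\Pol^{(2)}(\LO_3,\LO_4)$ and, since $\chi$ is a minion homomorphism, $\chi(f^\pi)=\chi(f)^\pi$; thus $\chi(f^\pi)\notin\minion P_3^{(2)}$, contradicting Lemma~\ref{lem:not-all}. Hence $\chi(f)$ is a projection. (The cases $n\le 2$ are immediate anyway: for $n=1$ the only affine map is the unary projection, and for $n=2$ the statement is exactly Lemma~\ref{lem:not-all}.)

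There is no genuinely hard step here — all the topological and combinatorial content is already packaged in Lemmas~\ref{lem:minion-homomorphism} and~\ref{lem:not-all}. The only points requiring care are the elementary arithmetic in $\ints_3$ in the construction of $B$ (essentially: a multiset of elements of $\{1,2\}\subseteq\ints_3$ whose sum is $1$ and whose size is at least $2$ always admits a sub-multiset summing to $2$), the formula for the coefficients of a minor of an affine map, and the remark that $\minion P_3$ is a subminion of $\affine_3$, which is what lets us co-restrict $\chi$.
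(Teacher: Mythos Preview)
Your proof is correct and follows essentially the same approach as the paper: both arguments reduce to showing that any non-projection in $\affine_3$ has the binary map $(x,y)\mapsto 2x+2y$ as a minor, via the same two-case analysis (some $\alpha_i=2$, or all nonzero $\alpha_i$ equal $1$ and there are at least two of them), and then invoke Lemma~\ref{lem:not-all} on the corresponding binary minor of $f$.
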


\begin{proof}
  We show that if a subminion $\minion M\subseteq \affine_3$ contains any non-projection then it contains the map $g \colon (x, y) \mapsto 2x + 2y$. Let $f \in \minion M^{(n)}$ depends on at least 2 coordinates, and let $f(x_1, \dots, x_n) = \alpha_1 x_1 + \dots + \alpha_n x_n$.
  First assume that $\alpha_i = 2$ for some $i$. Then the binary minor given by $\pi \colon [n] \to [2]$ defined by $\pi(i) = 1$ and $\pi(j) = 2$ if $j\neq i$ is $g$ since its first coordinate is $2$ and the second is $1 - 2 = 2 \pmod 3$.
  Otherwise, we have that $\alpha_i \in \{ 0, 1 \}$ for all $i$. In particular, there are $i \neq j$ such that $\alpha_i = \alpha_k = 1$ since $f$ depends on at least 2 coordinates. Consequently, the minor defined by $\pi' \colon [n] \to [2]$ where $\pi'(i) = \pi'(j) = 1$ and $\pi'(k) = 2$ for $k \notin \{ i, j \}$ is again $g$ by a similar argument.

  Finally, the image of $\xi$ is a subminion of $\affine_3$, and since it omits $g$ and every subminion of $\affine_3$ contains $\minion P_3$, it is equal to $\minion P_3$ which yields the desired.
\end{proof}

As mentioned before, the above corollary combined with Theorem~\ref{thm:hardness} provides the main result of this paper, the $\NP$-completeness of $\PCSP(\LO_3, \LO_4)$ (Theorem~\ref{thm:main}).

\section{Combinatorics of reconfigurations}\label{sec:combinatorics}

The goal of this section is a careful combinatorial analysis of the binary polymorphisms. In particular, we will describe how the minion homomorphism $\xi\colon \Pol(\LO_3, \LO_4) \to \minion P$ acts on binary polymorphisms. This is the key to the argument that the image of $\xi$ is the projection minion and the whole of $\Pol(\ints_3)$.

We say that two polymorphisms $f, g \in \Pol^{(n)}(\LO_3, \LO_4)$ are \emph{reconfigurable} one to the other if a path between $f$ and $g$ exists within the homomorphism complex $\Hom(\LO_3^n, \LO_4)$. (Note that every polymorphism is a homomorphism $\LO_3^n \to \LO_4$, and hence a vertex of the homomorphism complex.)

We will use the following combinatorial criterion that ensures that two polymorphisms are reconfigurable to each other. The proof is subtly dependent on some properties of the structure $\LO_4$.

\begin{lemma}
  Let $\A$ be a symmetric relational structure. If $f, g\colon \A \to \LO_4$ are two homomorphisms such that $f$ and $g$ differ in exactly one value, i.e.,~there is $d \in A$ such that for all $a \in A \setminus \{ d \}$ we have $f(a) = g(a)$, then $f$ and $g$ are reconfigurable.
\end{lemma}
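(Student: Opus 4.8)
We are given $f,g\colon \A\to\LO_4$ agreeing everywhere except at a single vertex $d$, where $f(d)=p$ and $g(d)=q$ with $p\neq q$. We want to build a path in $\Hom(\A,\LO_4)$ from $f$ to $g$. The first step is to reduce to understanding which values a homomorphism is allowed to take at $d$ once the values on $A\setminus\{d\}$ are fixed. Write $c_1\le c_2\le\cdots$ for the sorted list of ``constraint pairs'': for every tuple $(d,a,b)\in R^\A$ (up to the symmetry of $R^\A$, so we may assume $d$ is in the first coordinate), the requirement is $(p, f(a), f(b))\in R^{\LO_4}$, i.e.\ that $p$ together with the already-determined pair $\{f(a),f(b)\}$ has a unique maximum. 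For a fixed unordered pair $\{u,w\}\subseteq[4]$ with $u\le w$, the set of $t\in[4]$ with $(t,u,w)\in R^{\LO_4}$ is: if $u<w$, then $\{t : t\neq u\} = [4]\setminus\{u\}$... wait, we need $(t,u,w)$ to have a unique max; if $u<w$ the max among $u,w$ is $w$ (unique), so any $t<w$ works, and $t=w$ fails (ties $w$), and $t>w$ works; so the forbidden value is exactly $w$. If $u=w$, then we need $t>u$ (so that $t$ is the unique max); so the allowed set is $\{t:t>u\}$. The upshot: the set $S_d$ of admissible values at $d$ is $[4]$ minus a collection of ``forbidden singletons'' $w$ (one for each $R$-edge through $d$ whose other two $f$-values are distinct) and intersected with up-sets $\{t>u\}$ (one for each $R$-edge through $d$ whose other two $f$-values coincide at $u$). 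Both $p\in S_d$ and $q\in S_d$.

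**Key structural observation.** I claim $S_d$ is always ``almost an interval'': concretely, $S_d$ is an up-set $\{t : t>m\}$ with at most one interior point deleted, OR more precisely $S_d$ always contains, for any two of its elements, a connecting path inside $S_d$ under single-step changes — and this is where the specific combinatorics of $\LO_4$ (only $4$ colours) is used. The intersection of up-sets is an up-set $\{t>m\}$ (with $m\ge 0$); deleting singletons $w$ from $\{t>m\}\cap[4]$ can only disconnect it in the sense of ``interval connectivity'' if we delete a middle element. With ground set $[4]$ and $m\in\{0,1,2,3\}$, the set $\{t>m\}$ has at most $4$ elements, and one checks by a short finite case analysis that $S_d$ is never of the form $\{a,c\}$ with $a<b<c$ and $b\notin S_d$ and $b$ also forbidden — the reason being that each deleted singleton $w$ is the larger element of some pair $\{u,w\}$ with $u<w$ and $u=f(a)$, $w=f(b)$ both in $[4]$, and the interaction with the up-set constraints is limited. (This is the ``subtle dependence on $\LO_4$'' flagged in the statement, and I expect this finite verification to be the main obstacle — or rather, the main place where one must be careful, since with $\LO_5$ the analogous claim fails.)

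**Assembling the path.** Once we know $p$ and $q$ both lie in $S_d$ and that $S_d$ is connected in the above sense, pick a sequence $p=t_0, t_1,\dots,t_r=q$ of elements of $S_d$ with consecutive ones ``adjacent''. But adjacency in $[4]$ is not enough: we need each intermediate map $f_j := f$ with $f_j(d):=t_j$ to be a homomorphism — and it is, precisely because $t_j\in S_d$ and $S_d$ was defined to be exactly the admissible set. Moreover consecutive homomorphisms $f_{j-1}, f_j$ differ only at $d$, so the multihomomorphism $m_j$ defined by $m_j(d)=\{t_{j-1},t_j\}$ and $m_j(a)=\{f(a)\}$ for $a\neq d$ is a common upper bound of (the homomorphisms, viewed as minimal multihomomorphisms) $f_{j-1}$ and $f_j$ in the poset of multihomomorphisms — one checks $m_j\in\mhom(\A,\LO_4)$ by verifying $m_j(u)\times m_j(v)\times m_j(w)\subseteq R^{\LO_4}$ for each $(u,v,w)\in R^\A$, which for tuples not involving $d$ is immediate and for tuples involving $d$ follows from $t_{j-1},t_j\in S_d$ plus the fact that $R^{\LO_4}$-membership for $(t,\text{fixed pair})$ depends only on $t$ lying in the admissible set (so a set of admissible $t$'s works componentwise). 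Hence $f_{j-1} < m_j > f_j$ is an edge-path of length $2$ in the order complex $\Hom(\A,\LO_4)$, and concatenating these for $j=1,\dots,r$ connects $f=f_0$ to $g=f_r$. Therefore $f$ and $g$ are reconfigurable, as desired.
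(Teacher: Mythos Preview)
Your proof has a genuine gap: the ``connectivity of $S_d$'' claim is false, and the detour through intermediate values is in any case unnecessary.

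\textbf{The connectivity claim fails.} Take $\A$ on vertices $\{d,a,b,a',b'\}$ with (symmetrized) tuples $(d,a,b)$ and $(d,a',b')$, and let $f(a)=f(a')=1$, $f(b)=2$, $f(b')=3$. Then the admissible set at $d$ is $S_d=\{t:(t,1,2),(t,1,3)\in R^{\LO_4}\}=\{1,4\}$, with both interior values $2,3$ forbidden. There is no single-step path in $S_d$ from $1$ to $4$, so your ``$S_d$ is an up-set with at most one interior point deleted / is step-connected'' assertion is simply wrong. (Your analysis of $S_d$ also silently omits tuples in which $d$ appears \emph{twice}, which contribute \emph{upper}-bound constraints $t<f(c)$, not up-sets.)

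\textbf{What the paper does, and why it suffices.} The paper dispenses with $S_d$ and intermediate values entirely: it shows directly that the single multifunction $m(a)=\{f(a),g(a)\}$ is a multihomomorphism, giving the length-two path $f<m>g$ in $\Hom(\A,\LO_4)$. The only nontrivial case in checking $m\in\mhom(\A,\LO_4)$ is a tuple $(d,d,c)\in R^\A$. Here $(f(d),f(d),f(c))\in R^{\LO_4}$ forces $f(d)<f(c)$, and likewise $g(d)<g(c)=f(c)$; hence every element of $\{f(d),g(d)\}^2\times\{f(c)\}$ has unique maximum $f(c)$ and lies in $R^{\LO_4}$. Tuples containing $d$ once reduce to the pair of already-valid triples $(f(d),f(a),f(b))$ and $(g(d),f(a),f(b))$; tuples not containing $d$ are trivial; constant tuples $(d,d,d)$ cannot occur since $\A\to\LO_4$.

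Note that your own final verification --- that $m_j$ with $m_j(d)=\{t_{j-1},t_j\}$ is a multihomomorphism whenever $t_{j-1},t_j\in S_d$ --- once patched to cover the $d$-appears-twice case as above, already works for \emph{any} two elements of $S_d$, in particular for $\{f(d),g(d)\}$ directly. So take $r=1$ and drop the connectivity claim altogether; you then recover exactly the paper's one-step argument.
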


\begin{proof}
  We first claim that under the above assumption, the multifunction $m\colon A \to 2^{[4]}$ given by $m(a) = \{ f(a), g(a) \}$ is a multihomomorphism.
  Assume that $(a, b, c) \in R^\A$. Observe that for any $x \in A \setminus \{ d \}$ we have $f(x) = g(x)$ and hence $m(x) = \{ f(x) \} = \{ g(x) \}$. We now have cases depending on how many times $d$ appears in $\{a, b, c\}$.
  \begin{description}
      \item[$\bm{d}$ does not appear.] In this case $m(a) \times m(b) \times m(c) = \{ (f(a), f(b), f(c)) \} \subseteq R^{\LO_4}$.
      \item[$\bm{d}$ appears once.] Suppose $d = a, d \neq b, d \neq c$; then $m(a) \times m(b) \times m(c) = \{f(a), g(a)\} \times \{ f(b) \} \times \{ f(c) \} = \{ (f(a), f(b), f(c)), (g(a), g(b), g(c)) \} \subseteq R^{\LO_4}$, as $f(b) = g(b), f(c) = g(c)$.
      \item[$\bm{d}$ appears twice.] Suppose $d = a = b, d \neq c$; then as $(f(a), f(b), f(c)) = (f(d), f(d), f(c)) \in R^{\LO_4}$ and likewise $(g(d), g(d), g(c)) \in R^{\LO_4}$, we have $f(d) < f(c)$ and $g(d) < g(c) = f(c)$. Consequently, $m(a) \times m(b) \times m(c) = {\{f(d), g(d)\}}^2 \times \{ f(c) \} \subseteq R^{\LO_4}$, since every tuple has a unique maximum, namely $f(c)$.
      \item[$\bm{d}$ appears thrice.] This case (i.e.,~$d = a = b = c$) is impossible, as $\A \to \LO_4$, and thus $\A$ has no constant tuples.
  \end{description}
  Thus $m$ is a multihomomorphism in all cases.

  We can now define a path $p\colon [0, 1] \to \Hom(\LO_3, \LO_4)$ by $p(0) = f$, $p(1/2) = m$, $p(1) = g$, and extending linearly.
\end{proof}

We note, without a proof, if $f$ and $g$ are reconfigurable, then there is a sequence $f = f_0, \dots, f_k = g$ such that $f_i$ and $f_{i+1}$ differ in exactly one point.
A polymorphism $f \in \Pol^{(2)}(\LO_3, \LO_4)$ has, as its domain, the set ${[3]}^2$, and thus it can naturally be represented as a matrix:
    \[
        \begin{matrix}
            f(1, 1) & f(1, 2) & f(1, 3) \\
            f(2, 1) & f(2, 2) & f(2, 3) \\
            f(3, 1) & f(3, 2) & f(3, 3) \\
        \end{matrix}.
    \]
When we speak of ``rows'' or ``columns'' of $f$ this is what is meant.

We show the following lemma from which we will be able to derive that each binary polymorphism is reconfigurable to an essentially unary one. (Recall that a function $f\colon A^n \to B$ is essentially unary if it depends on at most one input coordinate.) The lemma is an analogue of the \emph{Trash Colour Lemma} for polymorphisms from $K_d$ to $K_{2d-2}$.

\begin{lemma}\label{lem:trashcolour}
  For each $f\in \Pol^{(2)}(\LO_3, \LO_4)$ there exists an increasing function $h\in \Pol^{(1)}(\LO_3, \LO_4)$, a coordinate $i\in \{1,2\}$, and a colour $t\in [4]$ (called \emph{trash colour}) such that
  \[
    f(x_1, x_2) \in \{ h(x_i), t \}
  \]
  for all $x_1, x_2 \in [3]$.
\end{lemma}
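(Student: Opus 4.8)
The plan is to analyse a binary polymorphism $f\colon \LO_3^2 \to \LO_4$ entry by entry, thinking of $f$ as a $3\times 3$ matrix as above, and to exploit the structure of the relation $R^{\LO_4}$ on a small domain. First I would record the basic constraints that polymorphism-ness imposes. Since every triple of the form $(a,a,b)$ with $a\neq b$ or $(a,b,a)$ etc. lies in $R^{\LO_3}$, applying $f$ to suitable such triples componentwise forces relations among the entries in a single row or column; in particular, restricting $f$ to a fixed first coordinate $x_1 = c$ gives a unary polymorphism $g_c(x_2) = f(c, x_2)\colon \LO_3 \to \LO_4$, i.e.\ an increasing map (a homomorphism $\LO_3 \to \LO_4$ must be strictly monotone, since $(a,a,b)\in R^{\LO_3}$ with $a<b$ forces the image to be a tuple with unique max, hence $g(a) < g(b)$). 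So each row of the matrix is strictly increasing, and symmetrically each column is strictly increasing. This already severely restricts the $3\times 3$ matrix: each row and column is one of the $\binom{4}{3}=4$ increasing triples from $[4]$.

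Next I would bring in the ``two distinct coordinates'' constraints. Taking triples $(u_1,v_1,w_1),(u_2,v_2,w_2)\in R^{\LO_3}$ that are rainbow (all entries distinct) or near-rainbow, and applying $f$, forces that the three entries $f(u_1,u_2), f(v_1,v_2), f(w_1,w_2)$ form a tuple of $R^{\LO_4}$ — so they must have a unique maximum. Running over the handful of such ``generalized diagonals'' of the matrix, I expect to deduce that the set of values appearing in the matrix, outside of one distinguished value $t$, behaves like the image of a single increasing unary map applied to one coordinate. Concretely, the strategy is: identify the increasing unary polymorphism $h$ as (say) the first row or first column of $f$ — whichever is forced — let $t$ be the unique colour of $[4]$ not in the image of $h$, and then show by case analysis that every entry $f(x_1,x_2)$ equals either $h(x_i)$ for the appropriate fixed coordinate $i$, or the trash colour $t$. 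The monotonicity of rows and columns, together with the unique-maximum condition on the generalized diagonals, should pin down which coordinate $i$ works and eliminate all other possibilities.

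I expect the main obstacle to be the bookkeeping in the case analysis: although the domain is tiny ($9$ entries, each in $[4]$, with strong monotonicity), there are genuinely several qualitatively different configurations depending on whether $t$ lies ``below'', ``between'', or ``above'' the values of $h$, and one must check that in each surviving configuration the choice of coordinate $i$ is consistent across all rows and columns simultaneously. The subtle point — and the reason the lemma statement emphasizes that the proof depends on properties of $\LO_4$ specifically — is that $R^{\LO_4}$ is rich enough to permit nontrivial binary polymorphisms (that is why $\chi$ is not a priori trivial) but constrained enough that any such polymorphism still ``collapses'' to depend on one coordinate up to the insertion of a single extra colour. I would organize the argument by first fixing $i$ using the behaviour of $f$ on a rainbow diagonal, then propagating the constraint $f(x_1,x_2)\in\{h(x_i),t\}$ row by row (or column by column), using at each step that if two entries in the same monotone row/column were both different from $h(x_i)$, the generalized-diagonal unique-maximum condition would be violated. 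Once this local propagation is shown to close up, the lemma follows.
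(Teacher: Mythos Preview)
Your proposal contains a fundamental error at the outset. You claim that restricting $f$ to a fixed first coordinate $x_1 = c$ yields a unary polymorphism $g_c(x_2) = f(c,x_2)\colon \LO_3 \to \LO_4$, and hence that each row (and column) of the matrix is strictly increasing. This is false: for $g_c$ to be a homomorphism you would need $((c,u),(c,v),(c,w)) \in R^{\LO_3^2}$ whenever $(u,v,w)\in R^{\LO_3}$, which requires $(c,c,c)\in R^{\LO_3}$ --- but $\LO_3$ has no constant tuples. The simplest counterexample is the projection $f(x_1,x_2)=x_1$: its matrix has constant rows $(1,1,1),(2,2,2),(3,3,3)$, none of which are strictly increasing. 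Since the entire strategy (choosing $h$ as a row or column, letting $t$ be the missing colour, propagating row by row) rests on this monotonicity, the argument does not get off the ground.

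The paper's proof uses a genuinely different structural fact. The only monotonicity available is along strict diagonals: $f(a,c)<f(b,d)$ whenever $a<b$ \emph{and} $c<d$, coming from the tuple $((a,c),(a,c),(b,d))\in R^{\LO_3^2}$. The key lemma the paper establishes instead is that \emph{every colour appears in at most one row or at most one column} of the matrix (not both spread across rows and columns). This is proved by a substantial case analysis assuming some colour $c$ appears at positions $(x,y)$ and $(x',y')$ with $x\neq x'$, $y\neq y'$; diagonal monotonicity then forces $c\in\{1,2\}$, and each case is ruled out by chasing further constraints. Once each colour is confined to a single row or column, a pigeonhole argument shows three of the four colours share a type (say column), and these three colours define the increasing map $h$, with the fourth colour as $t$.
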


\begin{proof}
    Throughout we will implicitly use the fact that if $a < b$ and $c < d$ then $f(a, c) < f(b, d)$, as $((a, c), (a, c), (b, d)) \in R^{\LO_3^2}$.

    First, we claim that every colour $c\in [4]$ appears inside only one row or only one column of $f$, i.e.,~that either there is $a\in [3]$ such that $f(x, y) = c$ implies $x = a$, or there is $b \in [3]$ such that $f(x, y) = c$ implies $y = b$. For contradiction, assume that this is not the case, i.e.,~there are $x, y$ and $x', y' \in [3]$ such that $f(x, y) = f(x', y') = c$, $x \neq x'$, and $y\neq y'$. The claim is proved by case analysis as follows.
  First, observe that either $x < x'$ and $y > y'$, or $x > x'$ and $y < y'$, since otherwise $(x, y)$ and $(x', y')$ are comparable, and hence $f(x, y) \neq f(x', y')$. Since the two cases are symmetric, we may assume without loss of generality that $x < x'$ and $y > y'$.
  Furthermore, since $((x, y), (x', y'), (x, y')) \in R^{\LO_3^2}$,
  and $f(x, y) = f(x', y') = c$, we have $f(x, y') > c$. Similarly, as $x' > x, y > y'$ we have that $f(x', y) > f(x, y') > c$. This means that $c\in \{1, 2\}$. We consider each case separately.
  \begin{description}
    \item[$\bm{c = 1}$.] We claim that $x = y' = 1$ since if $x > 1$, then $f(1, y') < f(x, y) = 1$, and similarly if $y' > 1$.
        This implies that $f(1, 1) > 1$ since $((1, 1), (x, x'), (y, y')) = ((1, 1), (1, y), (x', 1)) \in R^{\LO_3^2}$ and $f(x, y) = f(x', y') = 1$. As $1 < f(1, 1) < f(2, 2) < f(3, 3) \leq 4$, we have that $f(1, 1) = 2$, $f(2, 2) = 3$, and $f(3, 3) = 4$. We now have three cases.
        \begin{description}
            \item[$\bm{y = 3}$.]
              We argue that $f(1, 2)$ has no possible value.
              First, the value $1$ is not possible since $((1, 2), (x, y), (x', y')) = ((1, 2), (1, 3), (x', 1)) \in R^{\LO_3^2}$, $f(x, y) = 1$, and $f(x', y') = 1$.
              $f(1, 2) = 2$ is not possible since $((1, 2), (1, 1), (x', y')) = ((1, 1), (1, 2), (x', 1)) \in R^{\LO_3^2}$, and $f(x', y') = 1, f(1, 1) = 2$.
              $f(1, 2) = 3$ is not possible since $((1, 2), (2, 2), (x, y)) = ((1, 2), (2, 2), (1, 3)) \in R^{\LO_3^2}$, and $f(x, y) = 1, f(2, 2) = 3$.
              Finally, $f(1, 2) < f(3, 3) = 4$, so $f(1, 2) \neq 4$.
            \item[$\bm{x' = 3}$.] Here the contradiction follows analogously to the previous case.
            \item[$\bm{x' = y = 2}$.] We consider the pair of values $f(1, 3)$ and $f(3, 1)$. First, we have $f(1, 3) > f(1, 2) = f(x, y)= 1$ and $f(3, 1) > f(2, 1) = f(x', y') = 1$. As $((1, 3), (1, 1), (x', y')) = ((1, 3), (1, 1), (2, 1)) \in R^{\LO_3^2}$ and $f(1, 1) = 2, f(x', y') = 2$ we have that $f(1, 3) \neq 2$; symmetrically $f(3, 1) \neq 2$. We also have $f(1, 3) \neq 3$ since $((1, 3), (x, y), (2, 2)) = ((1, 3), (1, 2), (2, 2)) \in R^{\LO_3^2}$ and $f(1, 2) = 1, f(2, 2) = 3$; symmetrically $f(3, 1) \neq 2$. Thus $f(1, 3) = f(3, 1) = 4$. However, then $(f(1, 2), f(1, 3), f(3, 1)) = (1, 4, 4) \not \in R^{\LO_4}$, which is not possible, as $((1, 2), (1, 3), (3, 1)) \in R^{\LO_3^2}$, which yields our contradiction.
        \end{description}
    \item[$\bm{c = 2}$.] As $f(x', y) > f(x, y') > c = 2$, we have that $f(x, y') = 3$ and $f(x', y) = 4$. Since $f(x, y') = 3$ then either $x > 1$ or $y' > 1$, otherwise $f(3, 3) > f(2, 2) > f(1, 1) = 3$ yields a contradiction. By symmetry it is enough to discuss the case $y' = 2$ and $y = 3$. Finally, we have $f(x, 1) < f(x', 2) = 2$, hence $f(x, 1) = 1$ which is in contradiction with
     \[
       (1, 2, 2) = (f(x, 1), f(x', 2), f(x, 3)) \in R^{\LO_4}.
     \]
  \end{description}
  Thus we get a contradiction in all cases, and hence each colour appears in only one row or only one column.

  We say that a colour $c\in [4]$ is of \emph{column} type if $f(x, y) = c$ implies $x = a_c$ for some fixed $a_c \in [3]$, and is of \emph{row} type if $f(x, y) = c$ implies $y = b_c$ for some $b_c \in [3]$. Note that a colour can be both row and column type, in which case we may choose either. We claim that there are at least three colours that share a type --- otherwise there are two colours of row type and two colours of column type which would leave an element of $\LO_3^2$ uncoloured.
  A similar observation also yields that there has to be three colours of the same type that cover all rows or all columns, i.e.,~such that the constants $a_c$ or $b_c$ (depending on the type) are pairwise distinct.
  Let us assume they are of the column type; the other case is symmetric. Further, we may assume that the forth colour is of the row type, since if two colours share a column, then one of the colours appears only once, and can be therefore considered to be of row type.

  We define $h(a)$ to be the colour $c$ of column type with $a_c = a$, then we have $f(x, y) \in \{h(x), t\}$ where $t$ is the colour of the row type. Finally, we argue that $h$ is increasing. This is since there are $y < y'$ with $y \neq b_t$ and $y' \neq b_t$, and consequently
  \[
    h(1) = f(1, y) < f(2, y') = h(2) = f(2, y) < f(3, y') = h(3).
  \]
  This concludes the proof of the lemma.
\end{proof}

\begin{lemma}\label{lem:reconfigurability}
    Every binary polymorphism $f \in \Pol^{(2)}(\LO_3, \LO_4)$ is reconfigurable to an essentially unary polymorphism.
\end{lemma}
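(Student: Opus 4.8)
The plan is to use Lemma~\ref{lem:trashcolour} as the starting point: given $f\in\Pol^{(2)}(\LO_3,\LO_4)$, we obtain an increasing unary polymorphism $h$, a coordinate $i$ (say $i=1$ after possibly transposing $f$), and a trash colour $t\in[4]$ such that $f(x,y)\in\{h(x),t\}$ for all $x,y$. The target essentially unary polymorphism is $u(x,y):=h(x)$, which is indeed a polymorphism since $h$ is an increasing map $\LO_3\to\LO_4$. So the task reduces to exhibiting a path in $\Hom(\LO_3^2,\LO_4)$ from $f$ to $u$; by the earlier remark it suffices to change the values of $f$ one entry at a time, keeping each intermediate map a homomorphism $\LO_3^2\to\LO_4$ — equivalently, each single-entry change must pass through a valid multihomomorphism by the first lemma of this section.

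First I would identify exactly which entries of $f$ carry the trash colour: let $S=\{(x,y): f(x,y)=t \text{ and } h(x)\neq t\}$ be the entries where $f$ and $u$ disagree. The plan is to repair these entries one at a time, each time switching some $f(x,y)=t$ to $h(x)$. The key point to verify is that after repairing some subset, the current map $f'$ (which agrees with $u$ on repaired entries and with $f$ elsewhere) together with the next repair still forms a valid single-entry reconfiguration step, i.e.\ the multifunction sending the entry $(x,y)$ to $\{t,h(x)\}$ and everything else to its singleton value is a multihomomorphism into $\LO_4$. By the lemma at the start of this section, it is enough that changing that one entry from $t$ to $h(x)$ keeps a homomorphism; so really I must check that $u$-on-repaired-entries, $f$-elsewhere is a homomorphism for every "downward closed in repair order" subset. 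I would choose the repair order carefully: repair the entries of $S$ in an order compatible with the partial order on ${[3]}^2$ (e.g.\ from bottom to top, or whichever direction makes the constraints go through), so that at each stage the only triples of $R^{\LO_3^2}$ touching the changed entry involve entries that are either already at their $u$-value or are untouched, and in both situations the unique-maximum condition in $\LO_4$ is preserved because $h$ is increasing and $t$ was a trash colour (it never coincided with the "real" value in its row).

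The main obstacle I anticipate is precisely the verification that the intermediate maps are homomorphisms — a priori, switching an entry from $t$ to $h(x)$ could create a triple in $\LO_3^2$ whose image in $\LO_4$ no longer has a unique maximum, for instance if $h(x)$ equals some other value already present in a constraining triple. Handling this requires exploiting the structure established in Lemma~\ref{lem:trashcolour}: that $t$ occupies (at most) a single row, that the three non-trash colours are exactly the values of $h$ on the three rows and are pairwise distinct, and that $h$ is monotone. I would argue that for any triple $((x_1,y_1),(x_2,y_2),(x_3,y_3))\in R^{\LO_3^2}$, once we compare with the analogous triple for $u$ (which is a homomorphism) and track which of the three positions currently hold $t$ versus $h(x_j)$, the monotonicity of $h$ forces the maximum to be attained uniquely at the same coordinate as in $u$'s triple, except in the degenerate sub-cases which the trash-colour structure rules out. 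Assembling these case checks into a clean induction on $|S|$, with the right repair order, is the crux; the rest is bookkeeping, and the conclusion $f\sim u$ in $\Hom(\LO_3^2,\LO_4)$ then gives reconfigurability to the essentially unary polymorphism $u$.
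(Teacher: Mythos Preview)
Your plan matches the paper's proof almost exactly: start from Lemma~\ref{lem:trashcolour}, induct on the number of trash entries, and at each step change one trash entry from $t$ to $h(x)$, verifying the intermediate map is still a homomorphism. The target $u(x,y)=h(x)$ and the use of the single-entry reconfiguration lemma are also the same.

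The one place where your plan is genuinely incomplete is the repair order. You suggest an order ``compatible with the partial order on ${[3]}^2$'', but a linear extension of the product order does \emph{not} suffice. Concretely, take $h=(2,3,4)$, $t=1$, and $f$ with trash at $(2,1)$ and $(1,3)$ (and $f(x,y)=h(x)$ elsewhere); this is a polymorphism, and $(1,3)$, $(2,1)$ are incomparable in the product order. If you repair $(1,3)$ first, the triple $((1,2),(2,1),(1,3))\in R^{\LO_3^2}$ maps to $(2,1,2)\notin R^{\LO_4}$. The paper's choice is to repair the trash entry with \emph{maximal $x$-coordinate} (the coordinate $h$ depends on), and this is exactly what the case analysis uses: when the unique maximum of a constraining triple sits at some $(x,y)\neq(x_0,y_0)$, one shows $x\neq x_0$ by arguing that otherwise the third vertex $(x',y')$ would have $x'>x_0$ and carry the trash colour, contradicting maximality of $x_0$. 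Once you adopt that order, the three-case check (where the unique maximum sits) goes through cleanly using only that $h$ is increasing.
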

\begin{proof}
    The proof relies on Lemma~\ref{lem:trashcolour}. We prove our result by induction on the number of appearances of the trash colour. The result is clear if the trash colour never appears; so assume it appears at least once. Thus suppose without loss of generality that $f(x, y) \in \{ h(x), t\}$ for some increasing $h \in \Pol^{(1)}(\LO_3, \LO_4)$, and that in particular $f(x_0, y_0) = t$. Furthermore, suppose that among all such pairs, $(x_0, y_0)$ is the one that maximises $x_0$. We claim that $f'(x, y)$, which is equal to $f(x, y)$ everywhere except that $f'(x_0, y_0) = h(x_0)$ is also a polymorphism, which gives us our inductive step.

    Consider any $((x, y), (x', y'), (x'', y'')) \in R^{\LO_3^2}$; if $(x_0, y_0) \not \in \{ (x, y), (x', y'), (x'', y'') \}$, then $(f'(x, y), f'(x', y'), f'(x'', y'')) = (f(x, y), f(x', y'), f(x'', y'')) \in R^{\LO_4}$, so assume without loss of generality that $(x'', y'') = (x_0, y_0)$. We now have two cases, depending on where the unique maximum of $(f(x, y), f(x', y'), f(x_0, y_0)) \in R^{\LO_4}$ falls.
    \begin{description}
        \item[$\bm{f(x, y)}$ is the unique maximum] In this case, $f(x, y) > f(x_0, y_0) = t$ and $f(x, y) > f(x', y')$. We must show that $f'(x_0, y_0) = h(x_0) \neq f(x, y)$. Since we know that $f(x, y) \neq t$ and thus $f(x, y) = h(x)$, and furthermore that $h$ is increasing, this is the same as showing that $x \neq x_0$. Suppose for contradiction that $x = x_0$; thus $x' > x$. If $f(x', y) = h(x') > h(x)$, then $f(x, y)$ would not be the unique maximum, so $f(x', y) = t$. This contradicts the choice of $(x_0, y_0)$, as $x' > x_0$.
        \item[$\bm{f(x', y')}$ is the unique maximum] This case is identical to the previous case.
        \item[$\bm{f(x_0, y_0)}$ is the unique maximum] It follows that $f(x, y) < t$ and $f(x', y') < t$, hence $f(x, y) = h(x)$ and $f(x', y') = h(x')$. Thus since $(x, x', x_0) \in R^{\LO_3}$ and $h$ is increasing, it follows that $(f'(x, y), f'(x', y'), f'(x_0, y_0)) = (h(x), h(x'), h(x_0)) \in R^{\LO_4}$.
    \end{description}

    Thus we see that this $f'$ is indeed a polymorphism, and contains one fewer trash colour. Thus our conclusion follows.
\end{proof}

In Figure~\ref{fig:reconfiguration}, we can see the reconfiguration graph of $\Pol^{(2)}(\LO_3, \LO_4)$. This shows how one can reconfigure all polymorphisms to essentially unary ones. In the diagram, we show a polymorphism in its matrix representation.

It can be also observed that unary polymorphisms that depend on the same coordinate are reconfigurable to each other. Moreover, since every connected component of $\Hom(\LO_3^2, \LO_4)$ contains a homomorphism, and hence a unary one, we can derive from these observation that $\Hom(\LO_3^2, \LO_4)$ has at most two connected components. In Appendix~\ref{app:topology}, we also prove that it has at least two components using topological methods.

Finally, we conclude with the statement that we actually use in the proof, which follows from well-known properties of homomorphism complexes.

\begin{lemma}
  Let $\A$, $\B$, and $\rel C$ be three structures, $G$ a group acting on $\A$, and assume that $f, g \in \hom(\B, \rel C)$ are reconfigurable. Then the induced maps $f_*, g_*\colon \Hom(\A, \B) \to \Hom(\A, \rel C)$ are $G$-homotopic.
\end{lemma}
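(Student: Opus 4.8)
The plan is to realize the operation of \emph{composing multihomomorphisms} as a continuous map, and then to slide a reconfiguration path between $f$ and $g$ through it. First I would note that for multihomomorphisms $m \in \mhom(\A, \B)$ and $n \in \mhom(\B, \rel C)$ the composite $(n \circ m)(a) = \bigcup_{b \in m(a)} n(b)$ is again a multihomomorphism $\A \to \rel C$ (the same computation that makes homomorphism complexes functorial, now carried out with general multihomomorphisms on both sides), that composition is associative, and that it is order-preserving in each of its two arguments. Hence $(m, n) \mapsto n \circ m$ is a morphism of posets $\mhom(\A, \B) \times \mhom(\B, \rel C) \to \mhom(\A, \rel C)$, and passing to order complexes yields a simplicial map from $\Delta\bigl(\mhom(\A,\B) \times \mhom(\B,\rel C)\bigr)$ to $\Hom(\A, \rel C)$. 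Since the order complex of a product of two posets is the canonical ``staircase'' triangulation of the product of their order complexes — in particular, the map induced by the two coordinate projections is a homeomorphism from its geometric realization onto $|\Hom(\A,\B)| \times |\Hom(\B,\rel C)|$ (see, e.g., \cite{Koz08}) — we obtain a continuous \emph{composition map}
\[
  c \colon |\Hom(\A, \B)| \times |\Hom(\B, \rel C)| \longrightarrow |\Hom(\A, \rel C)|.
\]

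Next I would identify the two ``end slices'' of $c$. For a homomorphism $h \in \hom(\B, \rel C)$, regarded as the vertex $b \mapsto \{h(b)\}$ of $\Hom(\B, \rel C)$, the homeomorphism above carries $|\Hom(\A,\B)| \times \{h\}$ onto the subcomplex of $\Delta\bigl(\mhom(\A,\B)\times\mhom(\B,\rel C)\bigr)$ spanned by $\mhom(\A,\B)\times\{h\}$, identified with $\Hom(\A,\B)$ in the obvious way; and the restriction of our simplicial map to that subcomplex is exactly the map $m \mapsto h \circ m$ that defines $h_*$. Thus $c(\,\cdot\,, h) = h_*$. Now let $p \colon [0,1] \to |\Hom(\B, \rel C)|$ be a path with $p(0) = f$ and $p(1) = g$ — such a path exists precisely because $f$ and $g$ are reconfigurable — and set
\[
  H \colon |\Hom(\A, \B)| \times [0,1] \to |\Hom(\A, \rel C)|, \qquad H(x, t) = c\bigl(x, p(t)\bigr).
\]
This is continuous, and $H(\,\cdot\,, 0) = c(\,\cdot\,, f) = f_*$ while $H(\,\cdot\,,1) = c(\,\cdot\,, g) = g_*$, so $H$ is a homotopy from $f_*$ to $g_*$.

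It remains to verify equivariance. The action of $G$ on $\A$ induces actions on $\mhom(\A,\B)$ and $\mhom(\A,\rel C)$ by precomposition, while $G$ acts trivially on $\Hom(\B,\rel C)$; and by associativity of composition, $c(\gamma \cdot x, y) = \gamma \cdot c(x, y)$ for every $\gamma \in G$ (on vertices this is $n \circ (m \circ \gamma) = (n\circ m)\circ \gamma$, and it extends linearly). Therefore each $H(\,\cdot\,, t) = c(\,\cdot\,, p(t))$ is $G$-equivariant, so $H$ is a $G$-equivariant homotopy and $f_* \sim_G g_*$.

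I expect the only part needing genuine care to be the interplay between the order complex of the product poset, the staircase triangulation of a product of spaces, and geometric realization — concretely, checking that $c$ restricts on the slices $|\Hom(\A,\B)| \times \{f\}$ and $|\Hom(\A,\B)| \times \{g\}$ to exactly $f_*$ and $g_*$. Everything else (well-definedness, associativity, and monotonicity of multihomomorphism composition, together with the equivariance bookkeeping) is routine, with associativity of composition being the single fact that renders equivariance automatic.
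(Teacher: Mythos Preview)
Your proof is correct and follows essentially the same route as the paper: build a continuous composition map $c$ from the monotonicity of multihomomorphism composition, identify its slices over $f$ and $g$ with $f_*$ and $g_*$, push a reconfiguration path through $c$, and derive equivariance from associativity. The only difference is expository: you spell out the construction of $c$ via the order complex of the product poset and the staircase triangulation, whereas the paper simply says composition ``extends linearly'' and cites \cite[Section~18.4.3]{Koz08}.
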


\begin{proof}
  First, observe that the composition of multihomomorphisms as a map $\mhom(\A, \B) \to \mhom(\B, \rel C) \to \mhom(\A, \rel C)$ is monotone. This means that the composition extends linearly to a continuous map
  \[
    c: \Hom(\B, \rel C) \times \Hom(\A, \B) \to \Hom(\rel A, \rel C)
  \]
  (see also~\cite[Section 18.4.3]{Koz08}).
  Since the composition is associative, we obtain that the map $c$ is equivariant (under an action of any automorphism of $\A$ on the second coordinate).

  Finally, we have that $f_*(x) = c(f, x)$ by the definition of $f_*$, and analogously, $g_*(x) = c(g, x)$. Consequently, if $h\colon [0, 1] \to \Hom(\B, \rel C)$ is an arc connecting $f$ and $g$, i.e., such that $h(0) = f$ and $h(1) = g$, then the map $H \colon [0, 1] \times \Hom(\A, \B) \to \Hom(\A, \rel C)$ defined by
  \[
    H(t, x) = c(h(t), x)
  \]
  is a homotopy between $f_*$ and $g_*$. This $H$ is also equivariant since $c$ is equivariant.
\end{proof}

The following corollary then follows directly from the above and Lemma~\ref{lem:reconfigurability}.

\begin{corollary}
  For every binary polymorphism $f \in \Pol^{(2)}(\LO_3, \LO_4)$, the induced map $f_*\colon \homLO3^2 \to \homLO4$ is equivariantly homotopic either to the map $(x, y) \mapsto i_*(x)$, or to the map $(x, y) \mapsto i_*(y)$ where $i\colon \LO_3 \to \LO_4$ is the inclusion.
\end{corollary}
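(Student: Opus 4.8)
The plan is to combine the two facts just established. By Lemma~\ref{lem:reconfigurability}, the binary polymorphism $f\in\Pol^{(2)}(\LO_3,\LO_4)$ is reconfigurable, inside $\Hom(\LO_3^2,\LO_4)$, to an essentially unary polymorphism $g$. Since $\LO_4$ has no constant tuple, $g$ is not constant, so $g=h\circ p_j$ for some $j\in\{1,2\}$ and some $h\in\Pol^{(1)}(\LO_3,\LO_4)$, where $p_j\colon\LO_3^2\to\LO_3$ is the $j$-th coordinate projection (a homomorphism $\LO_3^2\to\LO_3$, and the $h$ produced by the proof of Lemma~\ref{lem:reconfigurability}, namely the increasing function supplied by Lemma~\ref{lem:trashcolour}, is increasing). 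Using the observation recorded after Figure~\ref{fig:reconfiguration} that any two essentially unary polymorphisms depending on the same coordinate are reconfigurable to one another, $g=h\circ p_j$ is reconfigurable to $i\circ p_j$, where $i\colon\LO_3\to\LO_4$ is the inclusion. Concatenating the two reconfiguring paths shows that $f$ is reconfigurable to $i\circ p_1$ or to $i\circ p_2$.

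Next I would apply the lemma immediately preceding the corollary with $\A=\ott$, $\B=\LO_3^2$, $\rel C=\LO_4$, and $G=\Z_3$ acting on $\ott$ by cyclic rotation: since $f$ and $i\circ p_j$ are reconfigurable, the induced maps $f_*$ and $(i\circ p_j)_*$ from $\Hom(\ott,\LO_3^2)$ to $\Hom(\ott,\LO_4)$ are $\Z_3$-homotopic. Precomposing with the natural $\Z_3$-equivariant homotopy equivalence $e\colon\homLO3^2=\Hom(\ott,\LO_3)^2\to\Hom(\ott,\LO_3^2)$ used to build $\zeta$ (a homotopy inverse of the product map $\bigl((p_1)_*,(p_2)_*\bigr)$, cf.~\cite{Koz08}) gives $f_*\circ e\sim_{\Z_3}(i\circ p_j)_*\circ e$, and by construction $f_*\circ e$ is precisely the map $\homLO3^2\to\homLO4$ named in the statement. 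Finally, $(i\circ p_j)_*=i_*\circ(p_j)_*$, and $(p_j)_*\circ e\sim_{\Z_3}\mathrm{pr}_j$, the $j$-th coordinate projection $\homLO3^2\to\homLO3$, because $\bigl((p_1)_*,(p_2)_*\bigr)\circ e\sim_{\Z_3}\mathrm{id}$. Hence $f_*\circ e\sim_{\Z_3}i_*\circ\mathrm{pr}_j$, i.e., $f_*$ is equivariantly homotopic to $(x,y)\mapsto i_*(x)$ or to $(x,y)\mapsto i_*(y)$, as claimed.

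The only step that is not pure bookkeeping is the last one: one needs that the homotopy equivalence $e$ intertwines the coordinate projections on $\Hom(\ott,\LO_3)^2$ with the projection-induced maps $(p_j)_*$ on $\Hom(\ott,\LO_3^2)$, equivariantly and up to homotopy. This is exactly the compatibility already implicit in the construction of the minion homomorphisms $\zeta$ and $\eta$ (minors are modelled by coordinate projections on one side and by $(p_j)_*$ on the other), so in the write-up it reduces to a reference to that construction rather than to new work; everything else is the routine comparison of the two homotopy-equivalent models $\Hom(\ott,\LO_3)^2$ and $\Hom(\ott,\LO_3^2)$ of the same space.
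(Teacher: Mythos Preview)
Your proof is correct and follows the same approach as the paper, which simply states that the corollary follows directly from Lemma~\ref{lem:reconfigurability} and the preceding lemma. You have usefully made explicit the one point the paper glosses over: the passage between $\Hom(\ott,\LO_3^2)$ (the domain of the induced map produced by the lemma) and $\Hom(\ott,\LO_3)^2$ (the domain in the statement), via the equivariant homotopy equivalence underlying the construction of~$\zeta$; this is exactly the bookkeeping the paper leaves implicit.
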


\begin{figure}
  \centerline{\includegraphics{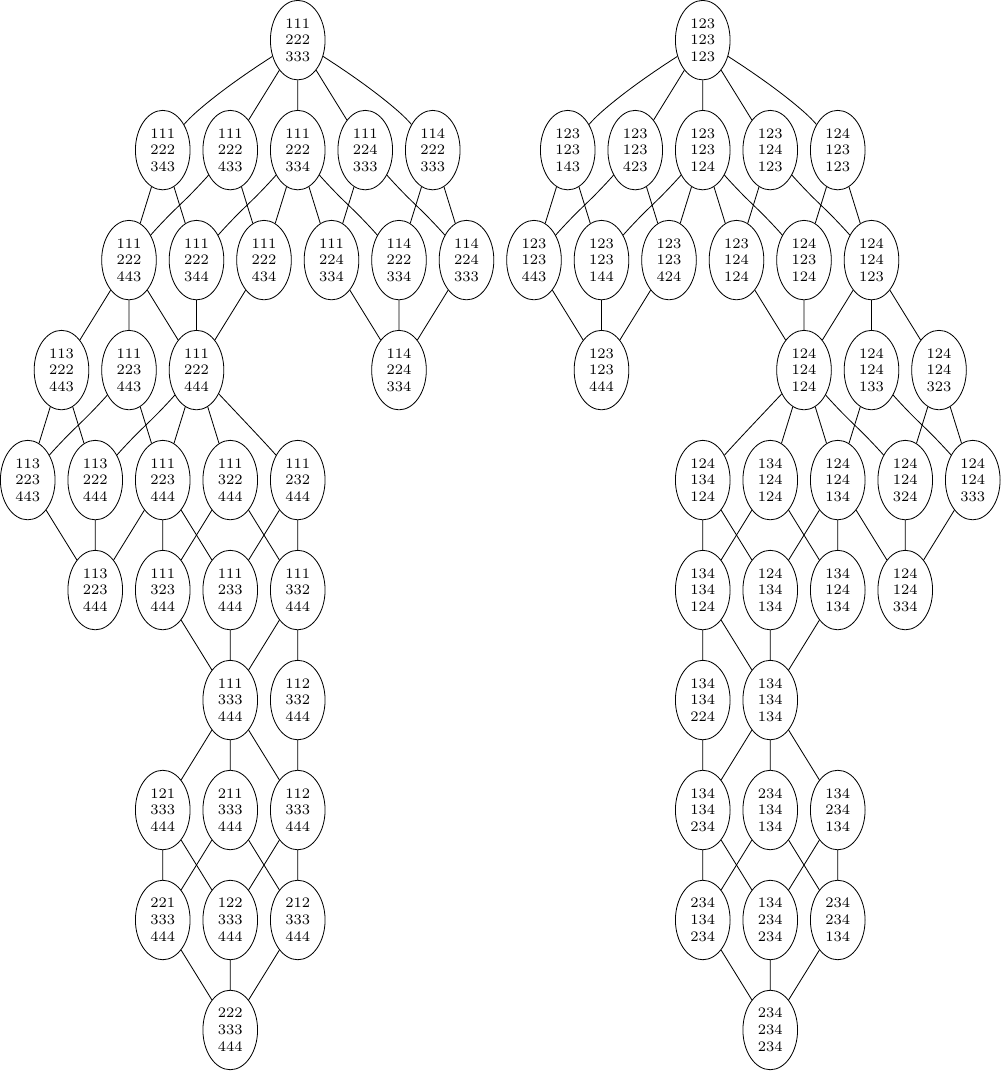}}
  \caption{Graph of reconfigurations of $\Pol^{(2)}(\LO_3, \LO_4)$.}
  \label{fig:reconfiguration}
\end{figure}

\appendix
\section{Structure of \texorpdfstring{$\homLO3$ and $\homLO4$}{homomorphism complexes}} \label{app:structurehomcomplexes}

In this section we will investigate the structure of $\Hom(\ott, \LO_3)$ and $\Hom(\ott, \LO_4)$. Luckily the first of these is easy to just draw. We can see the structure of $\Hom(\ott, \LO_3)$ in Figure~\ref{fig:hom-lo_3}. In this figure, the vertices corresponding to homomorphisms are highlighted, and labelled by the triple of the values of the homomorphism. Each of the tetrahedrons corresponds to a multihomomorphism consisting of all the edges where the unique maximum is $3$ in a fixed position $i \in [3]$. The $\Z_3$ action on $\Hom(\ott, \LO_3)$ is given by rotation.

\begin{figure}
\begin{tikzpicture}
	\fill[lipicsYellow,opacity=0.5] (canvas polar cs:radius=44mm,angle=280.0) -- (canvas polar cs:radius=32mm,angle=305.0) -- (canvas polar cs:radius=44mm,angle=320.0);
	\fill[lipicsYellow,opacity=0.5] (canvas polar cs:radius=44mm,angle=280.0) -- (canvas polar cs:radius=32mm,angle=305.0) -- (canvas polar cs:radius=56mm,angle=295.0);
	\fill[lipicsYellow,opacity=0.5] (canvas polar cs:radius=44mm,angle=280.0) -- (canvas polar cs:radius=44mm,angle=320.0) -- (canvas polar cs:radius=56mm,angle=295.0);
	\fill[lipicsYellow,opacity=0.5] (canvas polar cs:radius=32mm,angle=305.0) -- (canvas polar cs:radius=44mm,angle=320.0) -- (canvas polar cs:radius=56mm,angle=295.0);
	\fill[lipicsYellow,opacity=0.5] (canvas polar cs:radius=44mm,angle=160.0) -- (canvas polar cs:radius=44mm,angle=200.0) -- (canvas polar cs:radius=32mm,angle=185.0);
	\fill[lipicsYellow,opacity=0.5] (canvas polar cs:radius=44mm,angle=160.0) -- (canvas polar cs:radius=44mm,angle=200.0) -- (canvas polar cs:radius=56mm,angle=175.0);
	\fill[lipicsYellow,opacity=0.5] (canvas polar cs:radius=44mm,angle=160.0) -- (canvas polar cs:radius=32mm,angle=185.0) -- (canvas polar cs:radius=56mm,angle=175.0);
	\fill[lipicsYellow,opacity=0.5] (canvas polar cs:radius=44mm,angle=200.0) -- (canvas polar cs:radius=32mm,angle=185.0) -- (canvas polar cs:radius=56mm,angle=175.0);
	\fill[lipicsYellow,opacity=0.5] (canvas polar cs:radius=44mm,angle=40.0) -- (canvas polar cs:radius=32mm,angle=65.0) -- (canvas polar cs:radius=44mm,angle=80.0);
	\fill[lipicsYellow,opacity=0.5] (canvas polar cs:radius=44mm,angle=40.0) -- (canvas polar cs:radius=32mm,angle=65.0) -- (canvas polar cs:radius=56mm,angle=55.0);
	\fill[lipicsYellow,opacity=0.5] (canvas polar cs:radius=44mm,angle=40.0) -- (canvas polar cs:radius=44mm,angle=80.0) -- (canvas polar cs:radius=56mm,angle=55.0);
	\fill[lipicsYellow,opacity=0.5] (canvas polar cs:radius=32mm,angle=65.0) -- (canvas polar cs:radius=44mm,angle=80.0) -- (canvas polar cs:radius=56mm,angle=55.0);
	\node[ellipse, draw, fill=white] (112) at (canvas polar cs:radius=20mm,angle=240.0) {112};
	\node[ellipse, draw, fill=white] (113) at (canvas polar cs:radius=44mm,angle=280.0) {113};
	\node[ellipse, draw, fill=white] (121) at (canvas polar cs:radius=20mm,angle=120.0) {121};
	\node[ellipse, draw, fill=white] (123) at (canvas polar cs:radius=32mm,angle=305.0) {123};
	\node[ellipse, draw, fill=white] (131) at (canvas polar cs:radius=44mm,angle=160.0) {131};
	\node[ellipse, draw, fill=white] (132) at (canvas polar cs:radius=44mm,angle=200.0) {132};
	\node[ellipse, draw, fill=white] (211) at (canvas polar cs:radius=20mm,angle=0.0) {211};
	\node[ellipse, draw, fill=white] (213) at (canvas polar cs:radius=44mm,angle=320.0) {213};
	\node[ellipse, draw, fill=white] (223) at (canvas polar cs:radius=56mm,angle=295.0) {223};
	\node[ellipse, draw, fill=white] (231) at (canvas polar cs:radius=32mm,angle=185.0) {231};
	\node[ellipse, draw, fill=white] (232) at (canvas polar cs:radius=56mm,angle=175.0) {232};
	\node[ellipse, draw, fill=white] (311) at (canvas polar cs:radius=44mm,angle=40.0) {311};
	\node[ellipse, draw, fill=white] (312) at (canvas polar cs:radius=32mm,angle=65.0) {312};
	\node[ellipse, draw, fill=white] (321) at (canvas polar cs:radius=44mm,angle=80.0) {321};
	\node[ellipse, draw, fill=white] (322) at (canvas polar cs:radius=56mm,angle=55.0) {322};
	\draw(112) -- (113);
	\draw(112) -- (132);
	\draw(112) -- (312);
	\draw(113) -- (123);
	\draw(113) -- (213);
	\draw(113) -- (223);
	\draw(121) -- (123);
	\draw(121) -- (131);
	\draw(121) -- (321);
	\draw(123) -- (213);
	\draw[dashed] (123) -- (223);
	\draw(131) -- (132);
	\draw(131) -- (231);
	\draw(131) -- (232);
	\draw(132) -- (231);
	\draw(132) -- (232);
	\draw(211) -- (213);
	\draw(211) -- (231);
	\draw(211) -- (311);
	\draw(213) -- (223);
	\draw[dashed] (231) -- (232);
	\draw(311) -- (312);
	\draw(311) -- (321);
	\draw(311) -- (322);
	\draw(312) -- (321);
	\draw[dashed] (312) -- (322);
	\draw(321) -- (322);
	\draw[very thick] (211) -> (213) -> (113) -> (112) -> (132) -> (131) -> (121) -> (321) -> (311) -> (211);
\end{tikzpicture}
   \caption{$\Hom(\ott, \LO_3)$ where only the homomorphisms are explicitly marked.}
  \label{fig:hom-lo_3}
\end{figure}

\begin{lemma}\label{lem:homlo3}
  There exists a $\Z_3$-equivariant map $S^1 \to \Hom(\ott, \LO_3)$.
\end{lemma}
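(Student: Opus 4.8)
The plan is to construct the map explicitly, guided by the picture of $\Hom(\ott,\LO_3)$ in Figure~\ref{fig:hom-lo_3}, via the following elementary reduction. Let $X$ be a $\Z_3$-space with generator $\omega$ (acting by a homeomorphism of order $3$), and suppose there is a path $\gamma\colon[0,1]\to X$ from a point $x$ to $\omega\cdot x$. Then there is a $\Z_3$-equivariant map $S^1\to X$, where $S^1$ carries the standard free action by rotation through $2\pi/3$: modelling $S^1$ as $\R/3\Z$ with $\omega$ acting by translation by $1$, one sets $f(s)=\omega^{\lfloor s\rfloor}\cdot\gamma(s-\lfloor s\rfloor)$ for $s\in[0,3)$; this is continuous (the three translated copies of $\gamma$ agree at the integer gluing points since $\gamma(1)=\omega\cdot\gamma(0)$, and the last one closes up since $\omega^3=\mathrm{id}$) and equivariant by construction. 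So it suffices to exhibit a homomorphism $h\colon\ott\to\LO_3$ and a path in the geometric realization of $\Hom(\ott,\LO_3)$ joining $h$ to $\omega\cdot h$ for a suitable generator $\omega$ of $\Z_3$.

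Recall that $\Z_3$ acts on (multi)homomorphisms by precomposition with a cyclic permutation of $\ott$, hence on a triple by cyclically permuting its coordinates. I would take $h$ to be the homomorphism with $(h(1),h(2),h(3))=(2,1,1)$ and $\omega$ the generator with $\omega\cdot h=(1,1,2)$ (both triples have a unique maximum, so are indeed homomorphisms $\ott\to\LO_3$). A path from $h$ to $\omega\cdot h$ is then obtained by alternately enlarging and shrinking a single coordinate of the multihomomorphism, through the chain
\[
  (2,1,1)\ \le\ (\{2\},\{1\},\{1,3\})\ \ge\ (2,1,3)\ \le\ (\{1,2\},\{1\},\{3\})\ \ge\ (1,1,3)\ \le\ (\{1\},\{1\},\{2,3\})\ \ge\ (1,1,2),
\]
in which each relation is an instance of the coordinatewise order on multihomomorphisms, so consecutive terms span an edge of the order complex $\Hom(\ott,\LO_3)$; traversing these six edges linearly gives the desired path. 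The one thing left to verify is that the three enlarged functions are genuine multihomomorphisms, i.e., that $\{2\}\times\{1\}\times\{1,3\}$, $\{1,2\}\times\{1\}\times\{3\}$, and $\{1\}\times\{1\}\times\{2,3\}$ lie in $R^{\LO_3}$; unpacked, these are the triples $(2,1,1)$, $(2,1,3)$, $(1,1,3)$, $(1,1,2)$, and each of them has a unique maximum.

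Combining the two parts produces the required $\Z_3$-equivariant map $S^1\to\Hom(\ott,\LO_3)$. I do not expect any real difficulty: the whole content is to locate a short path from $h$ to $\omega\cdot h$ in the complex of Figure~\ref{fig:hom-lo_3} (the thick $9$-cycle drawn there is precisely the image of one such map, its nine homomorphism-vertices forming three $\Z_3$-orbits), after which the gluing principle of the first paragraph finishes the job; the only points needing care are fixing the convention for the $\Z_3$-action and the small finite check that the three listed multihomomorphisms are valid. Equivalently, one may phrase the outcome by saying that the six edges above together with their $\omega$- and $\omega^2$-translates form a $\Z_3$-invariant subcomplex of $\Hom(\ott,\LO_3)$ that is homeomorphic to $S^1$ and carries a free $\Z_3$-action, and that its inclusion is the map we want.
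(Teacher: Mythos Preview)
Your proof is correct and is essentially the same as the paper's: the path you build from $(2,1,1)$ to $(1,1,2)$ is exactly one third of the $\Z_3$-invariant $9$-cycle the paper writes down (with the intermediate multihomomorphisms, which the paper suppresses in a footnote, made explicit), and gluing its three $\Z_3$-translates reproduces that cycle. The only difference is presentational: you isolate the general ``path from $x$ to $\omega\cdot x$ glues to an equivariant circle'' principle, whereas the paper simply exhibits the invariant cycle directly.
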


\begin{proof}
  Consider the cycle in $\Hom(\ott, \LO_3)$ given by the vertices\footnote{To be precise, every edge $f_0\to f_1$ of the cycle as it is written is not an edge of the complex, but it "goes through" the vertex corresponding to the multihomomorphism $m\colon i\mapsto f_0(i)\cup f_1(i)$. For ease of readability, these vertices have been suppressed from the notation.}
  \begin{multline*}
      (2, 1, 1) \to (2, 1, 3) \to (1, 1, 3) \to
      (1, 1, 2) \to (1, 3, 2) \\ \to (1, 3, 1) \to
      (1, 2, 1) \to (3, 2, 1) \to (3, 1, 1) \to (2, 1, 1).
  \end{multline*}
  This cycle is highlighted in Figure~\ref{fig:hom-lo_3}. Observe that it is invariant under the $\Z_3$ action. Thus we see that there is a $\Z_3$-equivariant map $S^1 \to \Hom(\ott, \LO_3)$.
\end{proof}

Now, we turn to $\Hom(\ott, \LO_4)$. Unfortunately, this is significantly more difficult to ``just see'', as it has many overlapping simplices of dimensions greater than 1.
However, luckily, for our argument it turns out we only need to find a ($\Z_3$-equivariant) map from $\Hom(\ott, \LO_4)$ to some sufficiently homotopically simple $\Z_3$ space $P^2$ (we will expand on the meaning of ``simple'' later in Section~\ref{sec:eilenberg_maclane}).
This is because the template we are looking at is $(\LO_3, \LO_4)$; thus $\LO_4$ is ``on the right''.

However, constructing a map directly from $\Hom(\ott, \LO_4)$ into $P^2$ is not entirely straightforward; instead, it is much easier to define an intermediate $\Z_3$-space $L_4$ and show that $\Hom(\ott, \LO_4) \to L_4$ and $L_4\to P^2$ equivariantly.

\begin{lemma} \label{lem:homlo-to-l}
  Let $L_4$ be the order complex of the poset depicted in Fig.~\ref{fig:homlo4}, i.e., the poset on $[3] \times \ints_3$ where $(i, \omega^\alpha) < (j, \omega^\beta)$ if $i < j$.
  Then there is a $\Z_3$-map $\Hom(\ott, \LO_4)\to L_4$.
\end{lemma}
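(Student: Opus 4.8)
The plan is to construct the map $\Hom(\ott, \LO_4) \to L_4$ at the level of posets, as a monotone map from the poset of multihomomorphisms $\mhom(\ott, \LO_4)$ to the poset $[3] \times \ints_3$, and then take order complexes; since the resulting map will be $\Z_3$-equivariant on the nose, it induces the required $\Z_3$-map of order complexes. Recall that a multihomomorphism $m \in \mhom(\ott, \LO_4)$ is identified with a triple $(m(1), m(2), m(3))$ of nonempty subsets of $[4]$ with $m(1)\times m(2)\times m(3)\subseteq R^{\LO_4}$. The key structural observation I would establish first is this: for every such $m$, there is a \emph{unique} coordinate $i \in \{1,2,3\}$ in which the maximum over $m(1)\cup m(2)\cup m(3)$ can occur — more precisely, letting $M = \max(m(1)\cup m(2)\cup m(3))$, the value $M$ appears in exactly one of the three sets $m(i)$ and moreover appears there exactly once (as a singleton obstruction in the other two coordinates). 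This follows because if $M$ occurred in two coordinates, say in $m(1)$ and $m(2)$, then picking the $M$ from both and any element of $m(3)$ produces a triple $(M, M, c)$ which is in $R^{\LO_4}$ only if $c > M$, contradicting maximality; and similarly $M$ cannot appear twice in a single $m(i)$ together with the rest. Call this distinguished coordinate $\kappa(m) \in \{1,2,3\}$.

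Next I would extract the ``level'' coordinate in $[3]$. The idea is to measure ``how large'' the multihomomorphism is by recording how many distinct colour-levels are forced. Concretely, I expect the right invariant to be something like $\lambda(m) = |\{\,c : c = \max m(i) \text{ for some } i\,\}|$ or, more robustly, a count derived from the chain structure that $m$ imposes among its three coordinates; comparing with Fig.~\ref{fig:homlo4} (three rows, with $(i,\omega^\alpha) < (j,\omega^\beta)$ iff $i<j$), the target poset has a height-$3$ structure indexed by $[3]$ and a $\Z_3$-fibre indexed by the "which coordinate is maximal" data. So the natural candidate map is
\[
  \Phi(m) = \big(\lambda(m),\, \omega^{\kappa(m)-1}\big) \in [3]\times\ints_3,
\]
after checking that $\lambda$ indeed only takes values in $\{1,2,3\}$ on multihomomorphisms into $\LO_4$ — this uses that $\LO_4$ has only $4$ colours and a chain of maxima of length $>3$ would force a fifth colour. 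I would verify equivariance by noting that the $\Z_3$-action on $\Hom(\ott,\LO_4)$ cyclically permutes the coordinates of $m$, hence sends $\kappa(m)$ to $\kappa(m)+1 \pmod 3$ and leaves $\lambda(m)$ unchanged, which matches the cyclic action on the $\ints_3$-fibre of $L_4$.

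The main obstacle — and the step requiring the most care — is proving that $\Phi$ is \emph{monotone}, i.e.\ that $m \le m'$ (componentwise inclusion) implies $\Phi(m) \le \Phi(m')$ in $[3]\times\ints_3$, which by definition of the poset $L_4$ means exactly $\lambda(m) \le \lambda(m')$ with no constraint on the fibre. So monotonicity reduces to: enlarging a multihomomorphism cannot decrease $\lambda$. This is plausible but not automatic, because enlarging $m(i)$ to $m'(i) \supseteq m(i)$ could in principle collapse the maximal coordinate or change which maxima are distinct; one must check that the definition of $\lambda$ is chosen so that it is genuinely monotone, which may force me to refine $\lambda$ (for instance defining it via the length of the longest chain among maxima, or via a min/max over coordinates) until monotonicity holds. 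I would also need to double-check the edge case where $\kappa(m) \ne \kappa(m')$ for comparable $m \le m'$: since the fibre is unconstrained in $L_4$ this causes no problem for monotonicity, but it does need to be noted explicitly. Once monotonicity is verified, the order complex functor turns $\Phi$ into a simplicial (hence continuous) $\Z_3$-equivariant map $\Hom(\ott,\LO_4) \to L_4$, completing the proof; the remaining lemma $L_4 \to P^2$ is then handled separately in Section~\ref{sec:eilenberg_maclane}.
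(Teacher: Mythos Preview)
There is a genuine gap: your claim that ``since the fibre is unconstrained in $L_4$ this causes no problem for monotonicity'' is incorrect. In the poset $[3]\times\ints_3$, two distinct elements with the \emph{same} first coordinate are incomparable, so if $m<m'$ with $\lambda(m)=\lambda(m')$ but $\kappa(m)\neq\kappa(m')$, then $\Phi(m)$ and $\Phi(m')$ are incomparable and monotonicity fails. This actually happens: take $m=(\{1\},\{2\},\{3\})$ and $m'=(\{1,4\},\{2\},\{3\})$. Both are multihomomorphisms $\ott\to\LO_4$, and $m<m'$. With your proposed $\lambda$ (number of distinct coordinate maxima) we get $\lambda(m)=\lambda(m')=3$, while $\kappa(m)=3$ and $\kappa(m')=1$. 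Hence $\Phi(m)=(3,\omega^2)$ and $\Phi(m')=(3,\omega^0)$ are incomparable. No tweak of $\lambda$ alone fixes this, because the obstruction is that two homomorphisms $(1,2,3)$ and $(4,2,3)$ with different ``maximal coordinate'' sit below the same $m'$, so any equivariant map recording only $\kappa$ in the second coordinate is forced to send $m'$ above two incomparable elements of the same row.

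The paper avoids exactly this difficulty by mapping not into the poset $[3]\times\ints_3$ itself but into its \emph{face poset} $\mathcal F$ (equivalently, into the barycentric subdivision of $L_4$). Concretely, it sets $h(f)=(f(j)-1,\omega^j)$ for a homomorphism $f$ with unique maximum at position $j$, and then $\phi(m)=\{h(f):f\le m\}$. Monotonicity is then automatic (enlarging $m$ can only enlarge the set), and the work shifts to checking that $\phi(m)$ is always a face of $L_4$, i.e., that no two homomorphisms below the same $m$ produce elements of $[3]\times\ints_3$ with equal first coordinate and different second coordinate. Your argument would be repaired by making the same move: map into $\mathcal F$ rather than into $[3]\times\ints_3$.
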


\begin{proof}
  In order to provide the required equivariant map, we will provide an equivariant simplicial map to the barycentric subdivision of $L_4$. Moreover, since both $\homLO4$ and the barycentric subdivision of $L_4$ can be expressed as order complexes (barycentric subdivision is the order complex of the face poset of $L_4$; see \cite[Section 1.7]{Mat03}), it is enough to provide a monotone equivariant map $\phi\colon \mhom(\ott, \LO_4) \to \mathcal F$, where $\mhom(\ott, \LO_4)$ is ordered by inclusion and $\mathcal F$ is the face poset of $L_4$.

  The face poset $\mathcal{F}$ of $L_4$ can be described explicitly as follows: Its elements are those subsets $F\subseteq [3]\times \Z_3$ such that there are no two elements with the same first coordinate, i.e.,
  \[
      \mathcal{F} = \{F\subseteq [3]\times \Z_3 \mid
      \text{if $(i, \omega^\alpha)$ and $(j, \omega^\beta)$ are two distinct elements of $F$, then $i \neq j$} \}
  \]
  ordered by inclusion.

  First, we define an auxiliary function $h \colon \hom(\ott, \LO_4) \to [3] \times \ints_3$ by
  \[
    h(f) = (f(j) - 1, \omega^j)
  \]
  where $j \in [3]$ is such that $f(j)$ is the unique maximum of $f$. Using this function, we define $\phi$ by
  \[
    \phi(m) = \{h(f) \mid f\in \hom(\ott, \LO_4), f \leq m\}.
  \]
  It is straightforward to check that $\phi$ is monotone and equivariant. We will prove that it is well-defined, i.e., that $\phi(m) \in \mathcal F$.

  Observe that $F\in \cF$ if and only if $X\in \cF$ for any $X\in \binom{F}{2}$; hence, to show that $\phi(m) \in \mathcal F$, it is enough to check that $\{ h(f_0), h(f_1) \}\in \cF$ for any two distinct homomorphisms $f_0, f_1 \leq m$.
  Furthermore, $\phi(m)\notin \cF$ if and only if $f_0$ and $f_1$ share a maximum, but the maximum is attained at a different element. Say $f_0(1) = f_1(2)$ is the unique maximum, then $g \leq m$ defined by $g(1) = f_0(1)$, $g(2) = f_1(2)$, $g(3) = f_0(3)$ is not a homomorphism, which yields the contradiction.
\end{proof}

\subsection{Constructing Eilenberg-MacLane spaces}
\label{sec:eilenberg_maclane}

The goal of this section is to build explicitly our target space $P^2$.
However, the bulk of our work will be to show that the standard methods from obstruction theory can be adapted to work in our equivariant setting without fundamental alterations.
This task has been already carried out in much bigger generality by Bredon in \cite{Bre67}; the explicit construction in section \ref{sec:eilenberg} will be simply a (slightly modified) presentation of the general construction.

It is still useful to build the space $P^2$ and carry out all the computations explicitly for our context: in this way we can maintain a finer control over the topological structure of the space; control that we will use to construct explicitly the minion isomorphism in Appendix~\ref{app:topology}.

Before exploring the equivariant version of obstruction theory, it might be useful to first invest a bit of time in outlining the basic results in the non-equivariant setting.

\subsubsection{Beyond Simplicial Complexes: CW-complexes}

While the idea of simplicial complex is a powerful tool that allow for a discretization of topological spaces, for the purpose of homotopy theory they are often too rigid and unwieldy.
Instead, the concept of a CW-complex is much more natural and easy to work with.
There are multiple equivalent definitions of what a CW-complex is, we will present a constructive definition that is well suited to our use case. For a more complete and thorough presentation, we refer to \cite{Hat02}.

The basic step in constructing a CW-complex is the gluing of a disc: suppose we are given a space $Y$, then we can use a map $\phi\colon S^n\to Y$ (called attaching map) to define a new space
\[
  Z = \faktor{Y\cup D^{n+1}}{\sim}
\]
where $x\sim y$ if $x\in S^n$ and $y = \phi(x)$. The space $Z$ is usually denoted as $Y \cup_{\phi} D^{n+1}$.
Geometrically, we are identifying the boundary of the disc with the corresponding (via $\phi$) point on $Y$.

\begin{definition}
  A \emph{CW-complex (of finite type)} is the union $X$ of spaces $\sk_d X$ defined inductively as follows:
  We start with $\sk_0 X = \{p_0, \dots , p_k\}$, which is a disjoint union of points.
  Then, if we are given $\sk_d X$, we can attach a finite number of $(d+1)$-discs $D_1^{d+1}, \dots, D^{d+1}_\ell$ along some attaching maps $\phi_i\colon \partial D_i^{d+1} = S^d\to \sk_d X$ to obtain the new space $\sk_{d+1} X$ that extends $sk_d X$.
  By construction, $\sk_0 X\subseteq \sk_1 X\subseteq \dots \subseteq \sk_d X\subseteq \dots$ hence it is possible to define the union
  \[
    X = \bigcup_{d = 0}^\infty \sk_d X
  \]

  For any $d\geq 0$, the space $\sk_d X$ is $d$-dimensional and it is called \emph{$d$-skeleton} of $X$.
  A CW-complex $X$ is $n$-dimensional if there is $n\geq 0$ such that $\sk_i X = \sk_{i+1} X$ for any $i\geq n$ (i.e., no cells of dimension higher than $n$ were used in ``building'' $X$).
\end{definition}

\begin{example}
  \begin{itemize}
    \item $S^n$ is a CW-complex by gluing a $n$-disc to a point, or by gluing, for all $k\leq n$ two $k$-discs to an ``equatorial'' $S^{k-1}$. %
    \item Every simplicial complex is also a CW-complex: any $k$-simplex in a simplicial complex $\complex L$ can be seen as glued to its faces via linear maps.
    \item If $X$ and $Y$ are CW-complexes, then $X\times Y$ is again a CW-complex.
  \end{itemize}
\end{example}

\begin{definition}
  A map $f\colon X\to Y$ between CW-complexes is \emph{cellular} if for all $d\geq 0$
  \[
    f(\sk_d X)\subseteq \sk_d Y.
  \]
\end{definition}

CW-complexes are the natural way to ``discretize'' homotopy theory:  this is because they are much more flexible than simplicial complexes while still giving an manageable description of the available maps thanks to the Cellular Approximation Theorem \cite[Theorem 4.8]{Hat02}.

\begin{theorem}[Cellular Approximation Theorem]
  Let $X$, $Y$ be CW-complex and $f\colon X\to Y$ a continuous map. Then there is a cellular map $\tilde f\colon X\to Y$ that is homotopic to $f$.

  Furthermore, if there is a subcomplex $A$ in $X$ where $f$ is already cellular then the homotopy can be taken to be stationary on $A$, i.e., $h(t, x) = f(x)$ for any $x\in A$, $0\leq t\leq 1$.
\end{theorem}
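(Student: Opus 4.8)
The plan is to prove the relative assertion directly (the first assertion is the case $A=\emptyset$), by induction over the skeleta of $X$. I would build maps $f=f_{-1},f_0,f_1,\dots$ and homotopies $f_{d-1}\sim f_d$ that are stationary on $\sk_{d-1}X\cup A$, such that $f_d$ is cellular on $\sk_d X\cup A$, and then assemble $f\sim\tilde f$ by the telescoping trick (run the $d$-th homotopy during the interval $[1-2^{-d},1-2^{-d-1}]$); this is continuous and has cellular terminal value because each cell, being compact, is touched by only finitely many stages and $f_d$ already agrees with $\tilde f$ on $\sk_d X$. To pass from $f_{d-1}$ to $f_d$ I would use the homotopy extension property of the CW pair $(\sk_d X\cup A,\ \sk_{d-1}X\cup A)$: it suffices to prescribe the deformation cell by cell, i.e.\ for every $d$-cell $e$ of $X$ not in $A$, with characteristic map $\Phi_e\colon D^d\to X$, to homotope $f_{d-1}\circ\Phi_e$ \emph{rel} $\partial D^d$ (whose image lies in $\sk_{d-1}Y$ since $f_{d-1}$ is already cellular there) to a map with image in $\sk_d Y$. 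These cellwise homotopies are compatible because they are stationary on the boundary spheres.

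So everything reduces to the following local statement: a continuous $g\colon D^d\to Y$ with $g(\partial D^d)\subseteq\sk_{d-1}Y$ is homotopic rel $\partial D^d$ to a map landing in $\sk_d Y$. As $D^d$ is compact, $g(D^d)$ meets only finitely many open cells of $Y$; let $k$ be the largest dimension that occurs, so $g(D^d)\subseteq\sk_k Y$. If $k\le d$ we are done, so assume $k>d$ and reduce $k$ by one. Only finitely many $k$-cells $e^k_1,\dots,e^k_r$ are met by $g$; working inside $\sk_k Y$, where the $k$-cells are top-dimensional and hence the complement of one open $k$-cell is a subcomplex $W_i$, the crux is that $g$ can be homotoped rel $g^{-1}(W_i)$ to a map omitting some point $p_i$ in the open cell $e^k_i$ --- and, treating $i=1,\dots,r$ successively (each later homotopy being stationary on $g^{-1}(\overline{e^k_j})$ for $j<i$, since $\overline{e^k_j}\subseteq W_i$), we obtain a map missing $p_1,\dots,p_r$. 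Finally $\sk_k Y\setminus\{p_1,\dots,p_r\}$ deformation retracts onto $\sk_k Y$ minus the open cells $e^k_1,\dots,e^k_r$, by pushing radially out of each $e^k_i$; composing $g$ with this retraction (which fixes $\partial D^d$, as $g(\partial D^d)\subseteq\sk_{d-1}Y$, and whose image meets no $k$-cell) lands $g$ in $\sk_{k-1}Y$, completing the reduction.

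The only genuinely nontrivial ingredient --- and hence the main obstacle --- is the ``omit a point'' claim, for which I would use a relative smooth (or piecewise-linear) approximation. Identify the open cell $e^k_i$ with $\R^k$, pick a closed ball $B\subseteq e^k_i$, and set $K=g^{-1}(B)$, a compact subset of $D^d$ that is disjoint from $g^{-1}(W_i)$; on a neighbourhood of $g^{-1}(\tfrac12 B)$ one can homotope $g$, supported in $\operatorname{int}K$ (hence rel $g^{-1}(W_i)$) and keeping the image in $\overline B\subseteq\sk_k Y$, to a map that is smooth there. A smooth map from an open subset of $\R^d$ cannot be onto $\operatorname{int}(\tfrac12 B)\cong\R^k$ when $k>d$, since by Sard's theorem its image has measure zero, so it misses a point $p_i\in\operatorname{int}(\tfrac12 B)$. (Analysis can be bypassed by subdividing $D^d$ finely and taking a simplicial approximation into a triangulation of $\overline B$: an at most $d$-dimensional simplicial image cannot cover a $k$-simplex for $k>d$.) The addendum on $A$ then needs no extra argument: no homotopy constructed above ever alters $f$ on a cell contained in $A$, so all of them --- and the telescoped composite --- are stationary on $A$.
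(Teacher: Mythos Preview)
Your proof is correct and follows the standard textbook argument (essentially Hatcher's). Note, however, that the paper does not prove this theorem at all: it merely states it and cites \cite[Theorem~4.8]{Hat02}, so there is no ``paper's own proof'' to compare against. Your sketch is precisely the one a reader would find by following that citation.
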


This means that, as far as homotopy theory is concerned, every map between CW-complex is a cellular map.

\begin{remark}
  In a CW-complex, the $d$-skeleton gives generators and the $d+1$ skeleton gives relations for the $d$-homotopy group.
  Formally, let $X$ a connected CW-complex with base point $x_0\in X$, pick $d\geq 0$ and let $\iota_d: \sk_d X \hookrightarrow X$. Then $(\iota_d)_*:\pi_d(\sk_d X, x_0)\to \pi_d(X, x_0)$ is surjective while $(\iota_{d+1})_*:\pi_d(\sk_{d+1}X, x_0)\to \pi_d(X, x_0)$ is an isomorphism.
\end{remark}

\subsubsection{Classical versus equivariant obstruction theory}

One of the most important concept at the core of obstruction theory is the definition of Eilenberg-MacLane spaces, spaces that have exactly one non-zero homotopy group in a prescribed dimension and isomorphism class.

\begin{definition}
  Let $G$ be a group and $n\geq 1$ (assume that $G$ is Abelian if $n\geq 2$). Then a space is an Eilenberg-MacLane space $K(G, n)$ with base point $x_0\in K(G, n)$ if it is connected and
  \[
      \pi_k(K(G, n), x_0) \cong \begin{cases}
        G & \text{if $k = n$,}\\
        0 & \text{otherwise.}
      \end{cases}
  \]
\end{definition}

It is a standard homotopic argument to show that there is an Eilenberg-MacLane space for any possible choice of $G$ and $n$ (see, e.g., \cite[Section 4.2]{Hat02}).
The key property that we are interested in is the following well-known representability result (see, e.g., \cite[Theorem 4.57]{Hat02}).

\begin{theorem}
  Let $X$ be a CW complex with base point $x_0\in X$ and $G$ an Abelian group.
  There is a natural isomorphism\footnote{To be precise, $K({-}, n)$ is a functor from (Abelian) groups into pointed topological spaces up to base point preserving homotopy equivalences and the stated isomorphism is natural both in $X$ and $G$.} between base point preserving maps up to base point preserving homotopies and cohomology of the space
  \[
      [X, K(G, n)]_{*} \cong H^n(X; G).
  \]
\end{theorem}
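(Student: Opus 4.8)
The plan is to run the standard obstruction-theoretic argument in the spirit of \cite[Section 4.3]{Hat02}. First I would fix a convenient CW model of $K = K(G,n)$: since $K$ is $(n-1)$-connected we may, by the Cellular Approximation Theorem and the Remark above, take $\sk_{n-1} K$ to be a single point, attach $n$- and $(n+1)$-cells presenting $G = \pi_n(K)$, and attach higher cells to kill $\pi_{n+1}(K), \pi_{n+2}(K), \dots$. By the Hurewicz theorem $H_m(K) = 0$ for $0 < m < n$ and $H_n(K) \cong G$, so the Universal Coefficient Theorem gives $H^n(K;G) \cong \operatorname{Hom}(H_n(K),G) \cong \operatorname{Hom}(G,G)$; let $\iota \in H^n(K;G)$ be the \emph{fundamental class}, the element corresponding to $\operatorname{id}_G$. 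Define
\[
  \Phi\colon [X,K]_* \to H^n(X;G), \qquad \Phi([f]) = f^*\iota,
\]
which is well defined (homotopic maps agree on cohomology) and natural in $X$; it then remains to prove that $\Phi$ is a bijection.

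For surjectivity, given $\alpha \in H^n(X;G)$ I would pick a cellular cocycle $z \in Z^n(X;G) \subseteq \operatorname{Hom}(C_n(X), G)$ representing it and build $f\colon X \to K$ skeleton by skeleton: collapse $\sk_{n-1} X$ to the basepoint, and on each $n$-cell $e$ of $X$ --- whose boundary is already at the basepoint --- let $f|_e$ represent $z(e) \in G = \pi_n(K)$. The primary obstruction to extending $f$ over $\sk_{n+1} X$ is the cochain sending an $(n+1)$-cell with attaching map $\psi$ to $[f\circ\psi] \in \pi_n(K)$, and one checks this cochain equals $\pm\,\delta z = 0$; over cells of dimension $m \ge n+2$ the obstruction lies in $C^m(X;\pi_{m-1}(K)) = 0$. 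Hence $f$ is defined on all of $X$, and unwinding the construction shows that the cellular $n$-cochain representing $f^*\iota$ is exactly $z$, so $\Phi([f]) = \alpha$.

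For injectivity, suppose $f_0^*\iota = f_1^*\iota$. I would apply the same analysis to the pair $(X\times[0,1],\, A)$, where $A = X\times\{0,1\}\cup\{x_0\}\times[0,1]$, starting from the partial map $F$ that is $f_0$ on $X\times\{0\}$, $f_1$ on $X\times\{1\}$, and constant on $\{x_0\}\times[0,1]$; an extension of $F$ over $X\times[0,1]$ is precisely a based homotopy $f_0\simeq f_1$. Because $\pi_m(K) = 0$ for $m < n$, the map $F$ extends over the relative $n$-skeleton, and the sole remaining obstruction is a class in $H^{n+1}\!\big(X\times[0,1],A;\,G\big) \cong H^n(X;G)$ which, under this suspension isomorphism, equals $f_1^*\iota - f_0^*\iota = 0$; the higher obstructions vanish as before. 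So $F$ extends, giving $f_0\simeq f_1$ rel $x_0$, and $\Phi$ is injective. Naturality in $G$ is immediate since $\iota$ is natural in the coefficient group.

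The main obstacle is the identification of the \emph{topological} obstruction cochains with \emph{algebraic} data in the cellular cochain complex: that the primary obstruction in the surjectivity step is the coboundary $\delta z$, and that the difference obstruction in the injectivity step maps to $f_1^*\iota - f_0^*\iota$ under the isomorphism $H^{n+1}(X\times[0,1],A;G)\cong H^n(X;G)$. Both reductions rest on comparing the cellular boundary maps of $X$ (resp.\ of $X\times[0,1]$) with the relevant attaching maps, using the description from the Remark above of how $n$- and $(n+1)$-cells furnish generators and relations for $\pi_n$; this careful bookkeeping is exactly the content of the classical treatment in \cite[Section 4.3]{Hat02}, which I would import rather than reprove. (Alternatively, one can deduce the statement abstractly from Brown representability, but the obstruction-theoretic proof is the one we will adapt to the equivariant setting below.)
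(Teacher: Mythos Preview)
Your argument is correct and is the standard obstruction-theoretic proof of this classical representability result. However, note that the paper does not actually prove this theorem: it is stated as a ``well-known representability result'' with a citation to \cite[Theorem~4.57]{Hat02} and is used as background input to motivate the equivariant analogue (Theorem~\ref{thm:obstruction}). So there is no paper proof to compare against; your sketch is essentially the argument one finds in the cited reference, and importing it (as you do at the end) is exactly what the paper does as well.
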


Therefore, it is natural to try and use an equivariant version of Eilenberg-MacLane spaces to obtain a similar isomorphism for equivariant homotopy classes and a suitable version of equivariant cohomology classes.

The main difference in the equivariant setting is that there is no map that naturally plays the role of ``0'' as the constant map does for the classical setting; as a result, to obtain the desired isomorphism it is necessary to first choose a ``reference'' map (see \cite[Theorem II.3.17]{Die87}):\footnote{This theorem is shown also in \cite{Bre67} under slightly different topological hypothesis that makes it awkward to use in our setting.}

\begin{restatable}{theorem}{obstruction}\label{thm:obstruction}
  Let $X$ be a CW-complex with a free $\Z_3$ action, let $G$ be an Abelian group with an action of $\Z_3$ and let $K(G, n)$ be an Eilenberg-MacLane space with a free $\Z_3$ action that induces on $\pi_n(K(G, n))\cong G$ the same $\Z_3$-structure.

  Then, given an equivariant map $f: X \to K(G, n)$, there is an isomorphism
  \[
      \phi_f\colon \left[X, K(G, n)\right]_{\Z_3}\to H_{\Z_3}^n(X; G)
  \]
  where $H_{\Z_3}^n\left(X; G\right)$ is the Bredon cohomology, defined below, in dimension $n$ ($\phi_f$ in general depends on the equivariant homotopy class of $f$).
\end{restatable}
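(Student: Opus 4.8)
The plan is to mimic the classical proof that $[X, K(G,n)]_* \cong H^n(X;G)$, but to carry it out cellularly in the equivariant category, using the reference map $f$ to trivialise the ambiguity that arises because there is no canonical basepoint-like map. First I would fix an equivariant CW-structure on $X$ (possible since the $\Z_3$-action is free, so $X/\Z_3$ is a CW-complex and $X\to X/\Z_3$ is a covering) and a compatible equivariant cellular model of $K(G,n)$. By the equivariant Cellular Approximation Theorem every equivariant map $X\to K(G,n)$ is equivariantly homotopic to a cellular one, and every equivariant homotopy may be taken cellular, so it suffices to work with equivariant cellular maps and cellular homotopies throughout. Because $\pi_k(K(G,n))=0$ for $k<n$, any equivariant cellular map is determined up to equivariant homotopy by its restriction to $\sk_{n}X$ relative to $\sk_{n-1}X$; and because $\pi_k(K(G,n))=0$ for $k>n$, two cellular maps agreeing up to homotopy on $\sk_{n+1}X$ are equivariantly homotopic on all of $X$. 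This reduces the classification to a relative problem on the $n$- and $(n+1)$-cells.

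Next I would set up the comparison with Bredon cohomology. Given an equivariant cellular map $g\colon X\to K(G,n)$, on each $n$-cell $e$ the composite $S^n \simeq \partial D^{n+1}_{\text{dummy}} \to \dots$ — more precisely, the attaching data together with $g|_{\sk_n X}$ collapsed modulo $\sk_{n-1}X$ — determines an element of $\pi_n(K(G,n))\cong G$, and equivariance of $g$ forces these assignments to be compatible with the $\Z_3$-action permuting the cells; thus $g$ yields an equivariant cellular $n$-cochain, i.e.\ a Bredon $n$-cochain. The condition that $g$ extends over the $(n+1)$-cells is exactly the condition that this cochain is a cocycle (the coboundary, evaluated on an $(n+1)$-cell, records the obstruction to extending, living in $\pi_n = G$), and changing $g$ by an equivariant homotopy on $\sk_n X$ rel $\sk_{n-1}X$ changes the cochain by an equivariant coboundary. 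This is the standard difference-cochain bookkeeping; the point where the reference map $f$ enters is that the correspondence $g \mapsto$ (cochain) is not canonical but \emph{affine}: only the \emph{difference} of the cochains of $g$ and of $f$ is well-defined as an absolute equivariant cocycle. Hence one defines $\phi_f([g]) = [\text{difference cocycle of } g \text{ and } f] \in H^n_{\Z_3}(X;G)$, and $\phi_f([f]) = 0$.

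To finish I would verify that $\phi_f$ is well-defined (independent of the cellular representative and of the cellular homotopy, by the two reductions above), injective (if the difference cocycle of $g_0$ and $g_1$ is a coboundary, the difference-cochain calculus produces an equivariant homotopy on $\sk_n X$ rel $\sk_{n-1}X$, which then extends over all higher skeleta since $\pi_{>n}K(G,n)=0$), and surjective (given an equivariant $n$-cocycle $z$, build a cellular map equal to $f$ on $\sk_{n-1}X$ whose difference cochain on the $n$-cells realises $z$; the cocycle condition guarantees it extends over $\sk_{n+1}X$, and the vanishing of higher homotopy lets it extend over all of $X$). One should also check naturality-type compatibilities only to the extent needed later (that two reference maps in the same equivariant homotopy class give the same $\phi_f$). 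The main obstacle, and the place deserving the most care, is the equivariant refinement of the difference-cochain machinery: one must check that all the classical constructions — cellular approximation, the obstruction cocycle, the difference cochain, and the homotopy-extension arguments — can be performed $\Z_3$-equivariantly and simultaneously over orbits of cells, using freeness of the action so that no cell has nontrivial stabiliser and the coefficient system is the single $\Z_3$-module $G$ (rather than a genuine Bredon coefficient functor with several values). This is exactly the content abstracted in \cite{Bre67, Die87}; our task is to present it in the concrete finite free setting, which I would do by pulling everything back from $X/\Z_3$ equipped with the local system determined by the $\Z_3$-action on $G$, so that $H^n_{\Z_3}(X;G)$ is literally the cohomology of $X/\Z_3$ with that local coefficient system and the classical theory applies almost verbatim.
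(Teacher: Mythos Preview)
The paper does not prove this theorem; it is quoted from the literature (tom Dieck \cite[Theorem II.3.17]{Die87}, and \cite{Bre67}) and used as a black box. Your sketch is the standard obstruction-theoretic proof found in those references, and in outline it is correct: equivariant cellular approximation, reduction to the $n$- and $(n{+}1)$-skeleta using the vanishing of the other homotopy groups, and the affine bijection via difference cochains relative to the reference map $f$. The final remark --- that freeness of the action lets you identify Bredon cohomology with ordinary cohomology of $X/\Z_3$ with local coefficients given by the $\Z_3$-module $G$, so the classical argument applies almost verbatim --- is exactly the simplification that makes the free case tractable, and is consonant with how the paper treats Bredon cohomology in Section~\ref{sec:cohomology_def} (via the group ring $\Lambda = \Z[\Z_3]$).

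Two small points worth tightening if you flesh this out. First, the word ``isomorphism'' in the statement is loose: $[X,K(G,n)]_{\Z_3}$ is a priori only a pointed set (pointed by $[f]$), and you should either say ``bijection'' or invoke the equivariant $H$-space structure on $K(G,n)$ to equip the left side with a group structure; the paper only ever uses the bijection to count elements, so ``bijection'' suffices. Second, your description of how a cellular map $g$ yields a cochain (``attaching data together with $g|_{\sk_n X}$ collapsed modulo $\sk_{n-1}X$'') is slightly garbled: the clean formulation is that $g$ sends $\sk_{n-1}X$ into $\sk_{n-1}K(G,n)$, so on each $n$-cell the induced map of quotients $\sk_n X/\sk_{n-1}X \to \sk_n K(G,n)/\sk_{n-1}K(G,n)$ gives, cell by cell, an element of $\pi_n(K(G,n))\cong G$. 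With those cosmetic fixes your argument is the textbook one.
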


It is now clear what we want to achieve: we want to construct an equivariant Eilenberg-MacLane space $P^2 = K(G, n)$, for some $n\geq 1$ and Abelian group $G$, with a free $\Z_3$-action such that $L_4\to K(G; n)$ equivariantly, and compute the Bredon cohomology module $H_{\Z_3}^n(T^r; G)$ for all $r\geq 1$.

\subsubsection{\texorpdfstring{Constructing $P^2$}{Constructing the space}}\label{sec:eilenberg}

We will build the space $P^2$ by inductively gluing higher dimensional cells on the lower dimensional skeleta.

We start with a $2$-dimensional CW-complex $Y_2$ with three vertices $\{v_0, v_1, v_2\}$, three edges $\{e_0, e_1, e_2\}$ and three $2$-discs $\{d_0, d_1, d_2\}$ glued according to the following maps:
\begin{align*}
  \phi_i\colon \partial e_i &\rightarrow \sk_0(Y_2)\\
  \phi_i(\partial e_i) &= v_{i+1} - v_i
\end{align*}
and
\begin{align*}
  \psi_i\colon \partial d_i &\rightarrow \sk_1(Y_2)\\
  \psi_1(\partial d_i) &= e_i + e_{i+1} + e_{i+2}.
\end{align*}

Geometrically, $Y_2$ is a sphere $S^2$ with a disc glued along an equator. The $\Z_3$-action is also geometrically pretty simple: on the equator ($S^1 = \sk_1Y_2$) the generator $\omega$ acts as a rotation by $2\pi/3$, while it cyclically exchanges the three discs around.

Formally, if $R\colon D^2\to D^2$ is the rotation by $2\pi/3$ around the origin, then the generator $\omega$ acts cellularly as (the addition in the indexes is considered modulo $3$):
\begin{align*}
  \omega\cdot v_i &= v_{i+1}\\
  \omega\cdot e_i &= e_{i+1}\\
  \omega\cdot d_i &= R(d_{i+1})
\end{align*}

\begin{lemma}
  The space $Y_2$ is simply connected and the action on it is free.
\end{lemma}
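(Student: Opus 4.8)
The plan is to verify the two claims separately: freeness of the $\Z_3$-action on $Y_2$, and simple-connectedness of $Y_2$.

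\textbf{Freeness of the action.} First I would observe that $Y_2$ is built from $9$ cells in three $\Z_3$-orbits: $\{v_0,v_1,v_2\}$, $\{e_0,e_1,e_2\}$, and $\{d_0,d_1,d_2\}$. The generator $\omega$ acts on each orbit by a cyclic shift of the index (composed with a rotation $R$ of the disc in the case of the $2$-cells), so $\omega$ sends each open cell to a \emph{different} open cell. Since a point of $Y_2$ lies in a unique open cell, a fixed point of $\omega$ would have to lie in a cell fixed setwise by $\omega$; but no cell is fixed, so $\omega$ (and hence $\omega^2$) has no fixed point, and the action is free. One technical point to be careful about: one should check that the attaching maps are genuinely equivariant, so that the $\Z_3$-action is well-defined on the quotient; this is immediate from the displayed formulas $\phi_i(\partial e_i) = v_{i+1}-v_i$ and $\psi_i(\partial d_i)=e_i+e_{i+1}+e_{i+2}$, which are visibly permuted consistently by the index shift.

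\textbf{Simple-connectedness.} I would use the cellular description directly. The $1$-skeleton $\sk_1 Y_2$ is a triangle (three vertices, three edges $e_i$ from $v_i$ to $v_{i+1}$), which is homotopy equivalent to $S^1$; its fundamental group is infinite cyclic, generated by the loop $\gamma = e_0 + e_1 + e_2$ (based at $v_0$, say). Attaching the $2$-cell $d_0$ along the map $\psi_0$, whose boundary traverses exactly $e_0+e_1+e_2 = \gamma$, kills this generator by van Kampen's theorem. Hence already $Y_2$ with a single disc attached is simply connected, and attaching the remaining discs $d_1, d_2$ can only add relations, so $\pi_1(Y_2)=0$. (Alternatively, as noted in the text, $Y_2$ is $S^2$ with a disc glued along an equator; $S^2$ is simply connected, and gluing a $2$-cell along any loop preserves simple-connectedness, since by cellular approximation $\pi_1$ depends only on the $2$-skeleton and the new relation is redundant.)

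The proof is essentially a routine application of van Kampen and the cellular structure, so I do not anticipate a genuine obstacle; the only thing demanding mild care is making sure the $\Z_3$-action is well-defined and free at the level of the quotient space rather than just combinatorially on the indexing sets. Both verifications are short and I would present them in the order above: equivariance of the attaching maps, freeness via the cell-permutation argument, then $\pi_1$ via van Kampen on the $2$-skeleton.
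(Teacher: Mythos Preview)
Your proof is correct. The freeness argument is essentially the same as the paper's (the paper compresses it to one line: ``The action is cellular and no cell is mapped to itself, therefore it is free''), and your extra care about equivariance of the attaching maps is a reasonable addition.

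For simple-connectedness you take a genuinely different route. The paper first identifies the homotopy type of $Y_2$: contracting one of the three $2$-discs to a point gives $S^2\vee S^2$, whose $1$-skeleton is a single vertex, so cellular approximation forces every loop to be constant up to homotopy. You instead work directly with the given CW structure and apply van Kampen: the $1$-skeleton is a triangle with $\pi_1\cong\Z$ generated by $e_0+e_1+e_2$, and the attaching map of $d_0$ kills exactly that generator. Your argument is slightly more elementary in that it avoids the (easy but unstated) verification that the contraction of a disc really yields $S^2\vee S^2$; the paper's argument, on the other hand, gives the homotopy type for free, which is conceptually nice but not needed elsewhere. Either approach is perfectly adequate for this routine lemma.
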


\begin{proof}
  The action is cellular and no cell is mapped to itself, therefore it is free.
  Up to homotopy, $Y_2$ is equivalent to $S^2 \vee_{v_0} S^2$, the space obtained by glueing two spheres at one point (the equivalence contracts the ``middle'' disc of $Y_2$ to the vertex $v_0$).

  By the Cellular Approximation Theorem, every map $\gamma\colon (S^1, p)\to (Y_2, v_0)\sim (S^2\vee_{v_0}S^2, v_0)$ can be assumed to map in $\sk_1(S^2\vee_{v_0}S^2) = \{v_0\}$ and thus nullhomotopic.
\end{proof}

\begin{lemma} \label{lem:l-to-y}
  There is an equivariant map $L_4\to Y_2$.
\end{lemma}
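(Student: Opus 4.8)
The plan is to construct the equivariant map $L_4 \to Y_2$ directly on the level of CW-structures by sending the three "levels" of the poset $[3]\times\Z_3$ to the three vertices, edges, and (partly) discs of $Y_2$, in a way that is compatible with the $\Z_3$-actions on both sides. Recall that $L_4$ is the order complex of the poset on $[3]\times\Z_3$ where $(i,\omega^\alpha) < (j,\omega^\beta)$ iff $i<j$; so the "bottom level" consists of three incomparable vertices (the points $(1,\omega^\alpha)$), the "middle level" of three more (the $(2,\omega^\alpha)$), and the "top level" of three more (the $(3,\omega^\alpha)$), with every bottom vertex below every middle vertex below every top vertex. The simplices of $L_4$ are chains, so the maximal simplices are triangles of the form $\{(1,\omega^\alpha),(2,\omega^\beta),(3,\omega^\gamma)\}$.

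First I would observe that $L_4$ is (equivariantly) homotopy equivalent to its $1$-skeleton's "core": collapsing is not quite available, but the key point is that $L_4$ deformation retracts onto the subcomplex spanned by the bottom two levels, or more usefully, that the whole complex is simply connected and $2$-dimensional with an explicit handle on $\pi_2$. Rather than analyse $L_4$ abstractly, the cleaner route is to build the simplicial (or cellular) map $L_4 \to Y_2$ explicitly. Send each bottom vertex $(1,\omega^\alpha) \mapsto v_\alpha$, each middle vertex $(2,\omega^\alpha)\mapsto v_\alpha$ as well (or to a point on the edge $e_\alpha$ after subdividing), and each top vertex $(3,\omega^\alpha)$ to a point of the disc $d_\alpha$. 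The edges and triangles of $L_4$ are then mapped by taking the obvious fillings in $Y_2$: an edge between two vertices landing on the same $v_\alpha$ is mapped to the constant path, an edge going "up a level" is mapped into the corresponding $1$- or $2$-cell, and each triangle of $L_4$ is mapped into the union of an edge $e_\alpha$ and a disc $d_\alpha$ (which is contractible relative to its boundary, so the extension exists). Equivariance is built in because the generator $\omega$ acts on both sides by $\alpha \mapsto \alpha+1$ on the indices, and the assignments above respect that. One must, of course, check that the action on $Y_2$ involves the rotation $R$ on the discs, so the images of the discs of $L_4$ must be chosen $\omega$-consistently, i.e. pick one disc-filling for $d_0$ and propagate it by $R$; this is where a small amount of care is needed.

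The main obstacle, and the step I expect to absorb most of the work, is verifying that the map is well-defined and continuous across the shared faces of $L_4$ — in particular that the choices of fillings for the triangles agree on overlapping edges, and that the whole thing is genuinely equivariant rather than just equivariant "up to homotopy". Concretely, the triangles $\{(1,\omega^\alpha),(2,\omega^\beta),(3,\omega^\gamma)\}$ with $\alpha,\beta,\gamma$ not all equal force the map to extend over cells whose three vertices land on potentially different $v$'s and different discs, and one has to confirm that $Y_2$ is $2$-connected enough relative to its $1$-skeleton in the relevant places; since $\pi_1(Y_2)=0$ by the previous lemma, every loop in the image of $\partial(\text{triangle})$ bounds, so the extension over each triangle exists, and $2$-dimensionality means there is nothing higher to check. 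Assembling these extensions into a single continuous equivariant map then reduces to: (i) fix fillings on a set of representatives of the $\Z_3$-orbits of simplices, (ii) propagate by the action, (iii) note that since the action is free on cells (no cell of $L_4$ or $Y_2$ is fixed), there is no compatibility condition forced at a fixed cell, so the propagation is consistent. This yields the desired equivariant map $L_4\to Y_2$.
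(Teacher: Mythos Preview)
Your proposal is correct and, once you reach the final paragraph, converges on exactly the argument the paper gives: pick orbit representatives of the cells of $L_4$, map vertices into $Y_2$ arbitrarily, extend over edges using connectedness of $Y_2$, extend over each $2$-simplex using $\pi_1(Y_2)=0$, and propagate equivariantly using freeness of the $\Z_3$-action on the cells. That is the whole proof in the paper.

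The earlier part of your write-up, where you try to assign vertices explicitly by level (sending $(1,\omega^\alpha),(2,\omega^\alpha)\mapsto v_\alpha$ and $(3,\omega^\alpha)$ into the interior of $d_\alpha$), is an unnecessary detour: it is not wrong, but it creates the very compatibility headaches you then worry about (edges landing partly in disc interiors, matching fillings across shared faces). The paper sidesteps all of this by simply sending every vertex orbit-representative to $v_0$ and letting the obstruction-theoretic machinery (connectedness, then simple connectedness) do the rest. Your remark about deformation retracting $L_4$ onto two levels is also a red herring and should be dropped. If you delete the first two paragraphs and keep only the last one, you have the paper's proof.
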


\begin{proof}
  We want to build the map by induction on the skeleton of $L_4$, if there were no actions involved, this would be a straightforward obstruction theoretic argument.
  However, since the action on $L_4$ is free, it is enough to take care of one representative per orbit and ``translate'' along the orbit.
  We will carry out the argument explicitly since it exemplifies well the standard obstruction theoretic arguments.

  We start by picking a representative $V_i$ in every orbit of vertices of $L_4$ and define $f_0(V_i) = v_0$. Then, by equivariance there is a unique way to extend $f_0$ to the whole vertex set:
  given $V\in \sk_0 L_4$, there is a unique $\omega^\alpha$ and unique $V_i$ such that $\omega^\alpha\cdot V_i = V$, hence define $f_0(V) = v_\alpha$. The map is well defined on $\sk_0 L_4$ and it is also equivariant.

  Now, pick a representative $E_j$ for every orbit of edges in $L_4$ and let $V_j, W_j$ be its endpoints. The vertices $f_0(V_j)$ and $f_0(W_j)$ are already defined but, since $Y_2$ is connected, there is a path $\gamma_j$ connecting $f_0(V_j)$ and $f_0(W_j)$; hence we can map the edge $E_j$ along the chosen curve $\gamma_j$.
  Again, by equivariance we have a unique way to extend $f_1$ to $\sk_1 L_4$: if $\omega^\alpha E_j$ is an edge, its endpoints are $\omega^\alpha \cdot V_j, \omega^\alpha \cdot W_j$ and $\omega^\alpha \cdot \gamma_j$ is a path connecting their images; thus we can map the edge $\omega^\alpha \cdot E_j$ to the path $\omega^\alpha \cdot \gamma_j$.

  Finally, pick a representative $D_k$ for every orbit of triangles in $L_4$. The map on the boundary $\partial D_k \cong S^1$ is already defined so $f_1|_{\partial D_k}\colon S^1\to Y_2$ is a loop in $Y_2$. Since $Y_2$ is simply connected, $f_1|_{\partial D_k}$ extends to a map defined on the whole disc $f_2\colon D_k\cong D^2\to Y_2$.
  Once again, completing the definition of $f_2$ along the orbits in the same way as before we obtain an equivariant map $f = f_2\colon L_4 \to Y_2$, as desired.
\end{proof}

 We can now build the rest of the space $P^2$ by inductively glueing $(i+1)$-discs to the $i$-skeleton $Y_i$ obtained at the previous step.
 First, we fix $Y_3 = Y_2$, i.e., we don't glue any $3$-discs.
 We work out the details in the first step of the construction to exemplify the procedure.

 The goal is to glues cells to representatives of generators of $\pi_3(Y_3)$ so that they become nullhomotopic in the new space.
 By a result of Serre \cite{Ser53}, all the homotopy groups of a CW-complex (of finite type) are finitely generated so we can fix representatives $\gamma_0, \dots, \gamma_{d_4}$ for a set of generators $[\gamma_0], \dots, [\gamma_{d_4}]$ of $\pi_3(Y_3)$.

 We can glue three $4$-dimensional discs $\{D^4_{0, i}, D^4_{1, i}, D^4_{2, i}\}$ per generator $\gamma_i$ using as attaching maps $\gamma_i, \omega\cdot\gamma_i$ and $\omega^2\cdot \gamma_i$; the action extends naturally and so we obtain a new $\Z_3$-CW-complex
  \[
    Y_4 = \bigcup_{0\leq i \leq d_4} \left(Y_3\cup_{\gamma_i} D^4_{0, i}\cup_{\omega\cdot\gamma_i} D^4_{1, i} \cup_{\omega^2\cdot\gamma_i}  D^4_{2, i}\right)
  \]

  Since we glued disc of dimension $4$, $\sk_3 Y_4 = Y_3$; by cellular approximation, $\pi_i(Y_4)\cong \pi_i(Y_3)$ for $i\leq 2$, while $\iota_*: \pi_3(Y_3)\to \pi_3(Y_4)$ is surjective.
  For any $[\gamma]\in \pi_3(Y_3)$, $[\gamma] = \sum n_i [\gamma_i]$ hence $\iota_*([\gamma])=\sum n_i\iota_*([\gamma_i]) = 0$. Therefore, $\pi_3(Y_4) = 0$. What is more, $Y_4$ is still a finite CW-complex hence Serre theorem still applies and $\pi_\ell(Y_4)$ remains finitely generated for any $\ell$.
  Finally, the inclusion $\iota\colon Y_3\hookrightarrow Y_4$ is $\Z_3$-equivariant by construction and the action on $Y_4$ remains consistent with the action on the $3$-skeleton.

  Analogously, given $Y_k$ we can build the space $Y_{k+1}$ by fixing representatives $\gamma_0, \dots, \gamma_{d_k}$ for a set of generators of $\pi_k(Y_k)$ and gluing three $(k+1)$-discs $\{D^{k+1}_{0, i}, D^{k+1}_{1, i}, D^{k+1}_{2, i}\}$ per generator in a similar fashion:
\[
  Y_{k+1} = \bigcup_{ 0\leq i \leq d_k} \left(Y_k\cup_\gamma D^{k+1}\cup_{\omega\cdot\gamma} D^{k+1}\cup_{\omega^2\cdot\gamma} D^{k+1}\right)
\]

By cellular approximation, we have not changed anything in the homotopy groups in dimension less than $k$ and $\iota_*\colon \pi_k(Y_k)\to \pi_k(Y_{k+1})$ is surjective. This again implies that $\pi_k(Y_{k+1}) = 0$ and by Serre's Theorem every homotopy group remains finitely generated.

Finally, we define $P^2$ as the union of $Y_k$'s.

\begin{definition} \label{def:p2}
  Let $Y_k$ be as above. Using $Y_k\subseteq Y_{k+1}$ for all $k$, we define $P^2$ by
  \[
    P^2 = \bigcup_{k \geq 2} Y_k.
  \]
\end{definition}

By construction, $P^2$ is a CW-complex with $\sk_{k+1} P^2 = Y_{k+1}$ and consequently, for all $k \geq 1$,
\[
  \pi_k(P^2) \cong
  \begin{cases}
    \pi_2(Y_3) & \text{ if $k = 2$, and }\\
    0 & \text{ otherwise. }
  \end{cases}
  \]
We have constructed an Eilenberg-MacLane space $K(\pi_2(Y_2), 2)$ but $P^2$ has some additional properties that we can use (note that $Y_2 = Y_3$, and hence $\pi_2(Y_2) = \pi_2(Y_3)$).
In particular, $P^2$ has a free cellular $\Z_3$-action that acts transitively on vertices, edges and $2$-cells. Furthermore, $P^2$ does not have any $3$-cell in its CW-decomposition, which we will use later. We also have the following.

\begin{lemma} \label{lem:homlo4}
  There is an equivariant map $\homLO4 \to P^2$.
\end{lemma}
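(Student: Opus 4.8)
The plan is to obtain the desired map as a composition of equivariant maps that have already been constructed, so no new topological work is required. First I would invoke Lemma~\ref{lem:homlo-to-l}, which supplies a $\Z_3$-equivariant map $\alpha\colon \Hom(\ott, \LO_4)\to L_4$. Next I would invoke Lemma~\ref{lem:l-to-y}, which supplies a $\Z_3$-equivariant map $\beta\colon L_4\to Y_2$. Finally, recall from the construction of $P^2$ (Definition~\ref{def:p2}) that $Y_2 = Y_3 = \sk_3 P^2$, and that the $\Z_3$-action on $P^2$ was built so as to extend the action on each skeleton; hence the inclusion $\iota\colon Y_2\hookrightarrow P^2$ is $\Z_3$-equivariant. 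The composition $\iota\circ\beta\circ\alpha\colon \Hom(\ott, \LO_4)\to P^2$ is then equivariant, being a composition of equivariant maps, which is exactly what is claimed.

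The only point that warrants a sentence of justification is the equivariance of $\iota$: by construction, for each $k$ the space $Y_{k+1}$ is obtained from $Y_k$ by attaching orbits of cells, with the generator $\omega$ acting on the new cells by the prescribed permutation, so the action on $Y_{k+1}$ restricts to the action on $Y_k$; taking the union over $k$ shows that the $\Z_3$-action on $P^2$ restricts on $Y_2$ to the action considered in Lemma~\ref{lem:l-to-y}. Therefore $\iota$ commutes with the action of $\omega$, i.e.\ $\iota\circ\omega = \omega\circ\iota$.

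I do not expect a genuine obstacle here; the content of the lemma has been distributed across Lemmas~\ref{lem:homlo-to-l} and~\ref{lem:l-to-y} and the construction of $P^2$, and the present statement is simply the bookkeeping that glues them together. If anything, the one thing to be careful about is to state explicitly which $\Z_3$-actions are being matched at each stage (the cyclic action on $\ott$ inducing the action on $\Hom(\ott,\LO_4)$; the ``shift the $\Z_3$-coordinate'' action on $L_4$; the rotation-plus-cyclic-permutation action on $Y_2$; and its extension to $P^2$), so that the word ``equivariant'' means the same thing throughout the composition.
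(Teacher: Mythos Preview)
Your proposal is correct and follows exactly the same approach as the paper: compose the equivariant maps $\homLO4 \to L_4 \to Y_2 \to P^2$ coming from Lemmas~\ref{lem:homlo-to-l} and~\ref{lem:l-to-y} and the inclusion from Definition~\ref{def:p2}. The paper's proof is a one-line version of what you wrote; your extra paragraph justifying the equivariance of the inclusion $Y_2 \hookrightarrow P^2$ is a welcome clarification but not strictly needed.
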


\begin{proof}
  We have $\homLO4 \to L_4 \to Y_2 \to P^2$ where the first two maps follow from Lemmata~\ref{lem:homlo-to-l} and \ref{lem:l-to-y}, and the last one from the definition of $P^2$ (Definition~\ref{def:p2}).
\end{proof}

\section{\texorpdfstring{Computation of $[T^n, P^2]_{\ints_3}$}{Computation of the cohomology classes of maps}}\label{app:topology}

The goal of this appendix is to describe the minion $\hpol{S^1, P^2}$. In particular, we classify classes of equivariant maps $[T^n, P^2]_{\ints_3}$ by providing the following lemma which is a key step in proving Lemma~\ref{lem:minion-homomorphism}.

\begin{lemma} \label{thm:affine_minion}
  There is a minion isomorphism
  \[
    \mu\colon \affine_3  \to \hpol{S^1, P^2}.
  \]
\end{lemma}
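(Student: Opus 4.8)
The plan is to construct $\mu$ explicitly, verify it is a minion homomorphism, and then show it is bijective by a counting argument. First I would recall from the overview that $T^n = (S^1)^n$ carries the diagonal $\Z_3$-action (rotation by $2\pi/3$ on each factor), and that $P^2 = K(G,2)$ is an equivariant Eilenberg--MacLane space with a free $\Z_3$-action inducing the prescribed action on $G = \pi_2(P^2)$. By Theorem~\ref{thm:obstruction}, once we fix any equivariant reference map $f_0\colon T^n \to P^2$, there is a bijection $\phi_{f_0}\colon [T^n, P^2]_{\Z_3} \to H^2_{\Z_3}(T^n; G)$. So the first substantive task is the computation
\[
  H^2_{\Z_3}(T^n; G) \cong \Z_3^{n-1},
\]
using the Bredon cochain complex of $T^n$ with its standard free cellular $\Z_3$-CW structure (the $n$-fold product of the $3$-vertex, $3$-edge circle) and coefficients in the coefficient system determined by $G$. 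This is essentially a linear-algebra computation over $\Z_3[\Z_3]$; the key point is that freeness of the action makes the Bredon cochain groups free $\Z_3$-modules, so the computation reduces to ordinary cellular cohomology of $T^n$ with $\Z_3$-coefficients bookkept equivariantly, and one extracts the $\Z_3^{n-1}$ from the degree-$2$ part after accounting for the one relation coming from the action. As a consequence $|[T^n, P^2]_{\Z_3}| = 3^{n-1} = |\affine_3^{(n)}|$, so $\mu$ will be a bijection as soon as it is a well-defined injective minion homomorphism.

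Next I would build $\mu$ directly at the level of maps rather than through $\phi_{f_0}$. To an affine form $f(x_1,\dots,x_n) = \sum_i \alpha_i x_i$ with $\sum_i \alpha_i \equiv 1 \pmod 3$, associate the continuous map $\mu(f)\colon T^n \to P^2$ obtained as follows: regard $S^1$ as $\R/\Z$ with the $\Z_3$-action by $+\tfrac13$; the homomorphism $\Z_3^n \to \Z_3$ given by the $\alpha_i$ is realised by the map $T^n \to S^1$, $(t_1,\dots,t_n)\mapsto \sum_i \alpha_i t_i$, which is equivariant precisely because $\sum_i\alpha_i\equiv 1$; then compose with a fixed equivariant map $S^1 \to \sk_1 P^2 \subseteq P^2$ (the $1$-skeleton of $P^2$ is a $\Z_3$-equivariant circle, so such a map exists and is unique up to equivariant homotopy). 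One checks $\mu(f^\pi) \sim_{\Z_3} \mu(f)^\pi$ because the minor operation on affine forms corresponds, under the torus realisation, to precomposition with the coordinate map $T^m \to T^n$ induced by $\pi$, and this commutes on the nose with the construction. Hence $\mu$ is a minion homomorphism $\affine_3 \to \hpol{S^1,P^2}$.

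The remaining and, I expect, the most delicate step is injectivity of $\mu$: if $f \neq g$ then $\mu(f) \not\sim_{\Z_3} \mu(g)$. For this I would compute $\phi_{f_0}(\mu(f))$ for a convenient reference map (say $f_0 = \mu(p_1)$, the first projection) and show that the assignment $f \mapsto \phi_{f_0}(\mu(f))$ is an injective — in fact affine-linear bijective — map $\affine_3^{(n)} \to H^2_{\Z_3}(T^n;G) \cong \Z_3^{n-1}$. Concretely, the difference class $\phi_{f_0}(\mu(f)) - \phi_{f_0}(\mu(g))$ is the obstruction to an equivariant homotopy between $\mu(f)$ and $\mu(g)$, and under the torus model it is detected by how the two maps differ on the $2$-cells of $T^n$; since $\mu(f)$ factors through $S^1$, this difference is governed by the $n-1$ independent "slope differences" $\alpha_i - \beta_i$ subject to $\sum(\alpha_i-\beta_i)\equiv 0$, matching $\Z_3^{n-1}$ exactly. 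The main obstacle is getting the Bredon-cohomology computation and this identification of the obstruction class fully correct — in particular pinning down the coefficient system and the $\Z_3$-action on $G$ precisely enough to see that the map on cohomology is an isomorphism rather than merely an injection of the right cardinality. Once injectivity is established, $\mu$ is an injection between finite sets of equal size $3^{n-1}$ in every arity, hence a bijection, and being a minion homomorphism it is a minion isomorphism, completing the proof.
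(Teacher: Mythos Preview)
Your overall architecture matches the paper's --- construct $\mu$ via maps $T^n\to S^1\to P^2$, count via $H^2_{\Z_3}(T^n;G)\cong\Z_3^{n-1}$, conclude bijectivity --- but there is a real gap in your verification that $\mu$ preserves minors. You assert that taking a minor in $\affine_3$ corresponds to precomposing with the coordinate map $T^m\to T^n$ and that this ``commutes on the nose with the construction''. It does not: the minor in $\affine_3$ reduces the new coefficients $\beta_j=\sum_{i\in\pi^{-1}(j)}\alpha_i$ modulo~$3$, whereas torus precomposition produces the map $(t_1,\dots,t_m)\mapsto\sum_j\beta_j t_j$ with the \emph{integer} $\beta_j$. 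So $\mu(f)^\pi$ has exponent vector $(\beta_j)\in\Z^m$ while $\mu(f^\pi)$ has $(\beta_j\bmod 3)$; these are distinct maps already into $S^1$ and are not equivariantly homotopic there. The paper explicitly flags this pitfall in the Remark after defining monomial maps (``$\affine_3$ is \emph{not} a subminion'' of integer tuples with sum $\equiv 1\pmod 3$; e.g.\ the unary minor of $(2,2)$ disagrees). What you actually need is that monomial maps with coordinatewise congruent exponents become equivariantly homotopic \emph{after} passing to $P^2$, and this is not free --- it is essentially the classification you are trying to establish.

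The paper sidesteps the circularity by working with the inverse map. It defines an explicit cochain-level invariant $\deg_i(f)\in\Z_3$ that is constant on equivariant homotopy classes and satisfies $\deg_i(m_\alpha)=\alpha_i\bmod 3$, yielding a map $\gamma\colon\hpol{S^1,P^2}\to\affine_3$. One checks directly that $\gamma$ preserves minors on classes of monomial maps (Lemma~\ref{lem:gamma_is_homomorphism}), because on that side the integer exponents are available and the reduction mod~$3$ happens only at the end. The cohomology count then forces $\gamma$ to be a bijection, hence a minion isomorphism, and $\mu=\gamma^{-1}$ inherits the property formally. Your injectivity plan via $\phi_{f_0}$ is a viable alternative to the paper's elementary degree invariant, but it does not by itself rescue the minor-preservation step for $\mu$; you would still need to route the argument through the inverse.
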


Recall that $\affine_3$ is the minion of affine maps $f\colon \ints_3^n \to \ints_3$ such $f(1, \dots, 1) = 1$. Since each such map is uniquely described by its coefficients we get that there are in bijections with the $n$-tuples $(\alpha_1, \dots, \alpha_n) \in \ints_3^n$ such that $\sum_i \alpha_i = 1$. In this appendix, we view $\affine_3$ as an abstract minion whose elements are tuples of the form above, in particular, we have $\affine_3^{(n)} \subseteq \ints_3^n$.

The proof of Lemma~\ref{thm:affine_minion}, while straightforward, it is rather technical so it might be useful to outline the structure of the argument.
We start with describing a family of equivariant maps that are expressed as multivariate monomials over $\mathbb C$. In particular observe that each monomial restricts to a map $S^1 \to S^1$ (where $S^1 \subset \mathbb C$ is the unit circle), and moreover the monomials whose total degree is congruent to 1 modulo $3$ induce $\mathbb Z_3$-equivariant map.
These maps are then composed with a fixed embedding $S^1 \to P^2$ to obtain an equivariant map $T^n = \left(S^1\right)^n \to P^2$ (the choice of a concrete embedding is irrelevant since $P^2$ is simply connected).

\begin{definition}
  Fix an embedding $\iota\colon S^1 \hookrightarrow P^2$ (e.g., the inclusion of the $1$-skeleton), and let $n\geq 1$.
  We assigns to each tuple $\alpha \in \ints^n$ with $\sum_i \alpha_i = 1 \pmod 3$, a \emph{monomial map} $m_\alpha\colon T^n \to P^2$, defined by
  \[
    m_\alpha(z_1, \dots, z_n) = z_1^{\alpha_1}\cdots z_n^{\alpha_n}
  \]
\end{definition}

It is straightforward to check that each $m_\alpha$ is equivariant, indeed
\begin{multline*}
    m_\alpha\left(\omega\cdot(z_1, \dots, z_n)\right)
    = m_\alpha\left(\omega z_1, \dots, \omega z_n\right)
    = \iota(\textstyle \prod_j \omega^{\alpha_j}z_j^{\alpha_j})
    = \iota(\omega \textstyle \prod_j z_j^{\alpha_j}) \\
    = \omega \iota(\textstyle \prod_j z_j^{\alpha_j})
    = \omega\cdot m_\alpha(z_1, \dots, z_n)
\end{multline*}
where the third equality used that $\omega^{3k+1} = \omega$.

\begin{remark}
  It is not hard to observe that the assignment $\alpha \mapsto m_\alpha$ preserves minors when $\alpha$ is interpreted as a function $f\colon \ints^n \to \ints$. While we implicitly use this minion homomorphism, this is not the minion homomorphism we are looking for --- importantly, $\affine_3$ is \emph{not} a subminion of the minion of tuples $\alpha \in \mathbb Z^n$ with $\sum \alpha_i = 1 \pmod 3$ since, e.g., the unary minor of $(2, 2)$ disagrees in the two minions.
\end{remark}

We then prove that (in Section~\ref{sec:all_distinct}), for $\alpha, \beta \in \ints^n$, as long as $\alpha_i \neq \beta_i \pmod 3$, then $m_\alpha$ and $m_\beta$ are not equivariantly homotopic.
This then allow us to define an injective mapping $\mu\colon \affine_3 \to \hpol{S^1, P^2}$ --- we view each element $\alpha \in \affine_3^{(n)}$ as a tuple in $\ints$ with $\alpha_i \in \{0, 1, 2\}$.

In Section~\ref{sec:no_more}, we show that any equivariant map $T^n \to P^2$ is homotopic to some monomial map $m_\alpha$ for $\alpha \in \affine_3^{(n)}$. This is done by showing that $\#[T^n, P^2]_{\Z_3} = 3^{n-1}$ which we show via obstruction theory --- the bijection is given by a bijection between $[T^n, P^2]_{\ints_3}$ and the Bredon cohomology $H^2_{\ints_3}(T^n, G)$ (here use that $P^2$ is an Eilenberg-MacLane space), and a computation that gives that $H^2_{\ints_3}(T^n, G) \simeq \ints_3^{n-1}$.
Consequently, we get that $\mu$ is a bijection (or more precisely, that $\mu_n\colon \affine^{(n)} \to [T^n, P^2]_{\ints_3}$ is a bijection for all $n$).

Finally, we show that $\mu$ is a minion homomorphism, i.e., that it preserves minors. This is done by observing that the \emph{inverse} of $\mu$ (which is in fact described in Section~\ref{sec:all_distinct}) preserves minors. Similarly to homomorphisms of algebraic structures (e.g., groups), an inverse of a bijective minion homomorphism is a homomorphism as well.

\subsection{Monomial maps are all homotopically distinct}
  \label{sec:all_distinct}

Here, we prove the following:

\begin{lemma} \label{lem:B.3} \label{lem:all_distinct}
  Let $\alpha, \beta \in \affine_3^{(n)}$. If $m_\alpha$ is equivariantly homotopic to $m_\beta$, then $\alpha = \beta$.
\end{lemma}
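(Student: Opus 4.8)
The plan is to show the contrapositive in a convenient form: if $m_\alpha \sim_{\ints_3} m_\beta$ then $\alpha_i \equiv \beta_i \pmod 3$ for every $i$, whence $\alpha = \beta$ because elements of $\affine_3^{(n)}$ are tuples in $\{0,1,2\}^n$. Fix a coordinate $i \in [n]$. The key idea is to \emph{restrict} both maps to a one-dimensional $\ints_3$-invariant subspace of $T^n$ on which only the $i$-th coordinate varies, thereby reducing the question to the much more elementary problem of classifying equivariant maps $S^1 \to P^2$ (equivalently, since $P^2$ is simply connected, to classifying maps $S^1 \to P^2$ up to free homotopy together with a $\ints_3$-constraint). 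Concretely, I would consider the $\ints_3$-equivariant embedding $\delta_i\colon S^1 \to T^n$ that sends $z$ to the tuple with $z$ in position $i$ and the constant value $1$ in every other position — note this is equivariant because $\omega\cdot 1 = \omega \neq 1$, so strictly one should instead use the diagonal-type embedding $z \mapsto (z^{\gamma_1}, \dots, z^{\gamma_n})$ for a suitable fixed tuple $\gamma$ with $\sum_j \gamma_j \equiv 1$ and $\gamma_i$ chosen so that the composite detects $\alpha_i$; the cleanest choice is to precompose $m_\alpha$ and $m_\beta$ with the equivariant map $S^1 \to T^n$, $z \mapsto (1,\dots,1,z,1,\dots,1)$ after first observing that equivariance of this map only requires the $\ints_3$-action on the domain to match a \emph{twisted} action on the product — so in practice I would restrict along an orbit-respecting circle and compute the resulting degree.

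The main computational content is then: the composite $m_\alpha \circ \delta$, where $\delta\colon S^1 \to T^n$ is an appropriate equivariant embedding isolating coordinate $i$, is (after the identification $S^1 \subset P^2$ via $\iota$, which is injective on $\pi_1$ up to the relations introduced by the $2$-cells) equivariantly homotopic to $m_\beta \circ \delta$ whenever $m_\alpha \sim_{\ints_3} m_\beta$; but the first composite is $z \mapsto \iota(z^{\alpha_i})$ and the second is $z \mapsto \iota(z^{\beta_i})$. So I must show these two maps $S^1 \to P^2$ are equivariantly homotopic only if $\alpha_i \equiv \beta_i \pmod 3$. Here the point is that although $\pi_1(P^2) = 0$ so the maps are freely homotopic for \emph{all} values of $\alpha_i, \beta_i$, the $\ints_3$-equivariance is a genuine constraint: an equivariant map $S^1 \to P^2$ lifts to an equivariant map from $S^1$ with its rotation action, and the obstruction to an equivariant homotopy lives in $H^1_{\ints_3}(S^1; \pi_1(P^2))$ — which vanishes — \emph{plus} a correction because the homotopy itself must be equivariant. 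The correct invariant is the $\ints_3$-equivariant degree / the class in $H^1_{\ints_3}(S^1; \ints)$ where $\ints$ carries the trivial action coming from $\pi_1(S^1 \hookrightarrow P^2)$; I expect this group to be $\ints_3$, and the class of $z \mapsto \iota(z^k)$ to be $k \bmod 3$. This is essentially the $n=1$ case of the general computation $H^2_{\ints_3}(T^n;G) \cong \ints_3^{n-1}$ carried out in Section~\ref{sec:no_more}, specialized and pushed down to dimension $1$; alternatively, it follows directly from the fact (to be invoked or proved) that two equivariant maps $S^1 \to S^1$ of the rotation-by-$2\pi/3$ action are equivariantly homotopic iff their degrees agree mod $3$, composed with the observation that $\iota_*\colon [S^1, S^1]_{\ints_3} \to [S^1, P^2]_{\ints_3}$ identifies degrees differing by a multiple of $3$ but nothing more (since the $2$-cells of $P^2$ are attached along degree-$3$ loops on the equatorial $S^1$).

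The hardest part will be making the restriction argument fully equivariant — i.e.\ pinning down the right equivariant circle $\delta\colon S^1 \to T^n$ through which to factor, and checking that restriction $[T^n,P^2]_{\ints_3} \to [S^1, P^2]_{\ints_3}$ together with the value computation genuinely recovers $\alpha_i \bmod 3$ rather than some combination of the coordinates. A clean way to sidestep the twisted-action subtlety is to use the minor operation of the minion instead: for the projection-type map $\pi\colon [n] \to [1]$ collapsing everything to the single coordinate $i$ is not available (it would not preserve the degree-sum condition), but one can use $\pi\colon [n]\to[2]$ sending $i \mapsto 1$ and all $j \neq i$ to $2$; then $m_\alpha^\pi \sim_{\ints_3} m_\alpha$ restricted appropriately, and $m_\alpha^\pi$ is a monomial in two variables with exponents $(\alpha_i, \sum_{j\neq i}\alpha_j) = (\alpha_i, 1-\alpha_i)$ mod $3$, so it suffices to prove the lemma for $n \le 2$ — and for $n \le 2$ it follows from the explicit Bredon-cohomology computation of Section~\ref{sec:no_more} together with the injectivity statement, or by a direct degree argument on the torus $T^2$ as in~\cite{KOWZ23}. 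I would present the reduction to $n\le 2$ first, then cite or give the low-dimensional computation.
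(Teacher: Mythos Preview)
Your proposal contains a fundamental obstruction that cannot be repaired. The core strategy---restrict $m_\alpha$ along an equivariant circle $\delta\colon S^1 \to T^n$ and read off $\alpha_i \bmod 3$---fails for a structural reason: $P^2$ is a $K(M,2)$, so by the very equivariant obstruction theory invoked in Section~\ref{sec:no_more} we have $[S^1, P^2]_{\ints_3} \cong H^2_{\ints_3}(S^1; M) = 0$, since $S^1$ is $1$-dimensional. Thus \emph{every} equivariant map $S^1 \to P^2$ is equivariantly nullhomotopic, and restriction to any equivariant circle in $T^n$ discards all information. This is not a technicality about choosing the right $\delta$ or twisting the action; it is intrinsic to the fact that the target has its first nontrivial homotopy in degree $2$. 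Your intuition that ``the $2$-cells of $P^2$ are attached along degree-$3$ loops, so degrees are detected mod $3$'' is pointing in the right direction, but it cannot be captured by a $1$-dimensional probe.

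Your reduction to $n\le 2$ via the minor $\pi\colon [n]\to[2]$ is correct and pleasant, but the base case is not established: appealing to ``the injectivity statement'' is circular, and the direct degree arguments of~\cite{KOWZ23} concern maps into $S^1$ (a $K(\Z,1)$), where a $1$-dimensional invariant suffices---that is precisely what breaks here. What the paper does instead is define, for each $i$, a cochain-level invariant $\deg_i(f) = \bigl(f^*(e^0)(x_i) + f^*(d^0)(b_i) + f^*(d^0)(B_i)\bigr)\bmod 3$, where $x_i$ is the $i$-th coordinate $1$-cycle and $b_i, B_i$ are explicit $2$-chains on $T^n$ filling $x_i - \omega\cdot x_i$ and $x_i - \omega^2\cdot x_i$. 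The presence of the $2$-chain terms is exactly what your approach is missing: they are what survives when one computes, under an equivariant homotopy $h$, that $\deg_i(f_0)-\deg_i(f_1) = 3\, h^*(d^0)(x_i) \equiv 0 \pmod 3$. One then checks directly that $\deg_i(m_\alpha) = \alpha_i \bmod 3$ (the $2$-chain terms vanish because the image of $m_\alpha$ lies in the $1$-skeleton). This gives the invariant you were looking for, but it genuinely lives in dimension $2$ and cannot be recovered by restricting to a circle.
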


The proof of the lemma is build on the following informal geometric intuition.
By cellular approximation, a homotopy between two maps $f_0, f_1\colon T^n\to P^2$ ``drags'' around the image of any of the coordinate circles (i.e., the image of $c_i\colon S^1 \hookrightarrow T^n$ defined by $c_i(x) = b$ where $b_i = x$ and $b_j$ is constant for each $j\neq i$) of the torus along some $2$-disc in $P^2$. However since it has to be an equivariant homotopy, it has to drag in the same way the rest of the orbit of the coordinate cycle equivariantly along the other discs in the image. Therefore the number of times a coordinate cycle wraps around can only change by a multiple of $3$.

The formalization of this idea, however, requires some work. In particular, we need a way to make the idea of a ``degree'' of a coordinate cycle precise:

\begin{definition}\label{def:degree}
    Let $e^j\in C^1(P^2)$ and $d^j\in C^2(P^2)$ be the dual of $e_j$ and $d_j$ respectively (i.e., $e^j(e_i) = 0$ if $i\neq j$, and $e^j(e_i) = 1$ if $i = j$). Denote by $x_i\in C_1(T^n)$ the $i$-th coordinate cycle in $T^n$ and $b_i, B_i\in C_2(T^n)$ fillings for $x_i - \omega\cdot x_i$ and $x_i - \omega^2\cdot x_i$ respectively.

    Then, the $i$-degree of an equivariant map $f\colon T^n\to P^2$ is
    \[
        \deg_i(f) = \left(f^*(e^0)(x_i) + f^*(d^0)(b_i) + f^*(d^0)(B_i)\right) \bmod 3
    \]
\end{definition}

A key observation is that this quantity does not change along equivariant homotopies:

\begin{lemma} \label{lem:B.4}
    Let $f_0, f_1: T^n\to P^2$ equivariant maps homotopic via an equivariant homotopy $h$. Then for all $i\in [n]$, $\deg_i(f_0) = \deg_i(f_1)$.
\end{lemma}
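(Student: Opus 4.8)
The plan is to reduce everything to cellular maps and then transport a single $2$-chain across the homotopy. First I would invoke equivariant cellular approximation: since the $\Z_3$-actions on $T^n\times[0,1]$ and on $P^2$ are free and cellular, we may assume that $f_0$, $f_1$, and $h$ are cellular (this is also what makes the cellular cochain expression defining $\deg_i$ meaningful, and $\deg_i(f_0),\deg_i(f_1)$ are unchanged when $h$ is replaced by a cellular equivariant homotopy with the same endpoints). Using the product CW-structure, so that $C_*(T^n\times[0,1])\cong C_*(T^n)\otimes C_*([0,1])$ and writing $I$ for the $1$-cell of $[0,1]$ oriented from $0$ to $1$, the central object is the $2$-chain
\[
  C \;=\; h_*(x_i\otimes I)\ \in\ C_2(P^2).
\]
Because $P^2$ has exactly the three $2$-cells $d_0,d_1,d_2$, we write $C=c_0d_0+c_1d_1+c_2d_2$ with $c_j\in\Z$; the whole proof then amounts to expressing $\deg_i(f_1)-\deg_i(f_0)$ through the $c_j$.

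For the edge contribution I would use that $\partial(x_i\otimes I)=x_i\otimes\{0\}-x_i\otimes\{1\}$ (as $\partial x_i=0$), so applying the chain map $h_*$ gives $\partial C=(f_0)_*(x_i)-(f_1)_*(x_i)$; pairing with the edge cochain $e^0$ yields $f_1^*(e^0)(x_i)-f_0^*(e^0)(x_i)=-(\delta e^0)(C)$. Since \emph{every} $2$-cell of $P^2$ has boundary $e_0+e_1+e_2$, we have $(\delta e^0)(d_j)=1$ for all $j$, so this term equals $-(c_0+c_1+c_2)$. For the two disc contributions I would use that $P^2$ has \emph{no} $3$-cells, so that $h_*(b_i\otimes I)=0$ and hence $h_*\big(\partial(b_i\otimes I)\big)=0$. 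Expanding $\partial(b_i\otimes I)=(\partial b_i)\otimes I+b_i\otimes\{1\}-b_i\otimes\{0\}$ with $\partial b_i=x_i-\omega\cdot x_i$, applying $h_*$, and using equivariance of $h$ in the form $h_*\big((\omega\cdot x_i)\otimes I\big)=\omega_*C$, one obtains $(f_1)_*(b_i)-(f_0)_*(b_i)=\omega_*C-C$. As $\omega$ permutes the three $2$-cells cyclically by an orientation-preserving rotation, $\omega_*d_j=d_{j+1}$ on chains, so pairing with $d^0$ gives $f_1^*(d^0)(b_i)-f_0^*(d^0)(b_i)=d^0(\omega_*C)-d^0(C)=c_2-c_0$; the identical argument with $B_i$ and $\partial B_i=x_i-\omega^2\cdot x_i$ gives $f_1^*(d^0)(B_i)-f_0^*(d^0)(B_i)=c_1-c_0$. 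Summing the three contributions,
\[
  \deg_i(f_1)-\deg_i(f_0)\;=\;-(c_0+c_1+c_2)+(c_2-c_0)+(c_1-c_0)\;=\;-3c_0\;\equiv\;0\pmod 3,
\]
which is the claim.

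The routine ingredients would be: equivariant cellular approximation (including the reduction of $h$ to a cellular equivariant homotopy rel its already-cellular ends); the Leibniz rule $\partial(a\otimes b)=\partial a\otimes b+(-1)^{|a|}a\otimes\partial b$ for the cellular chains of a product CW-complex; and the observation that $h_*$, being induced by an equivariant cellular map, commutes both with $\partial$ and with the $\omega$-action. The main obstacle here is conceptual rather than computational: the definition of $\deg_i$ — the specific combination of the edge cochain $e^0$ evaluated on the coordinate cycle $x_i$ together with the two $2$-disc fillings $b_i,B_i$ evaluated on $d^0$ — has to be arranged so that the $\omega$-orbit of the transported chain $C$ telescopes and the residue $-3c_0$ is manifestly a multiple of $3$. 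This cancellation is possible only because of the two structural properties of $P^2$ used above, namely that all its $2$-cells have the common boundary $e_0+e_1+e_2$ and that it has no $3$-cells; checking these properties is exactly where the explicit construction of $P^2$ in Section~\ref{sec:eilenberg} is used.
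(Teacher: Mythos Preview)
Your proof is correct and is essentially the paper's argument carried out on the chain side rather than the cochain side: your $c_j=d^j(C)$ is precisely the paper's $h^*(d^j)(x_i)$, where the paper writes $h^*$ for the prism operator $\alpha\mapsto\big(c\mapsto \alpha(h_*(c\otimes I))\big)$ satisfying $f_0^*-f_1^*=\delta h^*+h^*\delta$, and your final $-3c_0$ matches the paper's $3\,h^*(d^0)(x_i)$ up to sign. The two structural facts about $P^2$ you isolate (all $2$-cells share the boundary $e_0+e_1+e_2$; there are no $3$-cells) are exactly what the paper encodes as $\delta e^0=d^0+d^1+d^2$ and $\delta d^0=0$.
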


\begin{proof}
    On the cochains of dimension $1$ we have:
    \[
        f_0^*(e_0)(x_i) - f_1^*(e_0)(x_i) = (\delta h^*(e^0))(x_i) + (h^*\delta(e^0))(x_i) = 0 + h(d^0+d^1+d^2)(x_i)
    \]
    where the second equality is obtained by using the fact that $\partial x_i = 0$ and $\delta e^0 = d^0+d^1+d^2$.

    On the $2$-cochains we have:
    \begin{align*}
        f_0^*(d^0)(b_i) - f_1^*(d^0)(b_i) &= (\delta h^*(d^0))(b_i) + (h^*\delta(d^0))(b_i) = h^*(d^0)(x_i - \omega \cdot x_i) + 0\\
        f_0^*(d^0)(B_i) - f_1^*(d^0)(B_i) &= (\delta h^*(d^0))(B_i) + (h^*\delta(d^0))(B_i) = h^*(d^0)(x_i - \omega^2\cdot x_i) + 0
    \end{align*}
    where we use that $\partial b_i = x_i - \omega \cdot x_i$, $\partial B_i = x_i - \omega^2 \cdot x_i$ and $\delta d^0 = 0$.

    However, since the homotopy is equivariant, it has to commute with the action and thus $h^*(d^0)(\omega \cdot x_i) = h^*(\omega \cdot d^0)(x_i)$.
    Summing everything together, using this fact and that $\omega^i \cdot d^0 = d^i$, we obtain that
    \begin{align*}
        \deg_i(f_0) - \deg_i(f_1)
        &= h(d^0+d^1+d^2)(x_i)
          + h^*(d^0)(x_i - \omega \cdot x_i)
          + h^*(d^0)(x_i - \omega^2\cdot x_i) \\
        &= h(d^0)\left((x_i + \omega \cdot x_i + \omega^2 \cdot x_i)
          + (x_i - \omega \cdot x_i) + (x_i - \omega^2 \cdot x_i)\right) \\
        &= h^*(d^0)(3x_i) = 3 h^*(d^0)(x_i) = 0 \pmod 3. \qedhere
    \end{align*}
\end{proof}

Furthermore, we can show the following which will be used to prove that the mapping $\mu$ is injective, and also (later) to show that its ``inverse'' is a minion homomorphism.

\begin{lemma}\label{lem:monomial_degrees}
    Let $\alpha \in \Z^n$ such that $\sum_i \alpha_i = 1 \pmod 3$. Then
    \[
        \deg_i(m_\alpha) = \alpha_i \bmod 3
    \]
\end{lemma}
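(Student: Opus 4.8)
The plan is to compute each of the three terms in the definition of $\deg_i(m_\alpha)$ (Definition~\ref{def:degree}) directly from the fact that $m_\alpha$ is, on the $i$-th coordinate circle, the degree-$\alpha_i$ self-map of $S^1$ composed with the fixed embedding $\iota\colon S^1\hookrightarrow P^2$. First I would recall how $m_\alpha$ behaves cellularly: restricted to the coordinate cycle $x_i$ (with all other coordinates held at the basepoint $1\in S^1$), $m_\alpha$ factors as $S^1\xrightarrow{z\mapsto z^{\alpha_i}} S^1\xrightarrow{\iota} P^2$, and after cellular approximation the degree-$\alpha_i$ map of $S^1$ wraps the generator $\alpha_i$ times around the $1$-skeleton $\{e_0,e_1,e_2\}$ of $P^2$ (recall $Y_2$ has its $1$-skeleton a circle subdivided into the three edges $e_0,e_1,e_2$, on which $\omega$ rotates by $2\pi/3$). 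Hence at the level of cellular $1$-chains, $(m_\alpha)_*(x_i)$ is homologous to $\alpha_i(e_0+e_1+e_2)$ up to a boundary, but more precisely a cellular representative of $(m_\alpha)_*(x_i)$ can be taken to traverse each $e_j$ exactly $\alpha_i$ times (with sign), so that $(m_\alpha)^*(e^0)(x_i) = e^0\big((m_\alpha)_*(x_i)\big) = \alpha_i$.

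Next I would deal with the two $2$-chain terms $(m_\alpha)^*(d^0)(b_i)$ and $(m_\alpha)^*(d^0)(B_i)$, where $b_i,B_i\in C_2(T^n)$ are fillings of $x_i-\omega\cdot x_i$ and $x_i-\omega^2\cdot x_i$. The key point here is that $m_\alpha$ restricted to the subcomplex spanned by the orbit $\{x_i,\omega\cdot x_i,\omega^2\cdot x_i\}$ has image contained in a subcomplex of $P^2$ on which we have good control. Since $m_\alpha$ is equivariant and its image on the coordinate circle lies in the $1$-skeleton of $P^2$, the image of $b_i$ under $(m_\alpha)_*$ is a $2$-chain in $P^2$ whose boundary is $(m_\alpha)_*(x_i-\omega\cdot x_i) = \alpha_i\big((e_0+e_1+e_2) - \omega\cdot(e_0+e_1+e_2)\big) = 0$ already at the chain level (the $1$-skeleton cycle $e_0+e_1+e_2$ is $\omega$-invariant), so $(m_\alpha)_*(b_i)$ is a $2$-cycle; I then need to evaluate $d^0$ on it. By choosing the filling $b_i$ to be mapped into $P^2$ in the natural way suggested by the cellular structure — or equivalently by observing that $(m_\alpha)_*(b_i)$, being a $2$-cycle in the $2$-complex $Y_2 = S^2\vee S^2$, is a combination of the fundamental cycles $d_0+d_1+d_2$-type classes on which $d^0$ evaluates symmetrically — one computes $(m_\alpha)^*(d^0)(b_i)$ and $(m_\alpha)^*(d^0)(B_i)$; I expect these to vanish mod $3$, or to cancel against part of the first term, so that the total is $\alpha_i \bmod 3$.

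The honest way to organize the last step, and what I would actually write, is to reduce to the case $n=1$: both sides of the claimed identity are unchanged under the coordinate inclusion $S^1\hookrightarrow T^n$ onto the $i$-th factor (the relevant chains $x_i,b_i,B_i$ and the cochains $e^0,d^0$ all restrict compatibly), so it suffices to show $\deg_1(m_\alpha)=\alpha\bmod 3$ for a single integer $\alpha$ with $\alpha\equiv 1\pmod 3$, i.e.\ for the map $z\mapsto \iota(z^\alpha)\colon S^1\to P^2$. There the computation is completely explicit: $S^1$ has a $\Z_3$-equivariant CW-structure with three vertices and three edges, $m_\alpha$ is cellular after subdivision, and one reads off $(m_\alpha)^*(e^0)(x_1)=\alpha$ and, since $P^2$ has no image-relevant $2$-cells hit by a map out of the $1$-dimensional $S^1$, the $d^0$-terms $(m_\alpha)^*(d^0)(b_1)$, $(m_\alpha)^*(d^0)(B_1)$ are determined by the fixed choice of fillings and are both $\equiv 0 \pmod 3$ (using again that $e_0+e_1+e_2$ is $\omega$-invariant, so the fillings can be taken to map to symmetric combinations of the $d_j$'s on which $d^0$ sums to $0\bmod 3$). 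Combining, $\deg_1(m_\alpha)=\alpha+0+0=\alpha\bmod 3$, as desired.

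The main obstacle I anticipate is not any single term but pinning down the $2$-chain terms cleanly: one must either fix the auxiliary fillings $b_i,B_i$ concretely and push them through $m_\alpha$, or argue invariantly that only the $1$-chain term survives mod $3$. I would handle this by exploiting the $\omega$-invariance of the $1$-skeleton cycle of $P^2$ together with the symmetry $\omega^j\cdot d^0 = d^j$ and $d^0(d_0+d_1+d_2)$-type evaluations being multiples of the orbit size, exactly as in the proof of Lemma~\ref{lem:B.4} where the analogous $3x_i$ cancellation appears; here the same mechanism forces the $b_i$ and $B_i$ contributions into $3\Z$.
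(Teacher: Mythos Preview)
You have the first term right: $(m_\alpha)_*(x_i)=\alpha_i(e_0+e_1+e_2)$, so $m_\alpha^*(e^0)(x_i)=\alpha_i$. But you are missing the one observation that makes the $d^0$ terms trivial, and without it your handling of those terms does not go through.

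The point is that $m_\alpha$ factors \emph{as a map on all of $T^n$}, not just on the coordinate circles, through the embedding $\iota\colon S^1\hookrightarrow \sk_1 P^2$. Its entire image lies in the $1$-skeleton of $P^2$. Consequently the induced cellular chain map $(m_\alpha)_*\colon C_2(T^n)\to C_2(P^2)$ is identically zero (every $2$-cell of $T^n$ is sent into $\sk_1 P^2$, so it hits no $2$-cell of $P^2$), and therefore $m_\alpha^*(d^0)\equiv 0$ as a cochain---not merely $0\bmod 3$. The two $d^0$ terms in Definition~\ref{def:degree} vanish on the nose, and $\deg_i(m_\alpha)=\alpha_i\bmod 3$ follows immediately. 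This is exactly the paper's two-line proof.

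Your attempts to get around this are problematic. The reduction to $n=1$ cannot work as stated: the fillings $b_i,B_i$ are $2$-chains in $T^n$ and have no counterpart in the $1$-dimensional $S^1$, so ``both sides are unchanged under the coordinate inclusion'' is false for the $d^0$ terms. Your alternative symmetry argument (``$(m_\alpha)_*(b_i)$ is a $2$-cycle, hence a combination of $\beta_0,\beta_1$, on which $d^0$ evaluates symmetrically'') does not finish either: a generic $2$-cycle $a\beta_0+b\beta_1$ gives $d^0$-value $a$, and nothing you have said forces $a\equiv 0\pmod 3$. The mechanism from Lemma~\ref{lem:B.4} that you invoke produces a multiple of $3$ only after summing \emph{three} specific contributions coming from a homotopy; there is no analogous triple here to sum. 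What actually forces $a=0$ is precisely that the image of $m_\alpha$ avoids all $2$-cells.
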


\begin{proof}
    Since the image of $m_\alpha$ is contained in the $1$-skeleton, $m_\alpha^*(d^0)\equiv 0$.
    Moreover, $(m_\alpha)_*(x_i) = \alpha_i e_0 +\alpha_i e_1 +\alpha_i e_2$, hence $e^0((m_\alpha)_*(x_i)) = \alpha_i$ and $\deg_i(m_\alpha) = \alpha_i \bmod 3$.
\end{proof}

We can now conclude the proof of Lemma~\ref{lem:B.3} which follows immediately from the above.

\begin{proof} [Proof of Lemma~\ref{lem:B.3}]
  If $\alpha \neq \beta$, then they differ in at least one coordinate, i.e.,  $\alpha_i \neq \beta_i$ for some $i$. Then $\deg_i(m_\alpha)\neq \deg_i(m_\beta)$ by Lemma~\ref{lem:monomial_degrees}, and therefore $m_\alpha$ and $m_\beta$ are not equivariantly homotopic by Lemma~\ref{lem:B.4}.
\end{proof}

Before, we progress further, let us discuss one more consequence of Lemma~\ref{lem:B.3}, namely, the following.

\begin{lemma} \label{lem:gamma_is_homomorphism}
  The mapping $\gamma\colon \hpol{S^1, P^2} \to \affine_3$ defined by
  \[
    [f] = (\deg_1(f), \dots, \deg_n(f))
  \]
  satisfies, for each $\alpha \in \mathbb Z^n$ with $\sum_i \alpha_i = 1$, and $\pi \colon [n] \to [m]$,
  \[
    \gamma([m_\alpha]^\pi) = \gamma([m_\alpha])^\pi,
  \]
  i.e., it preserves minors when restricted to classes of monomial maps.
\end{lemma}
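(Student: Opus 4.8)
The plan is to compute the $\pi$-minor of a monomial map explicitly and then read off its degrees via Lemma~\ref{lem:monomial_degrees}, comparing the result with the $\pi$-minor of the corresponding affine map computed inside $\affine_3$; no genuine topological input beyond Lemmata~\ref{lem:B.4} and~\ref{lem:monomial_degrees} is needed.

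First I would note that $\gamma$ is well defined on equivariant homotopy classes of monomial maps, and that such classes are sent into $\affine_3$: indeed $\deg_i$ is an equivariant homotopy invariant by Lemma~\ref{lem:B.4}, and by Lemma~\ref{lem:monomial_degrees} we have $\gamma([m_\alpha]) = (\alpha_1 \bmod 3, \dots, \alpha_n \bmod 3)$, a tuple whose entries sum to $\sum_i \alpha_i \equiv 1 \pmod 3$. Next comes the key computation. Fix $\pi\colon [n]\to [m]$. Unravelling the definition of the $\pi$-minor in $\hpol{S^1, P^2}$ and of $m_\alpha$ gives
\[
  (m_\alpha)^\pi(z_1,\dots,z_m) = m_\alpha(z_{\pi(1)},\dots,z_{\pi(n)}) = \iota\Bigl(\prod_{j=1}^{n} z_{\pi(j)}^{\alpha_j}\Bigr) = \iota\Bigl(\prod_{k=1}^{m} z_k^{\alpha^\pi_k}\Bigr),
\]
where $\alpha^\pi \in \ints^m$ is the contracted exponent tuple defined by $\alpha^\pi_k = \sum_{j \in \pi^{-1}(k)} \alpha_j$. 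Since $\sum_k \alpha^\pi_k = \sum_j \alpha_j \equiv 1 \pmod 3$, this exhibits $(m_\alpha)^\pi$ as the equivariant monomial map $m_{\alpha^\pi}$ of arity $m$; hence $[m_\alpha]^\pi = [m_{\alpha^\pi}]$ and, by Lemma~\ref{lem:monomial_degrees}, $\gamma([m_\alpha]^\pi) = (\alpha^\pi_1 \bmod 3, \dots, \alpha^\pi_m \bmod 3)$.

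Finally I would compare this with $\gamma([m_\alpha])^\pi$, the $\pi$-minor taken in $\affine_3$. Viewing $\gamma([m_\alpha])$ as the affine map $x \mapsto \sum_{j} (\alpha_j \bmod 3)\, x_j$, its $\pi$-minor is $x \mapsto \sum_{j} (\alpha_j \bmod 3)\, x_{\pi(j)} = \sum_{k} \bigl(\sum_{j \in \pi^{-1}(k)} (\alpha_j \bmod 3)\bigr) x_k$, i.e.\ the tuple whose $k$-th entry is $\bigl(\sum_{j \in \pi^{-1}(k)} \alpha_j\bigr) \bmod 3 = \alpha^\pi_k \bmod 3$. This is exactly the tuple computed for $\gamma([m_\alpha]^\pi)$, so $\gamma([m_\alpha]^\pi) = \gamma([m_\alpha])^\pi$, as claimed. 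The only point requiring care is to keep the two minor operations apart — the one in $\hpol{S^1, P^2}$ (reindexing the torus coordinates, which on monomials sums exponents over the fibres of $\pi$) and the one in $\affine_3$ (summing coefficients over those same fibres, modulo $3$) — and to observe that fibrewise summation visibly commutes with reduction modulo $3$; this is the entire content of the argument, so I do not anticipate a real obstacle.
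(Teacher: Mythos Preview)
Your proof is correct and follows essentially the same approach as the paper's own proof: both compute $m_\alpha^\pi = m_{\alpha'}$ with $\alpha'_k = \sum_{j\in\pi^{-1}(k)}\alpha_j$, apply Lemma~\ref{lem:monomial_degrees} on each side, and observe that fibrewise summation commutes with reduction mod~$3$. Your write-up is slightly more explicit about the well-definedness of $\gamma$ and the two distinct minor operations, but the argument is the same.
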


\begin{proof}
  Note that $\gamma$ is well-defined since the degrees do note depend on the choice of representative (Lemma~\ref{lem:B.4}).
  Further, using Lemma~\ref{lem:monomial_degrees}, we have that
  \[
    \gamma([m_\alpha]) = (\alpha_1 \bmod 3, \dots, \alpha_n \bmod 3)
  \]
  and hence $\gamma([m_\alpha])^\pi = \beta$ where $\beta_j = (\sum_{i\in \pi^{-1}(j)} \alpha_i) \bmod 3$.
  Furthermore, $m_\alpha^\pi = m_{\beta'}^{\phantom{\pi}}$ where
  $\beta'_j = \sum_{i\in \pi^{-1}(j)} \alpha_i$, and consequently
  \[
    \gamma([m_\alpha]^\pi) = \gamma([m_\alpha^\pi]) = \beta = \gamma(m_\alpha).
    \qedhere
  \]
\end{proof}

In the next subsection, we show that $\mu$ and $\gamma$ are in fact inverses to each other, and hence the previous lemma applies for all elements of $\hpol{S^1, P^2}$, i.e., $\gamma$ (and consequently $\mu$) preserves all minors.

\subsection{Monomial maps cover all possible maps up to homotopy}\label{sec:no_more}

To show that there are no other maps besides the monomial maps $m_\alpha$, $\alpha\in \affine^3$ up to equivariant homotopy, we will rely on Theorem~\ref{thm:obstruction} in the particular case of $X = T^n$ and $P^2$ playing the role of $K(\pi_2(P^2), 2)$, which asserts that there is a bijection $[T^n, P^2]_{\ints_3} \simeq H_{\ints_3}^2(T^n; \pi_2(P^2))$.
The rest of this section will be dedicated to precisely define the objects involved in Theorem~\ref{thm:obstruction} and compute the Bredon cohomology of the torus, i.e., to prove the following proposition.

\begin{restatable}{proposition}{cohomologytorus} \label{prop:cohomology_torus}
  For all $n, d\geq 1$,
  \[
    H_{\Z_3}^d\left(T^n; \pi_2(P^2)\right) \cong \Z_3^{\binom{n-1}{d-1}}.
  \]
\end{restatable}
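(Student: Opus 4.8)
The plan is to exploit that the $\Z_3$-action on $T^n$ is free: this lets me replace Bredon cohomology by ordinary (twisted) cohomology of the quotient, and then reduce everything to a small group-cohomology computation. First I would pin down the coefficient module. By the construction of $P^2$ in Section~\ref{sec:eilenberg} we have $\pi_2(P^2)\cong\pi_2(Y_2)$, and since $Y_2$ is simply connected the Hurewicz theorem yields an equivariant isomorphism $\pi_2(Y_2)\cong H_2(Y_2)$. Reading off the cellular chain complex of $Y_2$ together with the attaching maps $\psi_i$ — each of which sends $\partial d_i$ to $e_0+e_1+e_2$ — one sees that $H_2(Y_2)=\ker(\psi_*\colon C_2(Y_2)\to C_1(Y_2))$ is the augmentation ideal $G:=\ker(\epsilon\colon\Z[\Z_3]\to\Z)$, with $\Z_3$ acting by the restriction of the regular representation (cyclic permutation of $d_0,d_1,d_2$). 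So it suffices to compute $H^d_{\Z_3}(T^n;G)$ for this module.

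Next I would pass to the quotient. Since $T^n$ is a free $\Z_3$-CW-complex, the Bredon cochain complex computing $H^*_{\Z_3}(T^n;G)$ reduces to $\Hom_{\Z[\Z_3]}(C^{CW}_*(T^n),G)$, which is exactly the local-coefficient cochain complex of $T^n/\Z_3$ for the local system $\mathcal G$ attached to the regular covering $T^n\to T^n/\Z_3$ and the $\Z_3$-module $G$; thus $H^*_{\Z_3}(T^n;G)\cong H^*(T^n/\Z_3;\mathcal G)$. The quotient is easy to identify: under $S^1=\R/\Z$ the generator of $\Z_3$ acts on $S^1$ by $\theta\mapsto\theta+\tfrac13$, hence diagonally on $T^n=\R^n/\Z^n$ by the translation $x\mapsto x+v$ with $v=(\tfrac13,\dots,\tfrac13)$. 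Therefore $T^n/\Z_3=\R^n/\Lambda$, where $\Lambda=\Z^n+\Z v$ is a full-rank lattice containing $\Z^n$ with index $3$. In particular $T^n/\Z_3$ is aspherical with $\pi_1\cong\Lambda\cong\Z^n$, so $H^*(T^n/\Z_3;\mathcal G)\cong H^*(\Lambda;G)$, where $G$ is viewed as a $\Z[\Lambda]$-module through $\Lambda\to\Lambda/\Z^n=\Z_3\to\Aut(G)$.

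Finally, the group-cohomology computation. I would choose the basis $v,e_2,\dots,e_n$ of $\Lambda$ (using $e_1=3v-e_2-\dots-e_n$); with respect to it $\Lambda\cong\Z\times\Z^{n-1}$, the factor $\Z^{n-1}$ acts trivially on $G$, and a generator of the remaining $\Z$ acts by the order-$3$ cyclic permutation $t$. A Smith-normal-form computation for the $2 \times 2$ integer matrix of $t-1$ on $G\cong\Z^2$ (which has determinant $3$) gives $H^0(\Z;G)=\ker(t-1)=0$, $H^1(\Z;G)=\coker(t-1)\cong\Z_3$, and $H^p(\Z;G)=0$ for $p\ge2$. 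Feeding this into the Lyndon--Hochschild--Serre spectral sequence of $1\to\Z^{n-1}\to\Z^n\to\Z\to1$, and using that $\Z^{n-1}$ acts trivially on $G$ with $H_*(\Z^{n-1})$ free — so $H^q(\Z^{n-1};G)\cong G^{\binom{n-1}{q}}$ by universal coefficients — one obtains $E_2^{p,q}=H^p(\Z;G)^{\binom{n-1}{q}}$, which is concentrated in the single column $p=1$, where $E_2^{1,q}\cong\Z_3^{\binom{n-1}{q}}$. The spectral sequence therefore degenerates, and $H^d(\Lambda;G)\cong E_2^{1,d-1}\cong\Z_3^{\binom{n-1}{d-1}}$, which is the asserted value of $H^d_{\Z_3}(T^n;\pi_2(P^2))$.

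The Smith-normal-form step and the binomial bookkeeping are routine. I expect the only delicate point to be the reduction in the second paragraph: one must invoke the precise definition of Bredon cohomology (set up earlier in the paper) to justify that for a free action it coincides with twisted cohomology of the quotient, and then recognize the diagonal $\Z_3$-action as a translation so that the quotient is transparently a torus $K(\Z^n,1)$. Once this is in place the rest is elementary; an alternative, if one prefers to avoid the quotient, is to compute $H^*\big(\Hom_{\Z[\Z_3]}(C^{CW}_*(T^n),G)\big)$ directly from an explicit free $\Z_3$-CW structure on $T^n$ (the product of the triangle structure on $S^1$), but the identification of the quotient seems cleanest.
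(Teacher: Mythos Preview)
Your argument is correct, and your identification of $\pi_2(P^2)$ with the augmentation ideal of $\Z[\Z_3]$ agrees with the paper's module $M$ (the paper presents it as $\Lambda/I$ with $I=(1+\omega+\omega^2)$, but these are isomorphic $\Lambda$-modules). The route, however, is genuinely different. The paper never passes to local coefficients or group cohomology: instead it uses the short exact sequence of $\Lambda$-modules $0\to I\to\Lambda\to M\to 0$, identifies $C^\bullet_{\Z_3}(T^n;I)$ and $C^\bullet_{\Z_3}(T^n;\Lambda)$ with the ordinary integral cochains of $T^n/\Z_3$ and $T^n$ respectively, and reads off $H^d_{\Z_3}(T^n;M)\cong\coker\bigl(p^*_d\colon H^d(T^n/\Z_3)\to H^d(T^n)\bigr)$ from the long exact sequence, computing $p^*$ explicitly via the cup-product basis after the same change of coordinates you use. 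Your approach trades that coefficient exact sequence for the Lyndon--Hochschild--Serre spectral sequence of $\Z^{n-1}\hookrightarrow\Lambda\twoheadrightarrow\Z$; this is more structural and would adapt readily to other cyclic groups or other aspherical base spaces, while the paper's argument is more elementary in that it stays entirely within untwisted integral cohomology of tori and avoids any spectral sequence, at the cost of the ad~hoc identification of $I$- and $\Lambda$-coefficients with ordinary cochains.
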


\subsubsection{Equivariant cohomology: a primer}\label{sec:cohomology_def}

We introduce equivariant cohomology in a specific case that is sufficient for our purposes, and refer to Bredon \cite[Section 1]{Bre67}, or tom Dieck \cite[Section II.3]{Die87} for a detailed treatment.

As we have previously observed, if a CW-complex $X$ has a free cellular $\Z_3$-action, then there is an induced action on the group of $d$-dimensional integer chains $C_d(X)$ that commutes with the boundary maps.
The main insight of Bredon equivariant (co)homology is a shift in perspective: specifying an action on the chains is exactly the same as giving the chain group a structure of $\Lambda$-module for a suitable constructed ring (the integer group ring $\Z[\Z_3]$).\footnote{This holds in our specific context. In the general case, the Bredon cohomology has a richer structure than simply that of $\Lambda$-module.}

More explicitly, the ring $\Lambda$ is defined as the formal integer combinations of elements of $\Z_3 = \{1, \omega, \omega^2\}$ with the multiplication defined by setting $\omega^i \omega^j = \omega^{i+j}$ and extending bilinearly.
The key observation is that prescribing a $\Z_3$-action on an Abelian group $M$ is exactly the same as choosing a structure of $\Lambda$-module on $M$.
In particular, if $X$ is a CW-complex with a cellular $\Z_3$-action, for any $d\geq 0$, the corresponding chain group $C_d(X)$ can be seen as a $\Lambda$-module with multiplication given on a cell $\sigma$ by
\[
  (n_0+n_1\omega+n_2\omega^2) \sigma =  n_0 \sigma + n_1 (\omega\cdot \sigma) + n_2 (\omega^2\cdot \sigma)
\]
and extended linearly.

Since the all the boundary maps commute with the action, these are also maps of $\Lambda$-modules hence $C_\bullet(X)$ is a chain complex of $\Lambda$-modules.
We will denote this chain complex by $C^{\Z_3}_\bullet(X)$.
The homology associated to this chain complex is the \emph{equivariant homology} (or Bredon homology) of $X$, denoted by $H_\bullet^{\Z_3}(X)$.

At the same time, if we fix a $\Lambda$-module $N$, we can consider the equivariant cochain complex:
\[
  C_{\Z_3}^i \left(X; N\right) = \Hom_\Lambda\bigl(C^{\Z_3}_i(X), N\bigr)
\]
with the standard coboundary maps. The cohomology of this cochain complex is the Bredon cohomology, denoted by $H_{\Z_3}^\bullet(X; N)$.

\begin{lemma}
  \label{thm:free_module}
  If the action on $X$ is free and cellular, then $C_\bullet^{\Z_3}(X)$ is a chain complex of free $\Lambda$-module.
\end{lemma}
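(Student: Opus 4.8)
The plan is to produce, for each dimension $d$, an explicit $\Lambda$-basis of the chain module $C_d^{\Z_3}(X)$ indexed by the orbits of $d$-cells. First I would observe that, since the action is cellular, $\Z_3$ permutes the set of $d$-cells of $X$, and since it is free, no $d$-cell is fixed by $\omega$ or $\omega^2$; hence every orbit of $d$-cells has exactly three elements, of the form $\{\sigma, \omega\cdot\sigma, \omega^2\cdot\sigma\}$. I would then fix a set $S_d$ consisting of one representative $\sigma$ from each such orbit.

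Next I would decompose $C_d(X)$, as an abelian group, into the direct sum $\bigoplus_{\sigma \in S_d} \Z\langle \sigma, \omega\cdot\sigma, \omega^2\cdot\sigma\rangle$ of the free $\Z$-modules spanned by the individual orbits. This decomposition is visibly $\Z_3$-equivariant, since the action preserves each orbit and hence each summand, so it is in fact a decomposition of $\Lambda$-modules. It therefore suffices to check that each summand $\Z\langle \sigma, \omega\cdot\sigma, \omega^2\cdot\sigma\rangle$ is free of rank one over $\Lambda$. For this I would consider the $\Lambda$-linear map $\Lambda \to \Z\langle \sigma,\omega\cdot\sigma,\omega^2\cdot\sigma\rangle$ sending $\lambda\mapsto \lambda\cdot\sigma$, i.e.\ sending $n_0 + n_1\omega + n_2\omega^2$ to $n_0\sigma + n_1(\omega\cdot\sigma) + n_2(\omega^2\cdot\sigma)$. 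It is surjective because $1\cdot\sigma$, $\omega\cdot\sigma$, $\omega^2\cdot\sigma$ span the target, and it is injective because $\sigma,\omega\cdot\sigma,\omega^2\cdot\sigma$ are three distinct cells and hence $\Z$-linearly independent in $C_d(X)$. Thus $\Z\langle\sigma,\omega\cdot\sigma,\omega^2\cdot\sigma\rangle \cong \Lambda$ as a $\Lambda$-module, and consequently $C_d^{\Z_3}(X) \cong \Lambda^{\lvert S_d\rvert}$ is free.

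Finally I would recall that the cellular boundary maps $\partial_d\colon C_d(X)\to C_{d-1}(X)$ commute with the $\Z_3$-action (this is exactly the observation used to define $C_\bullet^{\Z_3}(X)$ in the first place), hence are maps of $\Lambda$-modules; so $C_\bullet^{\Z_3}(X)$ is a chain complex of free $\Lambda$-modules, as claimed. I do not expect any genuine obstacle here: the only points that require care are the use of freeness to guarantee that every orbit has full size three — so that a single orbit, with the restricted action, is isomorphic to $\Lambda$ acting on itself by left multiplication — and the elementary fact that distinct cells form a linearly independent family of chains.
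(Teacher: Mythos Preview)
Your proof is correct and follows essentially the same approach as the paper: pick a representative $\sigma$ from each orbit of $d$-cells and use freeness of the action to identify the $\Lambda$-span of the orbit with a copy of $\Lambda$. The paper phrases this as an isomorphism $C_d(X/\Z_3;\Lambda)\to C_d^{\Z_3}(X)$, $[\sigma]\mapsto\sigma$, but this is exactly your orbit-by-orbit map $\Lambda\to\Z\langle\sigma,\omega\cdot\sigma,\omega^2\cdot\sigma\rangle$ assembled together; your write-up simply spells out the details the paper leaves implicit.
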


\begin{proof}
  For every orbit of $d$-cells in $X$ choose a representative $\sigma$. Then the map
  \begin{align*}
    C_d\left(\faktor{X}{\Z_3}; \Lambda\right) &\longrightarrow C^{\Z_3}_d(X)\\
    [\sigma] &\longmapsto \sigma
  \end{align*}
  is an isomorphism of $\Lambda$-modules; moreover, since the action is cellular, these isomorphisms commute with the boundary maps hence the two chain complexes are isomorphic.
\end{proof}

This implies that the functor $\Hom_\Lambda(C_d^{\Z_3}(X), {-})$ is exact for all $d\geq 0$. Therefore, if we have a short exact sequence of $\Lambda$-modules
\[
  0\rightarrow N \rightarrow M \rightarrow Q \rightarrow 0,
\]
there is a corresponding short exact sequence of cochain complexes
\[
  0\rightarrow C_{\Z_3}^\bullet(X; N) \rightarrow C_{\Z_3}^\bullet(X; M) \rightarrow C_{\Z_3}^\bullet(X; Q) \rightarrow 0,
\]
and thus a long exact sequence in cohomology
\[
  \dots\rightarrow H_{\Z_3}^{i-1}(X; Q) \rightarrow H_{\Z_3}^i(X; N) \rightarrow H_{\Z_3}^i(X; M) \rightarrow H_{\Z_3}^i(X; Q) \rightarrow \dots
\]

The key technical lemma we are going to use to compute the equivariant cohomology of the torus $T^n$ follows after we fix some notation.

Assume that $X$ is a CW complex with a free $\ints_3$-action.
Let $p\colon X\to {X} / {\Z_3}$ be the projection map.
Let $I = \Lambda(1 + \omega + \omega^2)$ be the ideal in $\Lambda$ generated by the element $1 + \omega + \omega^2$, and let $\iota\colon I\hookrightarrow \Lambda$ denote the inclusion.
Let $Z$ be the $\Lambda$-module defined as $Z = \Z$ with the trivial multiplication by $\omega$ and let $Z_{\text{cyc}}$ the $\Lambda$-module $Z_{\text{cyc}} = \Z\oplus \Z\oplus \Z$ where the multiplication by $\omega$ is a cyclic shift, i.e., $\omega\cdot(n_0, n_1, n_2) = (n_2, n_0, n_1)$.
Furthermore, we have the following homomorphisms:
\begin{enumerate}
  \item an inclusion of $\Lambda$-modules $d\colon Z\hookrightarrow Z_\text{cyc}$ defined as $d(n) = (n, n, n)$;
  \item an isomorphism of $\Lambda$-modules $\phi_1\colon Z \to I$ defined as $\phi_1(n) = n + n \omega + n \omega^2$;
  \item an isomorphism of $\Lambda$-modules $\phi_2\colon Z_\text{cyc} \to \Lambda$ defined by $\phi_2(n_0, n_1, n_2) = n_0 + n_1 \omega + n_2 \omega^2$;
  \item an isomorphism of Abelian groups $h_1\colon C^\bullet(\faktor{X}{\Z_3}) \to C^\bullet_{\Z_3}(X; Z)$ defined by $h_1(\alpha) = \left(\sigma \mapsto \alpha(p(\sigma))\right)$; and finally
  \item an isomorphism of Abelian groups $h_2\colon C^\bullet\left(X\right) \to C^\bullet_{\Z_3}(X; Z_\text{cyc})$ defined by
    \[
      h_2 (\alpha) = \left(\sigma \mapsto (\alpha(\sigma), \alpha(\omega\cdot \sigma), \alpha(\omega^2\cdot \sigma))\right).
    \]
\end{enumerate}
All of these claims are straightforward to check.

\begin{lemma} \label{lem:key_lemma_cohomology}
  Let $X$ be a CW complex with a free $\ints_3$-action, then the following diagram commutes
  \[
    \begin{tikzcd}
      C^\bullet(\faktor{X}{\Z_3}) \ar[d, "p^*"] \ar[r, "h_1"] & C^\bullet_{\Z_3}\left(X;Z \right) \ar[d, "{-}\circ d"]\ar[r, "{-}\circ \phi_1"] & C^\bullet_{\Z_3}\left(X; I\right) \ar[d, "{-}\circ \iota"]\\
      C^\bullet\left(X\right) \ar[r, "h_2"] &C^\bullet_{\Z_3}\left(X; Z_\text{cyc}\right) \ar[r, "{-}\circ \phi_2"] & C^\bullet_{\Z_3}\left(X; \Lambda\right)
    \end{tikzcd}
  \]
\end{lemma}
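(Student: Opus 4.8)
The statement is a bookkeeping lemma, and the plan is a direct diagram chase. I would split the rectangle into its \emph{left square} (the maps $p^*$, $h_1$, $h_2$, ${-}\circ d$) and its \emph{right square} (the maps ${-}\circ d$, ${-}\circ\phi_1$, ${-}\circ\phi_2$, ${-}\circ\iota$) and check that each of these two squares commutes. Since a morphism of cochain complexes is just a family of maps, one in each degree, it is enough to verify each square in a fixed dimension $i$, and in fact cell by cell: chase a cochain around the square and evaluate the result on an arbitrary $i$-cell $\sigma$ of $X$.

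For the right square the argument is purely algebraic. Each of $\phi_1$, $\phi_2$, $\iota$ induces the corresponding edge of the square by post-composition, i.e.\ by applying the covariant functor $\Hom_\Lambda(C^{\Z_3}_i(X), {-})$; since this functor preserves composition, the right square is the image under $\Hom_\Lambda(C^{\Z_3}_i(X), {-})$ of the square of $\Lambda$-modules with top edge $\phi_1\colon Z\to I$, left edge $d\colon Z\to Z_\text{cyc}$, bottom edge $\phi_2\colon Z_\text{cyc}\to\Lambda$, and right edge $\iota\colon I\hookrightarrow\Lambda$. Hence it suffices to check that $\iota\circ\phi_1 = \phi_2\circ d$ as $\Lambda$-module maps $Z\to\Lambda$, which is immediate from the definitions: for $n\in Z = \Z$, both composites equal $n + n\omega + n\omega^2 \in \Lambda$. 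So the right square commutes, with no topology involved.

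For the left square, fix $\alpha\in C^i(\faktor{X}{\Z_3})$ and an $i$-cell $\sigma$ of $X$. Because the $\Z_3$-action on $X$ is free and cellular, $p$ restricts to a homeomorphism of the cell $\sigma$ onto a cell $p(\sigma)$ of $\faktor{X}{\Z_3}$, so the induced chain map sends $\sigma$ to $p(\sigma)$ and therefore $p^*\alpha(\sigma) = \alpha(p(\sigma))$ (and likewise on $\omega\cdot\sigma$, $\omega^2\cdot\sigma$). Hence
\[
(h_2\circ p^*)(\alpha)(\sigma) = \bigl(\alpha(p\sigma),\, \alpha(p(\omega\sigma)),\, \alpha(p(\omega^2\sigma))\bigr) = \bigl(\alpha(p\sigma),\, \alpha(p\sigma),\, \alpha(p\sigma)\bigr),
\]
where the second equality uses that $p$ is $\Z_3$-invariant. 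Going the other way, $h_1(\alpha)(\sigma) = \alpha(p\sigma)\in Z$, and applying ${-}\circ d$ gives $d(\alpha(p\sigma)) = (\alpha(p\sigma),\alpha(p\sigma),\alpha(p\sigma))$ by definition of $d$. The two outputs agree, so the left square commutes, and the lemma follows.

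I do not expect a genuine obstacle here; this is routine. The only two points that need a little care are (i) unwinding $p^*$ at the level of cellular chains in the left square — this is precisely where freeness of the action is used, since it makes $p$ a cellular covering map and so identifies the cells upstairs, orbit by orbit, with the cells downstairs — and (ii) keeping track of which of the six maps are of ``evaluate along the orbit / diagonalise'' type ($h_1$, $h_2$, $d$) versus ``relabel the $\Lambda$-coordinates'' type ($\phi_1$, $\phi_2$, $\iota$), so that the left square is matched against the identity ``$p$ is $\Z_3$-invariant'' and the right square against ``$\iota\circ\phi_1 = \phi_2\circ d$''.
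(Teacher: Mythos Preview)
Your proof is correct and is exactly what the paper does: the paper's own proof consists of the single sentence ``The proof is a straightforward unravelling of definitions,'' and your diagram chase is precisely that unravelling spelled out in detail. Your split into the two squares, the functoriality argument for the right square, and the cell-by-cell evaluation for the left square are all the natural way to carry this out.
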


\begin{proof}
  The proof is a straightforward unravelling of definitions.
\end{proof}

\subsubsection{Computing the cohomology of the torus}
  \label{sec:computing_cohomology_torus}

We are now going to specify the previous construction to our case, that is $X = T^n$ with the diagonal action (that is cellular for the product CW-structure on the torus) and coefficients $\pi_2(P^2)$ (that is a $\Lambda$-module with the induced action).

The first ingredient is to explicitly study the structure of $\Lambda$-module on $\pi_2(P^2)$.
  \begin{lemma}
    The second homotopy group $\pi_2(P^2)$ is isomorphic as a $\Lambda$-module to $M=\Z\oplus\Z$ where the multiplication by $\omega$ is given by the matrix
    \[
      A=\begin{pmatrix}
        0 & -1 \\
        1 & -1
      \end{pmatrix}
    \]
  \end{lemma}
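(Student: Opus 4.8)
The plan is to reduce the computation to ordinary (non-equivariant) cellular homology, carrying the $\Z_3$-action along throughout. By the construction of $P^2$ (Definition~\ref{def:p2} and the discussion preceding it), the inclusions $Y_2 = Y_3 \hookrightarrow P^2$ induce an isomorphism $\pi_2(P^2) \cong \pi_2(Y_2)$, and this isomorphism is $\Z_3$-equivariant because the inclusions are. Moreover $Y_2$ is simply connected (as shown above), so the Hurewicz theorem gives an isomorphism $\pi_2(Y_2) \cong H_2(Y_2)$; since the Hurewicz map is natural in the space, this isomorphism intertwines the actions of $\omega$. It therefore suffices to compute $H_2(Y_2)$ as a $\Lambda$-module.

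Next I would write out the cellular chain complex of $Y_2$ from its description: $C_0 = \Z\langle v_0, v_1, v_2\rangle$, $C_1 = \Z\langle e_0, e_1, e_2\rangle$, $C_2 = \Z\langle d_0, d_1, d_2\rangle$, with $\partial e_i = v_{i+1} - v_i$ and $\partial d_i = e_i + e_{i+1} + e_{i+2} = e_0 + e_1 + e_2$ (independent of $i$). Since there are no $3$-cells, $H_2(Y_2) = \ker(\partial\colon C_2 \to C_1) = \{\, a_0 d_0 + a_1 d_1 + a_2 d_2 : a_0 + a_1 + a_2 = 0 \,\}$, a free abelian group of rank $2$. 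For the module structure, recall from the construction that $\omega \cdot d_i = R(d_{i+1})$, where $R$ is the rotation of the disc by $2\pi/3$; as $R$ is an orientation-preserving self-homeomorphism of $D^2$ it induces the identity on $H_2(D^2, \partial D^2)$, so on cellular $2$-chains $\omega$ acts by the cyclic shift $\omega \cdot d_i = d_{i+1}$. (A quick consistency check: $\omega \cdot \partial d_i = e_{i+1} + e_{i+2} + e_{i+3} = \partial d_{i+1} = \partial(\omega \cdot d_i)$, with no sign appearing.) Thus $C_2 \cong Z_{\text{cyc}}$ as $\Lambda$-modules, and $H_2(Y_2)$ is the kernel of the augmentation $Z_{\text{cyc}} \to Z$.

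Finally I would exhibit the claimed basis explicitly. Put $u = d_0 - d_1$ and $w = d_1 - d_2$; these span the kernel over $\Z$. A direct computation gives $\omega \cdot u = d_1 - d_2 = w$ and $\omega \cdot w = d_2 - d_0 = -u - w$, so in the ordered basis $(u, w)$ multiplication by $\omega$ is represented by the matrix $A$ of the statement (consistently, $A^3 = I$). Chaining the identifications, $\pi_2(P^2) \cong H_2(Y_2) \cong M$ as $\Lambda$-modules, as claimed.

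The one genuinely delicate point is fixing the sign of the action on $C_2$, i.e.\ justifying that the rotation factor $R$ contributes local degree $+1$ and hence that $\omega$ acts by an honest cyclic shift rather than, say, by $d_i \mapsto -d_{i+1}$; here one must be careful about the choice of orientations of the $2$-cells, and the consistency check with the boundary map above is what pins it down. Everything else --- the Hurewicz isomorphism, the fact (already established in the construction of $P^2$) that the cells of dimension $\geq 4$ do not affect $\pi_2$, and the kernel computation --- is routine.
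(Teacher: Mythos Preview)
Your proof is correct and follows essentially the same approach as the paper: reduce via Hurewicz to $H_2$, take the basis $d_0 - d_1$, $d_1 - d_2$ of the kernel of $\partial_2$, and read off the action of $\omega$ as a cyclic shift on the $d_i$. The paper's version is terser (it works directly with $H_2(P^2)$ rather than first passing to $Y_2$, and omits the discussion of the sign contributed by $R$), but the content is the same.
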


  \begin{proof}
    Since $P^2$ is simply connected, by Hurewicz theorem \cite[see Section 4.2]{Hat02}, there is an isomorphism between $\pi_2(P^2)$ and $H_2(P^2)$ that is an isomorphism of $\Lambda$-modules.

    By direct computation, it is easy to show that the chains $\beta_i = d_i - d_{i+1}\in C_2(P^2)$ are cycles and any two out of $\{[\beta_i]\}_{0\leq i\leq 2}$ form a basis for $H_2(P^2)\cong \Z\oplus \Z$. Fix $[\beta_0]$ and $[\beta_1]$, then the multiplication by $\omega$ is
    \begin{align*}
      \omega\cdot[\beta_0] = [\omega\cdot\beta_0] & = [\beta_1]\\
      \omega\cdot[\beta_1] = [\omega\cdot\beta_1] & = [\beta_2] = -[\beta_0] - [\beta_1].
    \end{align*}
    as claimed.
  \end{proof}

The module $M$ has some very useful properties that will allow us to use Lemma~\ref{lem:key_lemma_cohomology}. In particular, we have the following lemma.

  \begin{lemma}\label{thm:coefficient_structure}
    The module $M$ is generated by a single element and $\Ann(M) = I$ the ideal generated by $1+\omega+\omega^2$.
  \end{lemma}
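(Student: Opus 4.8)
The plan is to prove both claims by an explicit computation with the matrix $A$. The first observation is that $A^{2}+A+I=0$, which is immediate either by direct multiplication or from the fact that $t^{2}+t+1$ is the characteristic polynomial of $A$ (its trace being $-1$ and $\det A=1$). In the $\Lambda$-module language this says exactly that $1+\omega+\omega^{2}$ acts as the zero map on $M$, so $I=\Lambda(1+\omega+\omega^{2})\subseteq\Ann(M)$. One also records here that $\omega(1+\omega+\omega^{2})=1+\omega+\omega^{2}$, so $I$ is in fact just $\Z\cdot(1+\omega+\omega^{2})$.

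For the single-generator claim, I would take $v=(1,0)\in M$. Then $\omega\cdot v=Av=(0,1)$, so $\{v,\omega\cdot v\}$ is the standard $\Z$-basis of $M=\Z\oplus\Z$; hence the $\Lambda$-submodule generated by $v$ already contains a $\Z$-basis of $M$ and therefore equals $M$, i.e.\ $M=\Lambda\cdot v$ is cyclic.

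For the reverse inclusion $\Ann(M)\subseteq I$, note that because $M=\Lambda\cdot v$ and $\Lambda$ is commutative, an element $\lambda=a+b\omega+c\omega^{2}$ lies in $\Ann(M)$ iff $\lambda\cdot v=0$. Using $\omega\cdot v=(0,1)$ and $\omega^{2}\cdot v=A^{2}v=(-1,-1)$ one computes $\lambda\cdot v=(a-c,\,b-c)$, which vanishes precisely when $a=b=c$, i.e.\ when $\lambda=a(1+\omega+\omega^{2})\in I$. Together with the first step this yields $\Ann(M)=I$. There is no genuine obstacle in this lemma; the only thing to keep straight is the translation between the matrix description of the $\omega$-action and the identification $I=\Z\cdot(1+\omega+\omega^{2})$, which is exactly what makes the annihilator computation collapse to the condition $a=b=c$.
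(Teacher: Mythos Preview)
Your proof is correct and follows essentially the same route as the paper: both take the generator $v=(1,0)$, compute $\omega\cdot v=(0,1)$ and $\omega^{2}\cdot v=(-1,-1)$, and read off that $\lambda v=0$ forces equal coefficients. One small notational hazard: you write ``$A^{2}+A+I=0$'' with $I$ meaning the identity matrix, but in this section $I$ already denotes the ideal $\Lambda(1+\omega+\omega^{2})$; it would be cleaner to write $A^{2}+A+\mathrm{id}=0$ or simply skip that preliminary step, since your explicit computation of $\lambda\cdot v$ already establishes both inclusions at once (as the paper does).
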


  \begin{proof}
    $M$ is generated over $\Lambda$ by $m=\begin{pmatrix} 1\\0 \end{pmatrix}$. As a consequence, $\lambda\in \Ann(M)$ if and only if $\lambda m = 0$; hence, if $\lambda = n_0 + n_1 \omega + n_2 \omega^2$, then
    \[
      \lambda m = n_0 \begin{pmatrix} 1\\0 \end{pmatrix} + n_1 \begin{pmatrix} 0\\1\end{pmatrix} + n_2\begin{pmatrix} -1 \\-1 \end{pmatrix} = \begin{pmatrix}
        n_0 - n_2\\
        n_1 - n_2
      \end{pmatrix}.
    \]
    Therefore, $\lambda m = 0$ if and only if $n_0 = n_1 = n_2$ if and only if $\lambda \in I$.
  \end{proof}

The last ingredient needed is to compute what the projection map $p^*$ does on the level of cohomology.
While it is possible to compute $p^*$ directly, it is easier to change a little the action on the torus to simplify the calculations:

\begin{lemma}
  \label{thm:simple_torus}
  Let $X$ be the torus $T^n\subseteq \C^n$ with the diagonal $\Z_3$-action given by the multiplication with $\omega = e^{2\pi i / 3}$ (i.e., $\omega\cdot(z_1, \dots, z_n) = (\omega z_1, \dots, \omega z_n))$.
  Let $Y$ be the same torus but with $\Z_3$ acting only on the first coordinate (i.e., $\omega\cdot(z_1, \dots, z_n) = (\omega z_1, z_2, \dots, z_n)$).
  Then there is a $\Z_3$-equivariant homeomorphism $X\to Y$.
\end{lemma}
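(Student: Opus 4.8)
The plan is to construct an explicit $\Z_3$-equivariant homeomorphism $F\colon X\to Y$ by a "shearing" change of coordinates that converts the diagonal rotation into a rotation of the first coordinate only. The idea is that multiplying the $i$-th coordinate by a suitable power of $z_1^{-1}$ untwists the action on all but the first coordinate. Concretely, define $F\colon T^n\to T^n$ by
\[
  F(z_1, z_2, \dots, z_n) = (z_1,\ z_1^{-1}z_2,\ z_1^{-1}z_3,\ \dots,\ z_1^{-1}z_n).
\]
This is well-defined as a map $T^n\to T^n$ since the unit circle is closed under multiplication and inversion, and it is continuous. Its inverse is $F^{-1}(w_1, w_2, \dots, w_n) = (w_1, w_1w_2, \dots, w_1w_n)$, which is likewise continuous, so $F$ is a homeomorphism.

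It remains to check equivariance, i.e.\ that $F(\omega\cdot_X v) = \omega\cdot_Y F(v)$ for all $v\in T^n$, where $\omega\cdot_X$ denotes the diagonal action on $X$ and $\omega\cdot_Y$ the first-coordinate action on $Y$. For the first coordinate, $F$ sends $\omega z_1$ to $\omega z_1$, which matches the first-coordinate action on $Y$. For the $i$-th coordinate with $i\geq 2$, applying $F$ after the diagonal action gives $(\omega z_1)^{-1}(\omega z_i) = \omega^{-1}z_1^{-1}\omega z_i = z_1^{-1}z_i$, which is exactly the $i$-th coordinate of $F(v)$ and is fixed by $\omega\cdot_Y$. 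Hence $F\circ(\omega\cdot_X) = (\omega\cdot_Y)\circ F$, so $F$ is $\Z_3$-equivariant.

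Since $\omega$ generates $\Z_3$, equivariance for the generator implies equivariance for the whole group action, completing the proof. I do not expect any genuine obstacle here: the only point requiring a little care is the bookkeeping that inversion on the circle intertwines the rotation action with its inverse (the cancellation $\omega^{-1}\cdot\omega = 1$ in the $i\geq 2$ coordinates), and that the map chosen is a bona fide homeomorphism of the torus rather than merely of $\C^n$; both are immediate from the explicit formulas for $F$ and $F^{-1}$.
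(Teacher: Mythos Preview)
Your proof is correct and essentially identical to the paper's own argument: both construct the same shearing map $F(z_1,\dots,z_n)=(z_1, z_1^{-1}z_2,\dots,z_1^{-1}z_n)$ with inverse $(w_1,\dots,w_n)\mapsto(w_1,w_1w_2,\dots,w_1w_n)$, and verify equivariance by the same cancellation $(\omega z_1)^{-1}(\omega z_i)=z_1^{-1}z_i$.
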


\begin{proof}
  The maps $h\colon X \to Y$ and $h'\colon Y \to X$ defined by
  \begin{align*}
    h\colon (z_1, \dots, z_n) &\mapsto (z_1, z_1^{-1} z_2, \dots, z_1^{-1} z_n) \\
    h'\colon (z_1, \dots, z_n) &\mapsto (z_1, z_1 z_2, \dots, z_1 z_n)
  \end{align*}
  are clearly continuous and mutually inverse.
  We will show that $h$ preserve the actions involved, and hence that $h$ is an equivariant homeomorphism:
  \[
    h(\omega\cdot_X (z_1, \dots, z_n)) = h(\omega z_1, \dots, \omega z_n) = (\omega z_1, z_1^{-1} z_2, \dots, z_n z_1^{-1}) = \omega\cdot_Y h(z_1, \dots, z_n) \qedhere
  \]
\end{proof}

\begin{remark}
  If we see a torus $T^n$ as a quotient of $\R^n$ over the standard lattice, Lemma~\ref{thm:simple_torus} shows that factoring out the action is the same as factoring out the lattice generated by $\{\frac{1}{3} e_1, e_2, \dots, e_n\}$.
  This means that, topologically, the quotient is still a torus.
\end{remark}

For all future homological calculations, we can assume that $\Z_3$ acts by multiplying by $\omega$ on the first coordinate.
Using this simplified action on the torus it is much easier to compute the quotient map $p^*\colon H^\bullet(T^n/\Z_3)\to H^\bullet(T^n)$.
To achieve this objective, it is necessary to fix a basis for the cohomology of the torus.
The ideal choice would be a basis that is ``easy'' to evaluate on homology classes in order to compute easily the image of $p^*$.
In the case of the torus, a direct application of the universal coefficient theorem (\cite[see Section 3.1]{Hat02}) show that homology and cohomology in dimension $1$ are dual to each other; hence we can choose as basis for the first cohomology group the dual of a clever basis for the first homology group.
In particular, let $\{x_i\}$ be the basis for $H_1(T^n)$ corresponding to the coordinate loops in $T^n$ and denote by $\{x^i\}$ the dual basis in $H^1(T^n)\cong \Hom\left(H_1(T^n), \Z\right)$; define analogously $\{q_i\}$ basis for $H_1({T^n}/{\Z_3})$ and $\{q^i\}$ for $H^1({T^n}/{\Z_3})$.
Then
\[
  p^*_1(q^i) =
  \begin{cases}
    3x^1 & \text{if $i=1$}\\
    x^i & \text{otherwise.}
  \end{cases}
\]

The ring structure on cohomology (see \cite[Section 3.2]{Hat02}) of the torus allows us to build a convenient basis for all the other cohomology groups out of $\{x^i\}$.
In fact, elements of the form $x^I = x^{i_1}\smile \dots\smile x^{i_d}$, where $I = (i_1, \dots, i_d)$ and $i_1 < \dots < i_d$, form a basis for $H^d(T^n)$. Let $q^I$ denote the analogous basis for $H^d({T^n}/{\Z_3})$.
Since $p^*$ is a ring map, it commutes with the cup product, hence it can be explicitly computed on such a basis. We have that, for all $d$,
\[
  p^*_d(q^I) = p^*_1(q^{i_1})\smile \dots \smile p^*_1(q^{i_d})=
  \begin{cases}
    3x^I & \mbox{ if }i_1 = 1\\
     x^I & \mbox{ else.}
  \end{cases}
\]
In particular, $p^*_d$ is injective for all $d\geq 0$ and, in this choice of basis, $p^*$ is the diagonal matrix with $\binom{n-1}{d-1}$ 3's and $\binom {n-1}{d}$ 1's on the diagonal.

We are finally ready to compute the equivariant cohomology group of the torus $T^n$ and prove Proposition~\ref{prop:cohomology_torus} which claims that
\(
  H_{\Z_3}^d\left(T^n; \pi_2(P^2)\right) \cong \Z_3^{\binom{n-1}{d-1}}
\).

\begin{proof}[Proof of Proposition~\ref{prop:cohomology_torus}]
  Fix $n\geq 2$.
  By Lemma~\ref{thm:coefficient_structure}, we have a short exact sequence
  \[
    \begin{tikzcd}[cramped, sep=small]
      0\ar[r] & I \ar[r] & \Lambda \ar[r]& M\ar[r]&0
    \end{tikzcd}
  \]
  which induces short exact sequence of cochain complexes
  \[
    \begin{tikzcd}[cramped, sep=small]
      0\ar[r] & C^\bullet_{\Z_3}\left(T^n; I\right)\ar[r] & C_{\Z_3}^\bullet\left(T^n; \Lambda\right) \ar[r]& C^\bullet_{\Z_3}\left(T^n; M\right)\ar[r]&0
    \end{tikzcd}
  \]
  Using Lemma~\ref{lem:key_lemma_cohomology}, we get that the following short sequence is also exact
  \[
  \begin{tikzcd}[cramped, sep=small]
    0\ar[r] & C^\bullet\left(\faktor{T^n}{\Z_3}\right)\ar[r, "p^*"] & C^\bullet\left(T^n\right) \ar[r]& C^\bullet_{\Z_3}\left(T^n; M\right)\ar[r]&0
  \end{tikzcd}
  \]
  This short exact sequence induces the following long exact sequence in cohomology
  \[
    \begin{tikzcd}[cramped, sep=small]
      \dots \ar[r] & H^d\left(\faktor{T^n}{\Z_3}\right) \ar[r, "p^*_d"] & H^d\left(T^n\right) \ar[r] & H_{\Z_3}^d\left(T^n; M\right) \ar[r]& H^{d+1}\left(\faktor{T^n}{\Z_3}\right) \ar[r, "p^*_{d+1}"] & H^{d+1}\left(T^n\right) \ar[r] & \dots
    \end{tikzcd}
  \]
  Since $p^*_d$ is injective for any $d\geq 1$, by exactness we have that
  \(
    H_{\Z_3}^d\left(T^n; M\right) \cong \coker p^*_d
  \).
  Furthermore, $\coker p^*_d = \ints^{\binom nd} / \operatorname{im} p^*_d \simeq \ints_3^{\binom{n-1}{d-1}}$ which yields the desired.
\end{proof}

\subsection{The minion isomorphism}

Putting everything together, we can now provide the required isomorphism of minions $\affine_3$ and $\hpol{S^1, P^2}$. Recall the definition of maps $\mu\colon \affine_3 \to \hpol{S^1, P^2}$ which maps $\alpha$ to the homotopy class of the monomial map $m_\alpha$, and $\gamma\colon \hpol{S^1, P^2} \to \affine_3$ which maps a homomotopy class of a map $f$ to the sequence of its degrees.

\begin{proof}[Proof of Lemma~\ref{thm:affine_minion}]
  We show that $\mu$ and $\gamma$ are mutually inverse minion homomorphisms. Recall that $\mu$ is injective by Lemma~\ref{lem:all_distinct}. We first show that it is surjective. Indeed, we have that, for each $n \geq 1$,
  \[
    \card{ [T^n, P^2]_{\Z_3} } = 3^{n-1}
  \]
  by a combination of Proposition~\ref{prop:cohomology_torus} and Theorem~\ref{thm:obstruction} since $H^2(T^n, \pi_2(P^2)) \simeq \ints_3^{n-1}$, and the latter has $3^{n-1}$ elements.

  The above, in particular, means that $\mu_n \colon \affine_3^{(n)} \to \hpol[n]{S^1, P^2}$ is onto, and hence a bijection.
  Consequently, $\gamma$ is a minion homomorphism by Lemma~\ref{lem:gamma_is_homomorphism} since every class in $[T^n, P^2]_{\ints_3}$ contains a monomial map.
  Observe that $\mu\circ \gamma$ is the identity map by Lemma~\ref{lem:monomial_degrees}, which implies that $\gamma$ is the inverse of the bijection $\mu$.

  Finally, an inverse of a bijective minion homomorphism $\gamma$ is a minion homomorphism since, for all $\alpha \in \affine_3$,
  \[
    \mu(\alpha^\pi)
    = \mu(\gamma(\mu(\alpha))^\pi)
    = \mu(\gamma(\mu(\alpha)^\pi))
    = \mu(\alpha)^\pi.
  \]
  This concludes that $\mu$ and $\gamma$ are the required minion isomorphisms.
\end{proof}

\section{A minion homomorphism}
  \label{app:categories}

The goal of this appendix is to provide minion homomorphisms
\[
  \Pol(\LO_3, \LO_4) \to \hpol{\homLO3, \homLO4} \to \hpol{S^1, P^2}
\]
using the results of previous appendices. In particular, the second of the above homomorphism uses the equivariant maps $S^1 \to \homLO3$ and $\homLO4 \to P^2$.
Let us note that similar minion homomorphisms have been constructed in \cite[Lemma 3.22]{KOWZ23}, and the proof can be altered to construct the homomorphisms, we require here. Rather than doing that, we provide two general lemmata that cover both of these uses as well as any possible future uses.
We formulate and prove these two lemmata in the language of category theory which is well suited. We do not assume deep knowledge of category theory and recall the necessary core notions.

\paragraph*{Categories, functors, and natural transformations}

Instead of giving precise definitions, which can be found in any category theory textbook, we will give examples of the basic notions of category theory.

Let us recall that a \emph{category} $\cat$ is a class of \emph{objects} (e.g., relational structures of a given signature, topological spaces, groups, sets, etc.) together with a set $\hom(A, B)$ of \emph{morphisms} for each pair of objects $A, B \in \cat$ (e.g., homomorphisms from $A$ to $B$, etc.).
Morphisms can be composed in a natural way, and for each object $A$ there is a (two-sided) identity morphism $1_A \in \hom(A, A)$.
Also recall that an \emph{isomorphism} is an invertible morphism, i.e., $f\in \hom(A, B)$ is an isomorphism if there is $g \in \hom(B, A)$ such that $f\circ g = 1_A$ and $g\circ f = 1_B$.

\begin{example}
  The most basic category is the category of sets: objects are sets, and morphisms are maps between the two given sets. We denote this category by $\set$.
  Relational structures of a fixed signature $\Sigma$ together with homomorphisms form a category which we denote by $\rels_\Sigma$ or usually just $\rels$ when the signature is clear from the context. The categorical notion of isomorphism coincides with structural isomorphism.
\end{example}

\begin{example}
  The usual structure of a category on topological spaces has continuous maps as morphisms, and homeomorphisms as isomorphisms.
  Nevertheless, we will work with a different structure of a category on topological spaces. Namely, the category $\htop_G$ whose objects are topological spaces with an action of a fixed group $G$, and morphisms are \emph{equivariant homotopy classes of maps}, i.e., $\hom(X, Y) = [X, Y]_G$. The composition is defined by $[f] \circ [g] = [f \circ g]$.\footnote{Do not confuse this category with the \emph{homotopy category} which identifies \emph{weakly} homotopy equivalent spaces.}
  We will write just $\htop$ instead of $\htop_{\ints_3}$.

  Isomorphisms in this category coincide with equivariant homotopy equivalences: If $f\colon X \to Y$ and $g\colon Y \to X$ witness a homotopy equivalence, then both $[f]$ and $[g]$ are isomorphisms in $\htop_G$, since we have $f\circ g \sim 1_X$ and $g\circ f \sim 1_Y$, and consequently $[f] \circ [g] = 1_X$ and $[g] \circ [f] = 1_Y$.
  Loosely speaking, this category is obtained from the category of topological spaces with a $G$-action by identifying equivariantly homotopic spaces.
\end{example}

\begin{example}
  Another useful category is the category whose objects are non-negative integers, and morphisms from $n$ to $m$ are maps $[n] \to [m]$, i.e., $\hom(n, m) = \{f\colon [n] \to [m]\}$. We denote this subcategory of the category of sets by $\nat$.
  Note that $\nat$ contains all important information about finite sets since it has an object for each isomorphism class (cardinality) of finite sets.
\end{example}

A \emph{functor} is a natural notion of a morphism between categories. A functor $\minion F\colon \cat_1 \to \cat_2$ is a mapping between objects and morphisms that preserves composition and identity morphisms. More specifically, we require that $\minion F(A) \in \cat_2$ for each $A \in \cat_1$, and $\minion F(f) \in \hom(\minion F(A), \minion F(B))$ for each $f\in \hom(A, B)$ where $A, B \in \cat_1$.

\begin{example}
  We will view the assignment $\minion B\colon \rel A \mapsto \homrel A$ as a functor $\rels \to \htop$. This functor is defined on morphism by mapping a homomorphism $f\colon \rel A \to \rel B$ to the class $[f_*]$ where $f_*\colon \homrel A \to \homrel B$ is the induced continuous map.
\end{example}

\begin{example}
  An abstract minion is a functor $\minion M\colon \nat \to \set$ such that $\minion M(n) \neq \emptyset$ for all $n \neq 0$.
\end{example}

Finally, \emph{natural transformations} are morphisms between functors.

\begin{example}
  A natural transformation between two minions coincides with the notion of minion homomorphism: a natural transformation $\eta\colon \minion M \to \minion N$ is a collection of maps $\eta_n$ where $n$ ranges through objects of $\nat$ such that, for each $n, m \in \nat$ and $\pi \in \hom(n, m)$, the following square commutes.
  \[\begin{tikzcd}
    \minion M(n) \arrow[d, "\eta_n"'] \arrow[r, "\pi^\minion M"] & \minion M(m) \arrow[d, "\eta_m"] \\
    \minion N(n) \arrow[r, "\pi^\minion N"']                     & \minion N(m)
  \end{tikzcd}\]
  It is easy to observe that this square commutes if and only if $\eta$ preserves minors.
\end{example}

In general, we will say that a map $\eta_A$ that depends on an object $A$ of some category is \emph{natural in $A$} if some square, obtained by varying $A$ by a morphism $a\colon A \to A'$, commutes. Which square commutes is usually clear from the context.

\paragraph*{Products and polymorphisms}

A polymorphism is a homomorphism from a power, and hence if we want to define polymorphisms in a category, we need to have a notion of a power, or more generally \emph{product}. Products in an arbitrary category are defined by how they interact with other objects. More precisely, we can observe that products of sets, structures, or topological spaces there is a bijection
\[
  \hom(A, B_1) \times \dots \times \hom(A, B_n) \simeq \hom(A, B_1 \times \dots \times B_n)
\]
for each object $A$, i.e., in order to define a morphism from an object $A$ to a product of objects $B_1$, \dots, $B_n$, it is enough to specify an $n$-tuple of morphisms from $A$ to each $B_i$. (In order to define products precisely, we would also require that the above bijection is natural in $A$.) Formally, the products are defined as follows.

\begin{definition}
  Let $B_1, \dots, B_n \in \cat$. We say that an object $P$ together with morphisms $p_i \in \hom(P, B_i)$ for $i \in [n]$ is the \emph{product} of $B_1$, \dots, $B_n$ if, for each $A \in \cat$ and morphisms $a_i \in \hom(A, B_i)$, there is a unique morphism $a\in \hom(A, P)$ such that $a_i = p_i \circ a$. The maps $p_i$ are called \emph{projections}.
\end{definition}

Let us note that the projections can be obtained from the above bijection by substituting $P$ for $A$ and considering the identity map on $P$ which is assigned a tuple of morphisms $(p_1, \dots, p_n) \in \hom(P, B_1) \times \dots \times \hom(P, B_n)$.
Observe that products are well-defined only up to isomorphisms. Indeed, if $(P, p_i)$ and $(P', p'_i)$ are both products of $B_1$, \dots, $B_n$, then there are morphisms $f\in \hom(P, P')$ and $g\in \hom(P', P)$ such that $p_i' = p_i \circ f$ and $p_i = p_i' \circ g$. Since $1_P$ is the unique morphisms such that $p_i = p_i \circ 1_P$, we get that $g\circ f = 1_P$. Analogously, $f\circ g = 1_P'$.

\begin{example}
  Products do not always exists. Nevertheless, all the categories we work with in this paper ($\nat$, $\set$, $\rels$, and $\htop_G$) have all (finite) products which coincide with the usual definitions.

  Let us comment on products in $\htop_G$ which are a bit subtle. The product of $G$-spaces $X_1$, \dots, $X_n$ is the space $X_1 \times \dots \times X_n$ with the usual product topology and the coordinatewise (diagonal) action of $G$. The projections are then the homotopy classes of usual projections.
  Let us show the universal property of the product. Let $Y$ be a $G$-space, and let $f_i\colon Y \to X_i$ be continuous maps. We claim that there is a map $f\colon Y \to X_1 \times \dots \times X_n$ such that $p_i \circ f$ and $f_i$ are equivariantly homotopic, and this map is unique up to equivariant homotopy. The existence is straightforward since we can define $f$ in the usual way:
  \[
    f(y_1, \dots, y_n) = (f_1(y_1), \dots, f_n(y_n))
  \]
  It is easy to check that this map is indeed $G$-equivariant since the $f_i$'s are, moreover, we have $p_i \circ f = f_i$. To show that $f$ is unique up to equivariant homotopy, let $g$ be such that $p_i \circ g$ and $f_i$ are homotopic for all $i$. This means we have homotopies $h_i\colon [0, 1] \times Y \to X_i$ between $f_i$ and $p_i \circ g$ for each $i$. By the universal property of product in topological spaces, these define a homotopy $h\colon [0, 1] \times Y \to X_1 \times \dots \times X_n$, i.e., we define $h$ by the same prescription as $f$ using $h_i$ instead of $f_i$. This $h$ is the required homotopy between $f$ and $g$.
\end{example}

We can now define the notion of a polymorphism in any category with products.

\begin{definition}
  Let $\cat$ be a category with finite products, $A$, $B \in \cat$ be two objects, and $n \geq 0$. We define a \emph{polymorphism} from $A$ to $B$ of arity $n$ to be any element $f \in \hom(A^n, B)$ where $A^n$ is the $n$-fold power of $A$.
\end{definition}

In order to define the \emph{polymorphism minion} $\Pol(A, B)$ in such a category $\cat$, we need a functor from $\nat$ to $\set$ that assigns to each $n$, the set $\hom(A^n, B)$. An easy way to observe that such a functor can be always defined is to decompose it as two contravariant (i.e., arrow-reversing) functors $\nat \to \cat$ and $\cat \to \set$: the first of which assigns to $n$ the $n$-fold power $A^n$ of $A$, and the second of which assigns to this $A^n$ the set $\hom(A^n, B)$.

Let us first observe that the assignment $n \mapsto A^n$ can be extended to a functor $A^{-}\colon \nat \to \cat$ for each $A \in \cat$. This can be relatively easily observed in all concrete cases, e.g., if $A$ is a structure $\rel A$, then, for each $\pi\colon [n] \to [m]$, the mapping $\rel A^\pi\colon A^m \to A^n$ is defined by $a \mapsto a \circ \pi$ which is clearly a homomorphism. To give a general definition, we use the universal property of products.
Let $\pi\colon [n] \to [m]$ be a mapping. We want to define a homomorphism $p^A_\pi\colon A^m \to A^n$. Using the definition of the $n$th power, it is enough to give an $n$-tuple of homomorphisms $a_i\colon A^m \to A$, $i \in [n]$. We let $a_i = p_{\pi(i)}$ where $p_j$ denote projections of the $m$th power of $A$. Finally, it is straightforward to check that the assignment $\minion A(\pi) = p_\pi^A$ preserves compositions and identities.

The second of these assignments is known as \emph{contravariant hom-functor} $\hom({-}, B)$. It maps an object $A \in \cat$ to the set $\hom(A, B)$, and a morphism $f\in \hom(A, A')$ to the mapping ${-}\circ f\colon \hom(A', B) \to \hom(A, B)$ defined by $g\mapsto g\circ f$.

\begin{remark}
  If $\cat = \set$, then the functor $A^{-}$ can be also described as a restriction of the contravariant hom-functor $\hom({-}, A)$ to $\nat$ viewed as a subcategory of $\set$.
\end{remark}

\begin{definition}
  Let $\cat$ be a category with finite products, and $A, B \in \cat$ be such that $\hom(A, B)$ is non-empty. We define the \emph{polymorphism minion} $\pol_\cat(A, B)$ as the composition of functors $A^{-}$ and $\hom({-}, B)$. We will omit the index $\cat$ whenever the category is clear from the context.
\end{definition}

\begin{example}
  The polymorphism minion in the category $\htop$ coincides with the minion of homotopy classes of polymorphisms, i.e., for all spaces $X, Y \in \htop$, we have $\pol_\htop(X, Y) = \hpol{X, Y}$.
\end{example}

Note that the definition of this polymorphism minion does not depend (up to natural equivalence) on which of the realisation of the power functor $A^{-}$ we take.
Finally, we recall the categorical definition of preserving products, which ensures in particular that $\minion F(A)^{-}$ is naturally equivalent to the composition of $A^{-}$ and $\minion F$.

\begin{definition}
  We say that a functor $\minion F\colon \cat_1 \to \cat_2$ \emph{preserves finite products} if, for each $A_1, \dots, A_n \in \cat_1$ and their product $(P, p_i)$, $(\minion F(P), \minion F(p_i))$ is a product of $\minion F(A_1), \dots, \minion F(A_n)$.
\end{definition}

\begin{lemma}
  If a functor $\minion F$ preserves products, then $\minion F \circ A^{-}$ and $\minion F(A)^{-}$ are naturally equivalent.
\end{lemma}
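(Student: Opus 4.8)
The plan is a short diagram chase, mirroring the argument recalled above that products are unique up to isomorphism. First I would fix $A\in\cat_1$ and, for each $n$, record the chosen product $(A^n,(p_i)_{i\in[n]})$ in $\cat_1$ used to define $A^{-}$ and the chosen product $(\minion F(A)^n,(r_i)_{i\in[n]})$ in $\cat_2$ used to define $\minion F(A)^{-}$; write $(p_i')_{i\in[m]}$ and $(r_i')_{i\in[m]}$ for the projections of the $m$-fold powers $A^m$ and $\minion F(A)^m$. Since $\minion F$ preserves finite products, $(\minion F(A^n),(\minion F(p_i))_{i\in[n]})$ is again a product of $n$ copies of $\minion F(A)$ in $\cat_2$; hence there is a unique morphism $\theta_n\colon\minion F(A^n)\to\minion F(A)^n$ with $r_i\circ\theta_n=\minion F(p_i)$ for all $i\in[n]$, and, as noted above for products in general, $\theta_n$ is an isomorphism. (For $n=0$ this is simply the fact that $\minion F$ sends a terminal object to a terminal object, i.e.\ the empty-product case of the hypothesis.) These $\theta_n$ will be the components of the claimed natural equivalence.

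The main point is to verify naturality: for every $\pi\colon[n]\to[m]$ I must show $\theta_n\circ\minion F(p^A_\pi)=p^{\minion F(A)}_\pi\circ\theta_m$, where $p^A_\pi=(A^{-})(\pi)\colon A^m\to A^n$ and $p^{\minion F(A)}_\pi=(\minion F(A)^{-})(\pi)\colon\minion F(A)^m\to\minion F(A)^n$ are the morphisms characterised via the universal property by $p_i\circ p^A_\pi=p'_{\pi(i)}$ and $r_i\circ p^{\minion F(A)}_\pi=r'_{\pi(i)}$ for all $i\in[n]$. By the uniqueness clause of the universal property of the product $\minion F(A)^n$, it suffices to check that the two composites $\minion F(A^m)\to\minion F(A)^n$ agree after post-composition with each projection $r_i$, $i\in[n]$. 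On one side, $r_i\circ\theta_n\circ\minion F(p^A_\pi)=\minion F(p_i)\circ\minion F(p^A_\pi)=\minion F(p_i\circ p^A_\pi)=\minion F(p'_{\pi(i)})$, using the defining property of $\theta_n$, functoriality of $\minion F$, and the defining property of $p^A_\pi$. On the other side, $r_i\circ p^{\minion F(A)}_\pi\circ\theta_m=r'_{\pi(i)}\circ\theta_m=\minion F(p'_{\pi(i)})$, the last equality being the defining property of $\theta_m$ applied at the index $\pi(i)\in[m]$. These coincide, so the square commutes; hence $\theta=(\theta_n)_{n}$ is a natural isomorphism $\minion F\circ A^{-}\Rightarrow\minion F(A)^{-}$.

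I do not expect a genuine obstacle here. The only care required is bookkeeping of variances: $A^{-}$ and $\minion F(A)^{-}$ are contravariant on $\nat$, so the naturality square sits over $\pi\colon[n]\to[m]$ with the two vertical arrows running from the $m$-fold powers to the $n$-fold powers, and one must keep straight which projections belong to the $n$-fold and which to the $m$-fold powers. Everything else is forced by the uniqueness part of the universal property of products.
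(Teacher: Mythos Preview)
Your proof is correct and follows essentially the same approach as the paper: define the component $\theta_n$ (the paper calls it $\eta_n$) via the universal property applied to the tuple $(\minion F(p_i))_{i\in[n]}$, observe it is an isomorphism by uniqueness of products, and check naturality against each $\pi$ using the universal property again. The paper leaves the naturality verification as ``straightforward'', whereas you have spelled it out explicitly, which is fine.
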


\begin{proof}
  We define a natural equivalence $\eta\colon \minion F\circ A^{-} \to \minion F(A)^{-}$ by components as $\eta_n\colon \minion F(A^n) \to \minion F(A)^n$ is defined as the map given by the $n$-tuple $\minion F(p_1), \dots, \minion F (p_n)\colon \minion F(A^n) \to \minion F(A)$. Since $\eta_n$ commutes with the projections of the two products, it is an isomorphism by the same argument as we used in showing that product is unique up to isomorphisms.
  It is also straightforward to check that $\eta$ is natural using the universal property of the product.
\end{proof}

\subsection{Two general lemmata}

Given the definitions above, the first lemma is rather trivial. It claims that we can transfer polymorphisms through any functor that preserves products. It was mentioned in the context of promise CSPs in \cite{WZ20}, although in different flavours it can be tracked down to Lawvere theories.

\begin{lemma} \label{lem:wrochna}
  If a functor $\minion F\colon \cat_1 \to \cat_2$ preserves products, then there is a minion homomorphism
  \[
    \xi\colon \Pol(A, B) \to \Pol(\minion F(A), \minion F(B))
  \]
  for all $A, B \in \cat_1$ such that $\hom(A, B) \neq \emptyset$.
\end{lemma}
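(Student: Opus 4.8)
The plan is to build $\xi_n$ by applying $\minion F$ and then repairing the domain with the canonical product isomorphism supplied by the previous lemma. Fix $A, B\in\cat_1$ with $\hom(A,B)\neq\emptyset$, and fix once and for all realisations $(A^n,p_i)$ and $(\minion F(A)^n, q_i)$ of the $n$-fold powers in $\cat_1$ and $\cat_2$ respectively. Since $\minion F$ preserves products, $(\minion F(A^n),\minion F(p_i))$ is also a product of $n$ copies of $\minion F(A)$, so there is a unique isomorphism $\theta_n\colon \minion F(A)^n\to \minion F(A^n)$ with $\minion F(p_i)\circ\theta_n = q_i$ for all $i\in[n]$; it is the inverse of the natural equivalence constructed in the preceding lemma. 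For a polymorphism $f\in\Pol^{(n)}(A,B)=\hom(A^n,B)$ I then set
\[
  \xi_n(f) \;=\; \minion F(f)\circ\theta_n \;\in\; \hom(\minion F(A)^n,\minion F(B)) \;=\; \Pol^{(n)}(\minion F(A),\minion F(B)).
\]

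Next I would check that $\xi=(\xi_n)_n$ preserves minors. Recall that the minor of $f$ under $\pi\colon[n]\to[m]$ is $f^\pi=f\circ p^A_\pi$, where $p^A_\pi\colon A^m\to A^n$ is the morphism determined by $p_i\circ p^A_\pi=p_{\pi(i)}$, and similarly with $q^{\minion F(A)}_\pi$ on the $\minion F(A)$-side. Expanding both sides of the desired identity $\xi_m(f^\pi)=\xi_n(f)^\pi$ and using functoriality $\minion F(f\circ p^A_\pi)=\minion F(f)\circ\minion F(p^A_\pi)$, the problem reduces to the single equation $\minion F(p^A_\pi)\circ\theta_m=\theta_n\circ q^{\minion F(A)}_\pi$ of morphisms $\minion F(A)^m\to\minion F(A^n)$. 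Since the target is a product with projections $\minion F(p_i)$, by the uniqueness part of the universal property it suffices to verify this after postcomposing with each $\minion F(p_i)$; both sides then collapse to the projection of $\minion F(A)^m$ indexed by $\pi(i)$, using $\minion F(p_i)\circ\theta_{\bullet}=q_{\bullet}$ and $p_i\circ p^A_\pi=p_{\pi(i)}$. Finally, the non-triviality clause needed for $\xi$ to land in a genuine minion is free: any $h\in\hom(A,B)$ yields $h\circ p_1\in\Pol^{(n)}(A,B)$ for every $n\geq 1$, so each $\Pol^{(n)}(\minion F(A),\minion F(B))$ receives an element under $\xi_n$ and is nonempty.

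There is no real obstacle here — the statement is essentially an exercise in the formalism set up above — and the only thing that needs care is the consistent bookkeeping of the chosen powers $A^n$, $\minion F(A)^n$ and the isomorphisms $\theta_n$ through the minor diagrams. If one prefers to sidestep the explicit $\theta_n$ altogether, the cleanest route is categorical: $\minion F$ induces a natural transformation $\hom({-},B)\Rightarrow\hom(\minion F({-}),\minion F(B))$ between contravariant functors on $\cat_1$, with component at $C$ the map $g\mapsto\minion F(g)$; precomposing this with the power functor $A^{-}\colon\nat\to\cat_1$ and then composing with the natural equivalence $\minion F\circ A^{-}\cong\minion F(A)^{-}$ of the previous lemma produces a natural transformation from $\Pol(A,B)=\hom({-},B)\circ A^{-}$ to $\hom({-},\minion F(B))\circ\minion F(A)^{-}=\Pol(\minion F(A),\minion F(B))$, which is exactly the minion homomorphism sought since natural transformations of functors $\nat\to\set$ are precisely minion homomorphisms. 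I would present whichever of the two formulations reads more transparently in context.
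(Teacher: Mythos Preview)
Your proposal is correct and matches the paper's approach. The paper takes exactly your second, categorical route: it invokes the natural equivalence $\minion F\circ A^{-}\cong\minion F(A)^{-}$ from the previous lemma, declares ``we may assume they are equal'', sets $\xi_n(f)=\minion F(f)$, and verifies minor preservation via functoriality $\minion F(f\circ p_\pi^A)=\minion F(f)\circ\minion F(p_\pi^A)=\minion F(f)\circ p_\pi^{\minion F(A)}$. Your first formulation with the explicit isomorphisms $\theta_n$ is simply the same argument with the identification unpacked rather than suppressed.
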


\begin{proof}
  By definition $\Pol(\minion F(A), \minion F(B))$ is the composition of $\minion F(A)^{-}$ and $\hom({-}, \minion F(B))$. Since the first functor is naturally equivalent to $\minion F\circ A^{-}$, we may assume they are equal. We can then define a natural transformation $\xi\colon \hom(A^{-}, B) \to \hom(\minion F(A^{-}), \minion F(B))$ by
  \[
    \xi_n(f) = \minion F(f).
  \]
  To show $\xi$ preserves minors, observe that $f^\pi$ is defined as $f\circ p_\pi^A$, and consequently
  \(
    \minion F(f^\pi)
    = \minion F(f \circ p_\pi^A)
    = \minion F(f) \circ \minion F(p_\pi^A)
    = \minion F(f) \circ p_\pi^{\minion F(A)}
    = \minion F(f)^\pi
  \).
\end{proof}

The second lemma is a direct generalisation of \cite[Lemma~4.8(1)]{BBKO21}, and the proof is analogous.

\begin{lemma} \label{lem:relaxation}
  Let $A, A', B, B' \in \cat$ be such that $\hom(A, B)$ is non-empty. Then every pair $a\in \hom(A', A)$ and $b\in \hom(B, B')$ induces a minion homomorphism
  \[
    \xi\colon \Pol(A, B) \to \Pol(A', B').
  \]
\end{lemma}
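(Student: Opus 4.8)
The plan is to follow the construction of \cite[Lemma~4.8(1)]{BBKO21}: turn an $n$-ary polymorphism $f\in \hom(A^n, B)$ into an $n$-ary polymorphism from $A'$ to $B'$ by pre- and post-composing with $a$ and $b$ (the former lifted to powers). First I would record that, for each fixed $n$, the $n$-th power is functorial in the object: since $\cat$ has finite products (this is implicit in $\Pol$ being defined), the universal property of $A^n$ applied to the morphisms $a\circ p'_i\colon (A')^n \to A$, $i\in[n]$, yields a unique morphism $a^n\in \hom((A')^n, A^n)$ with $p_i\circ a^n = a\circ p'_i$ for all $i$, where $p_i$ and $p'_i$ denote the projections of $A^n$ and $(A')^n$ respectively. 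I then define
\[
  \xi_n(f) = b\circ f\circ a^n \in \hom((A')^n, B').
\]
This is a well-defined collection of maps $\xi_n\colon \hom(A^n, B) \to \hom((A')^n, B')$, and it also shows $\hom(A', B')$ is non-empty (apply $\xi_1$ to any element of $\hom(A,B)$), so $\Pol(A', B')$ is a genuine minion.

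The only substantive point is that $\xi$ preserves minors, i.e.\ $\xi_m(f^\pi) = \xi_n(f)^\pi$ for every $\pi\colon [n]\to[m]$. Unravelling the definitions and using $f^\pi = f\circ p_\pi^A$ and $g^\pi = g\circ p_\pi^{A'}$, both sides are morphisms $(A')^m \to B'$ of the form $b\circ f\circ({-})$, so the claim reduces to the single identity $a^n\circ p_\pi^{A'} = p_\pi^A\circ a^m$ of morphisms $(A')^m\to A^n$. This is a naturality square saying that building up a power by varying the shape via $\pi$ commutes with building it up by varying the object via $a$. I would verify it by the universal property of $A^n$: it suffices to check equality after composing with each projection $p_i\colon A^n\to A$, and there both composites collapse to $a\circ p'_{\pi(i)}$, using $p_i\circ a^n = a\circ p'_i$, $p'_i\circ p_\pi^{A'} = p'_{\pi(i)}$, $p_i\circ p_\pi^A = p_{\pi(i)}$, and $p_{\pi(i)}\circ a^m = a\circ p'_{\pi(i)}$.

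There is no genuinely hard step here: the lemma is a formal consequence of the universal property of products together with the definitions of $p_\pi^A$ and of minors given earlier in this appendix. The only care required is the bookkeeping with projections in the naturality square; I expect the cleanest route is to argue entirely from the universal property rather than committing to a concrete model of the powers $A^{-}$, which keeps the argument short and model-independent.
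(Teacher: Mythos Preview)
Your proposal is correct and follows essentially the same approach as the paper: define $\xi_n(f)=b\circ f\circ a^n$ and reduce minor-preservation to the naturality square $a^n\circ p_\pi^{A'}=p_\pi^A\circ a^m$. The paper merely asserts that $a$ induces a natural transformation between the power functors and draws the commuting diagram, whereas you additionally spell out the verification of that square coordinate-wise via the universal property of $A^n$; this extra detail is sound and arguably clearer than the paper's terse appeal to naturality.
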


\begin{proof}
First, observe that $a$ induces a natural transformation $A^{-} \to (A')^{-}$, which can be shown by using the definition of products. We will denote its components by $a^n \in \hom (A^n, (A')^n)$.
The minion homomorphism is defined by $\xi_n(f) = b \circ f \circ a^n$. To show that $\xi$ preserves minors, observe that the following diagram commutes.
  \[
    \begin{tikzcd}
    (A')^n \arrow[d, "p_\pi^{A'}"'] \arrow[r, "a^n"] & A^n \arrow[d, "p_\pi^A"'] \arrow[rd, "f^\pi"] &                  &    \\
    (A')^m \arrow[r, "a^m"]                          & A^m \arrow[r, "f"']                           & B \arrow[r, "b"] & B'
    \end{tikzcd}
  \]
  Consequently, we have that
  \(
    \xi(f^\pi)
    = b \circ f^\pi \circ a^n
    = (b \circ f \circ a^m) \circ p_\pi^{A'}
    = \xi_m(f)^\pi
  \).
\end{proof}

\subsection{Homomorphism complexes preserve products}

We prove that the functor $\homrel{{-}}$ preserves products in the categorical sense. Essentially, it follows by the same argument as the well-known fact that the product of homomorphisms complexes is homotopically equivalent to the homomorphism complex of the product (see, e.g., \cite[Section 18.4.2]{Koz08}).
We provide a bit more detailed proof to show that the homotopy equivalence can be taken equivariant, and that the products are preserved in the categorical sense which is a slightly stronger statement. Let us first prove that there is an equivariant homotopy equivalence. To provide this equivalence, we use the following technical lemma.

\begin{lemma}
  Let $\complex K$ be an order complex of a poset. If $\alpha, \beta\colon V(\complex K) \to V(\complex K)$ are monotone maps such that $\alpha(x) \leq \beta(x)$ for each $x$, then their linear extensions are homotopic.
\end{lemma}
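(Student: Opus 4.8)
The plan is to realize the homotopy geometrically as the map induced by a single monotone map on a ``prism'' poset. Write $P$ for the poset whose underlying set is $V(\complex K)$, so that $\complex K$ is the order complex of $P$; I will write $\complex K = \Delta(P)$, and more generally $\Delta(Q)$ for the order complex of a poset $Q$. Let $\mathbf 2$ denote the two-element chain $\{0 < 1\}$, and consider the poset product $P \times \mathbf 2$, whose elements are the pairs $(p,0)$ and $(p,1)$, with $(p,i) \le (q,j)$ iff $p \le q$ and $i \le j$. Define $h\colon P \times \mathbf 2 \to P$ by $h(p,0) = \alpha(p)$ and $h(p,1) = \beta(p)$. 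The whole proof then consists of three steps: (i) check that $h$ is monotone; (ii) recall that $\Delta(P \times \mathbf 2)$ triangulates the prism $|\complex K| \times [0,1]$; (iii) read off that the map $|\Delta(h)|$ induced on geometric realizations is, under this identification, a homotopy from the linear extension of $\alpha$ to that of $\beta$.

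For step (i): suppose $(p,i) \le (q,j)$ in $P \times \mathbf 2$, so $p \le q$ and $i \le j$. If $i = j = 0$ then $h(p,0) = \alpha(p) \le \alpha(q) = h(q,0)$ since $\alpha$ is monotone; the case $i = j = 1$ is identical with $\beta$ in place of $\alpha$; and if $i = 0$, $j = 1$ then $h(p,0) = \alpha(p) \le \alpha(q) \le \beta(q) = h(q,1)$, using monotonicity of $\alpha$ and the hypothesis $\alpha(q) \le \beta(q)$. Hence $h$ is order-preserving, so it induces a simplicial map $\Delta(h)\colon \Delta(P \times \mathbf 2) \to \Delta(P) = \complex K$, and therefore a continuous map $|\Delta(h)|$ between the underlying spaces. (In particular, restricting to the copy of $P$ at level $0$ shows $\alpha$ sends chains to chains, so its linear extension $|\alpha|\colon |\complex K| \to |\complex K|$ is well defined, and likewise for $\beta$.)

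For steps (ii)–(iii): it is a standard fact (the ``staircase'' triangulation of a prism; see e.g.\ \cite{Koz08}) that for any poset $Q$ the order complex $\Delta(Q \times \mathbf 2)$ is a triangulation of $|\Delta(Q)| \times [0,1]$ in which the subcomplexes spanned by $\{(q,0) : q \in Q\}$ and $\{(q,1) : q \in Q\}$ are identified with $|\Delta(Q)| \times \{0\}$ and $|\Delta(Q)| \times \{1\}$. Applying this with $Q = P$ and composing with $|\Delta(h)|$ yields a continuous map $H\colon |\complex K| \times [0,1] \to |\complex K|$. Since the restriction of $h$ to the levels $0$ and $1$ is $\alpha$ and $\beta$ respectively, the maps $H(\cdot,0)$ and $H(\cdot,1)$ are precisely the linear extensions of $\alpha$ and $\beta$, so $H$ is the required homotopy.

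I do not expect a serious obstacle; the only point needing care is step (ii), the homeomorphism $|\Delta(P \times \mathbf 2)| \cong |\complex K| \times [0,1]$, which is entirely standard but must be stated precisely since the homotopy we want is exactly the realization of $\Delta(h)$ transported across it. I also note that the construction is functorial in $h$: if $\Z_3$ acts on $P$ and $\alpha,\beta$ are equivariant, then $h$ is equivariant for the action on $P \times \mathbf 2$ that is trivial on the second factor, so $H$ is an equivariant homotopy; this is what makes the lemma usable for the equivariant statements later in this section, even though equivariance is not part of the lemma's statement.
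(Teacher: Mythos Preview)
Your proposal is correct and is essentially the same argument as the paper's: both define the monotone map $h$ on the product poset $P \times \mathbf 2$ (the paper writes it as $\{0,1\} \times V(\complex K)$) and use that the order complex of this product is homeomorphic to $\lvert \complex K \rvert \times [0,1]$ to read off the homotopy. Your version is simply more detailed, spelling out the monotonicity check and the equivariance remark that the paper leaves implicit.
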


\begin{proof}
  Observe that the map $h\colon \{0, 1\} \times V(\complex K) \to V(\complex K)$ defined by $h(0, x) = \alpha(x)$ and $h(1, x) = \beta(x)$ is monotone, and that the complex of the product poset is homeomorphic to $[0, 1] \times \complex K$. Consequently, $h$ induces a homotopy between $\alpha$ and $\beta$ by linear extension.
\end{proof}

\begin{lemma}
  For each $G$-structure $\rel A$, and structures $\rel B_1$, \dots, $\rel B_n$, there is a $G$-equivariant homotopy equivalence
  \[
    \Hom(\rel A, \rel B_1) \times \dots \times \Hom(\rel A, \rel B_n)
    \sim_G
    \Hom(\rel A, \rel B_1 \times \dots \times \rel B_n).
  \]
\end{lemma}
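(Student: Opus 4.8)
The plan is to build explicit monotone, $G$-equivariant maps between the underlying posets in both directions and then invoke the comparison lemma just proved to see that the two composites are $G$-homotopic to the respective identities. First I would fix notation: write $P_i = \mhom(\rel A, \rel B_i)$ and $Q = \mhom(\rel A, \rel B_1 \times \dots \times \rel B_n)$, so that $\Hom(\rel A, \rel B_i)$ is the order complex of $P_i$ and $\Hom(\rel A, \rel B_1 \times \dots \times \rel B_n)$ is the order complex of $Q$, each carrying the $G$-action induced by precomposition with the action on $\rel A$. I would then use the standard ``staircase triangulation'' fact (see \cite[Section 18.4.2]{Koz08}) that the topological product $\prod_i |\mathrm{ord}(P_i)|$ is naturally homeomorphic to $|\mathrm{ord}(\prod_i P_i)|$, where $\prod_i P_i$ carries the componentwise order; by naturality this homeomorphism is $G$-equivariant for the diagonal action on $\prod_i P_i$, which identifies the left-hand side of the lemma with the order complex of $\prod_i P_i$.

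Next I would define two maps. Let $\phi\colon Q \to \prod_i P_i$ send $m$ to $(\mathrm{pr}_1 \circ m, \dots, \mathrm{pr}_n \circ m)$, i.e.\ the map obtained by applying $\Hom(\rel A, -)$ to each projection homomorphism $\mathrm{pr}_i\colon \rel B_1 \times \dots \times \rel B_n \to \rel B_i$; and let $\psi\colon \prod_i P_i \to Q$ send $(m_1, \dots, m_n)$ to the multihomomorphism $a \mapsto m_1(a) \times \dots \times m_n(a)$. I would check that $\psi$ is well defined: its values are nonempty, and for each $(u_1, \dots, u_k) \in R^{\rel A}$ the box $\psi(m_1,\dots,m_n)(u_1) \times \dots \times \psi(m_1,\dots,m_n)(u_k)$ lies in $R^{\rel B_1 \times \dots \times \rel B_n}$ because it does so in each coordinate, since each $m_i$ is a multihomomorphism. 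I would also check that $\phi$ and $\psi$ are monotone and $G$-equivariant; the latter is a one-line unravelling, as $G$ acts only on the common source $\rel A$.

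Then I would compute the composites. Since every $m_i(a)$ is nonempty, $\mathrm{pr}_i\bigl(m_1(a) \times \dots \times m_n(a)\bigr) = m_i(a)$, hence $\phi \circ \psi = \mathrm{id}_{\prod_i P_i}$. Conversely, every $c \in m(a)$ equals $(\mathrm{pr}_1 c, \dots, \mathrm{pr}_n c)$, so $m(a) \subseteq \mathrm{pr}_1(m(a)) \times \dots \times \mathrm{pr}_n(m(a))$, i.e.\ $\mathrm{id}_Q \le \psi \circ \phi$ pointwise. By the comparison lemma, the linear extensions of $\mathrm{id}_Q$ and $\psi \circ \phi$ are homotopic; since both maps are $G$-equivariant and the homotopy produced there is built from a monotone map on $\{0,1\} \times Q$ with $G$ acting trivially on the interval, that homotopy is $G$-equivariant. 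Transporting everything through the identification of the first paragraph, $|\phi|$ and $|\psi|$ are mutually inverse $G$-equivariant homotopy equivalences, which is the claim.

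The only slightly delicate points are exactly these two ``standard'' inputs: that the product of order complexes is (naturally, hence $G$-equivariantly) the order complex of the product poset, and that the comparison lemma upgrades to a $G$-equivariant statement. Neither is hard — the first is the naturality of the staircase triangulation, the second is immediate from the explicit construction of the homotopy in that lemma (it is already equivariant whenever $\alpha$ and $\beta$ are) — so I would either cite \cite{Koz08} for the former and add a one-line remark for the latter, or include both short verifications.
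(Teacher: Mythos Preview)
Your proposal is correct and follows essentially the same approach as the paper: the maps $\phi$ and $\psi$ are exactly the paper's $\alpha$ and $\beta$, and you verify the same identities $\phi\circ\psi=\mathrm{id}$ and $\mathrm{id}\le\psi\circ\phi$ before invoking the comparison lemma. The only difference is that you are more explicit about identifying the topological product $\prod_i \lvert\mathrm{ord}(P_i)\rvert$ with $\lvert\mathrm{ord}(\prod_i P_i)\rvert$ via the staircase triangulation, a point the paper leaves implicit when it says ``extending linearly''.
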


\begin{proof}
  For each functor $\minion F$, we always have a morphism
  \[
   f\in \hom(\minion F(\rel B_1\times \dots \times \rel B_n),
             \minion F(\rel B_1) \times \dots \times \minion F(\rel B_n))
  \]
  defined by the universal property of the product applied to the maps $\minion F(p_i)$. In the case $\minion F = \homrel {{-}}$, this morphism is represented by the continuous map
  \[
    \alpha :
    \Hom(\rel A, \rel B_1 \times \dots \times \rel B_n)
    \to \Hom(\rel A, \rel B_1) \times \dots \times \Hom(\rel A, \rel B_n)
  \]
  defined by $\alpha(m) = (p_1 \circ m, \dots, p_n \circ m)$ for each $m\in \mhom(\rel A, \rel B_1 \times \dots \times \rel B_n)$ and extending linearly. It is straightforward to check that $\alpha$ is an equivariant continuous map. We prove that it has a homotopy inverse $\beta$ defined on vertices by
  \[
    \beta(m_1, \dots, m_n)\colon x \mapsto m_1(x) \times \cdots \times m_n(x),
  \]
  and extending linearly. Clearly, $\beta$ is $G$-equivariant and $\alpha\circ \beta = 1$.

  We show that $\beta\circ \alpha$ is equivariantly homotopic to the identity. It is easy to observe that $\beta(\alpha(m)) \geq m$ for each $m\in \mhom(\rel A, \rel B_1 \times \dots \times \rel B_n)$. Since both $\alpha$ and $\beta$ are monotone and equivariant, so is the composition, and the inequality gives an equivariant homotopy between $\beta \circ \alpha$ and the identity by the above lemma.
\end{proof}

The proof above immediately gives that $\homrel{{-}}$ preserves products. This is since we showed that $[\alpha]$, which commutes with projections, is an isomorphism.

\begin{corollary} \label{cor:hom}
  For each $G$-structure $\rel A$, the functor $\Hom(\rel A, {-})\colon \rels \to \htop_G$ preserves products.
\end{corollary}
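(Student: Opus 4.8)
The plan is to derive the corollary directly from the preceding lemma, which already supplies the only nontrivial ingredient. That lemma gives, for each $G$-structure $\rel A$ and structures $\rel B_1, \dots, \rel B_n$, a $G$-equivariant homotopy equivalence
\[
  \alpha\colon \Hom(\rel A, \rel B_1 \times \dots \times \rel B_n)
  \longrightarrow \Hom(\rel A, \rel B_1) \times \dots \times \Hom(\rel A, \rel B_n),
  \qquad \alpha(m) = (p_1 \circ m, \dots, p_n \circ m),
\]
together with an equivariant homotopy inverse. First I would recall that the target of $\alpha$, equipped with the homotopy classes of the coordinate projections, is a product of the $\Hom(\rel A, \rel B_i)$ in $\htop_G$ --- this is precisely the description of products in $\htop_G$ given earlier (the product $G$-space with the diagonal action). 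Next I would observe that, by the very definition of $\alpha$ and functoriality of $\Hom(\rel A, {-})$, the $i$-th component of $\alpha$ is the continuous map induced by the projection $p_i\colon \rel B_1 \times \dots \times \rel B_n \to \rel B_i$; hence in $\htop_G$ one has $\pi_i \circ [\alpha] = \Hom(\rel A, p_i)$, where $\pi_i$ is the $i$-th projection morphism of the product on the right.

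With these two observations in hand, the statement reduces to a purely formal fact: if $(P, \pi_i)$ is a product of objects $C_1, \dots, C_n$ in a category and $g\colon Q \to P$ is an isomorphism, then $(Q, \pi_i \circ g)$ is again a product of $C_1, \dots, C_n$. (Given morphisms $a_i\colon X \to C_i$, the universal property of $P$ yields a unique $a\colon X \to P$ with $\pi_i \circ a = a_i$, and then $g^{-1}\circ a$ is the unique morphism $X \to Q$ satisfying $(\pi_i \circ g)\circ(g^{-1}\circ a) = a_i$.) Applying this with $Q = \Hom(\rel A, \rel B_1 \times \dots \times \rel B_n)$, with $P$ the product on the right, with $g = [\alpha]$ --- an isomorphism in $\htop_G$ because $\alpha$ is an equivariant homotopy equivalence --- and with $\pi_i \circ g = \Hom(\rel A, p_i)$, shows that $\bigl(\Hom(\rel A, \rel B_1 \times \dots \times \rel B_n),\, \Hom(\rel A, p_i)\bigr)$ is a product of the $\Hom(\rel A, \rel B_i)$, which is exactly what it means for $\Hom(\rel A, {-})$ to preserve this finite product. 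The case $n = 0$ is handled separately but trivially: the terminal structure (one element, all relations full) is sent to a one-point space, which is terminal in $\htop_G$.

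I do not expect a genuine obstacle here: all the topological work --- producing the homotopy inverse $\beta$ and the equivariant homotopy $\beta\alpha \sim_G \id$ --- was already carried out in the preceding lemma, and the remainder is formal. The one point that needs (minor) care is the bookkeeping in the second observation above: one must check that the chosen isomorphism $[\alpha]$ is compatible with the projection morphisms of both products, since otherwise the ``product is unique up to a projection-preserving isomorphism'' argument does not apply. This compatibility, however, is immediate from the explicit formula $\alpha(m) = (p_1\circ m, \dots, p_n\circ m)$.
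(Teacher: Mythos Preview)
Your proposal is correct and follows essentially the same approach as the paper. The paper's entire proof is the one sentence ``This is since we showed that $[\alpha]$, which commutes with projections, is an isomorphism''; you have simply unpacked that sentence, spelling out the projection-compatibility and the formal transfer-along-an-isomorphism argument that the paper leaves implicit.
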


\subsection{The minion homomorphism \texorpdfstring{$\pol(\LO_3, \LO_4) \to \affine_3$}{}}

We can now conclude the proof of the lemma claiming that there is a minion homomorphism $\pol(\LO_3, \LO_4) \to \affine_3$.

\begin{proof}[Proof of Lemma~\ref{lem:minion-homomorphism}]
We have the following minion homomorphisms
\[
  \pol(\LO_3, \LO_4) \to \hpol{\homLO3, \homLO4} \to \hpol{S^1, P^2}.
\]
Where the first arrow follows from a combination of Corollary~\ref{cor:hom}, and Lemma~\ref{lem:wrochna}, and the second arrow follows from Lemmata~\ref{lem:relaxation}, \ref{lem:homlo3}, and \ref{lem:homlo4}. The last minion is isomorphic to $\affine_3$ by Lemma~\ref{thm:affine_minion}, which yields the required.
\end{proof}

\bibliographystyle{plainurl}

\end{document}